\newcommand{\sa}{}
\newcommand{\NN}{\ensuremath{\mathbb N}\xspace}
\newcommand{\ZZ}{\ensuremath{\mathbb Z}\xspace}
\newcommand{\QQ}{\ensuremath{\mathbb Q}\xspace}
\newcommand{\ALG}{\ensuremath{\mathbb{R}_{alg}}\xspace}
\newcommand{\RR}{\ensuremath{\mathbb R}\xspace}
\newcommand{\rem}[1]{\ensuremath{\mathop{\mathsf{rem}}\left( #1 \right)}\xspace}
\newcommand{\quo}[1]{\ensuremath{\mathop{\mathsf{quo}}\left( #1 \right)}\xspace}
\newcommand{\prem}[1]{\ensuremath{\mathop{\mathsf{prem}}\left( #1  \right)}\xspace}
\newcommand{\pgcd}{\ensuremath{\mathsf{gcd}}}
\newcommand{\var}{\ensuremath{\mathtt{VAR}}\xspace}
\newcommand{\dg}{\ensuremath{\mathsf{dg}}\xspace}
\newcommand{\sign}{\ensuremath{\mathop{\mathrm{sign}}}\xspace}
\newcommand{\rat}[1]{\ensuremath{ \mathsf{#1} }\xspace}
\newcommand{\bitsize}[1]{\ensuremath{\mathcal{L}\left( #1 \right)}\xspace}
\newcommand{\Multiply}[1]{\ensuremath{\mathsf{M}\left( #1 \right)}\xspace}
\newcommand{\OO}[1][]{\ensuremath{%
    \ifthenelse{\equal{#1}{}}{\mathcal{O}} {\mathcal{O} \left( #1 \right)\xspace}%
  }}
\newcommand{\sO}[1][]{\ensuremath{%
    \ifthenelse{\equal{#1}{}}{\widetilde{\mathcal{O}}} {\widetilde{\mathcal{O}} \left( #1 \right)\xspace}%
  }}
\newcommand{\OB}[1][]{\ensuremath{%
    \ifthenelse{\equal{#1}{}}{\mathcal{O}_B} {\mathcal{O}_B \left( #1 \right)\xspace}%
  }}
\newcommand{\sOB}[1][]{\ensuremath{%
    \ifthenelse{\equal{#1}{}}{\widetilde{\mathcal{O}}_B} {\widetilde{\mathcal{O}}_B \left( #1 \right)\xspace}%
  }}
\newcommand{\synaps}{\sa{\textsc{synaps}}\xspace}
\newcommand{\newmac}{\sa{\textsc{newmac}}\xspace}
\newcommand{\gbrs}{\sa{\textsc{fgb/rs}}\xspace}
\newcommand{\maple}{\sa{\textsc{maple}}\xspace}
\newcommand{\cpp}{\texttt{\sa{C++}}\xspace}
\newcommand{\cc}{\texttt{\sa{C}}\xspace}
\newcommand{\StHa}{\ensuremath{\mathbf{StHa}}\xspace}
\newcommand{\SR}{\ensuremath{\mathbf{SR}}\xspace}
\newcommand{\sr}{\ensuremath{\mathbf{sr}}\xspace}
\newcommand{\SRQ}{\ensuremath{\mathbf{SRQ}}\xspace}
\newcommand{\res}{\ensuremath{\mathtt{res}}\xspace}
\newcommand{\func}[1]{\ensuremath{\sa{\textsc{#1}}}\xspace}
\newenvironment{myalgorithm}[1][]%
{\begin{algorithm2e}[#1]}%
{\end{algorithm2e}}
\DeclareSymbolFont{EulerScript}{U}{eus}{m}{n}
\DeclareSymbolFontAlphabet\mathscr{EulerScript}
\newtheorem{lemma}{Lemma}
\newtheorem{theorem}[lemma]{Theorem}
\newtheorem{proposition}[lemma]{Proposition}
\newtheorem{corollary}[lemma]{Corollary}
\begin{document}

  \title{On the asymptotic and practical complexity of solving bivariate systems
  over the reals}

  \author{Dimitrios I.~Diochnos $^1$ \and Ioannis Z.~Emiris $^2$ \and Elias P.~Tsigaridas $^3$}

  \date{$^1$ {University of Illinois at Chicago, USA}\\ \texttt{diochnos (AT) math.uic.edu}\\[0.5em] 
$^2$ {National Kapodistrian University of Athens, HELLAS}\\ \texttt{emiris (AT) di.uoa.gr}\\[0.5em]
$^3$ INRIA Sophia-Antipolis, FRANCE\\ \texttt{elias.tsigaridas (AT) sophia.inria.fr}\\[0.5em]
\vspace{0.7cm}
June 12, 2008}

\maketitle

\thispagestyle{empty}

  \begin{abstract}
    This paper is concerned with exact real solving of well-constrained, bivariate
    polynomial systems.  The main problem is to isolate all common real roots in rational
    rectangles, and to determine their intersection multiplicities.  We present three
    algorithms and analyze their asymptotic bit complexity, obtaining a bound of
    $\sOB(N^{14})$ for the purely projection-based method, and $\sOB(N^{12})$ for two
    sub\-result\-ant-based methods: this 
    notation
    ignores polylogarithmic factors, 
    where $N$ bounds the degree and the bitsize of the polynomials.  
    The previous record bound was $\sOB(N^{14})$.

    Our main tool is signed subresultant sequences.
    We exploit recent advances on the complexity of
    univariate root isolation, and extend them to %
    sign evaluation of bivariate polynomials over two algebraic numbers,
    and real root counting for polynomials over an extension field.  Our algorithms apply to
    the problem of simultaneous inequalities; they also compute the topology of real plane
    algebraic curves in $\sOB( N^{12})$, whereas the previous bound was $\sOB( N^{14})$.

    All algorithms have been implemented in \maple, in conjunction with numeric filtering.
    We compare them against \gbrs, system solvers from \synaps, and \maple
    libraries \func{insulate} and \func{top}, which compute curve topology.  Our software is
    among the most robust, and its runtimes are comparable, or within a small constant
    factor, with respect to the C/C++ libraries.

    \bigskip

   \noindent\textbf{Key words:}
    real solving, polynomial systems, complexity, \maple software
  \end{abstract}

\newpage

\section{Introduction} \label{sec:introduction}

The problem of well-constrained polynomial system solving is fundamental.
However, most of the algorithms treat the general case or
consider solutions over an algebraically closed field.
We focus on real solving of bivariate polynomials in order to provide precise
complexity bounds and study different algorithms in practice.
We expect to obtain faster algorithms than in the general case.
This is important in several applications ranging from
nonlinear computational geometry to real quantifier elimination.
We suppose relatively prime polynomials for simplicity,
but this hypothesis is not restrictive.
A question of independent interest,
which we tackle,
is to compute the topology of a
real plane algebraic curve. %

Our algorithms isolate all common real roots inside non-overlapping rational rectangles,
output them as pairs of algebraic numbers,
and determine the intersection multiplicity per root.
Algebraic numbers are represented by an isolating
interval and a square-free polynomial.

In this paper, $\OB$ means bit complexity and $\sOB$ means that we
are ignoring polylogarithmic factors.
We derive a bound of $\sOB(N^{12}),$
whereas the previous record bound was $\sOB(N^{14})$ \cite{VegKah:curve2d:96}, see
also \cite{BPR06},
derived from the closely related problem of computing the topology of
algebraic curves, 
where $N$ bounds the degree and the bitsize of the input polynomials.
This approach depends on Thom's encoding.
We choose the isolating interval representation, since it is more intuitive, 
and is used in applications.
In \cite{VegKah:curve2d:96}, it is stated that
``isolating intervals provide worst [sic] bounds''.
It is widely believed that isolating intervals 
do not produce good theoretical results.
Our work suggests that isolating intervals should be re-evaluated.

Our main tool is signed subresultant sequences (closely related to
Sturm-Habicht sequences), extended
to several variables by binary segmentation.
We exploit the recent advances on univariate root isolation,
which reduced complexity by one to three orders
of magnitude to $\sOB(N^6)$
\cite{Yap:SturmBound:05,ESY:descartes,emt-lncs-2006}.
This brought complexity closer to $\sOB(N^4)$, which is achieved
by numerical methods \cite{Pan02jsc}.

In \cite{KoSakPat:system2:05},
$2\times 2$ systems are solved and the multiplicities computed
under the assumption that a generic shear has been obtained,
based on \cite{sakkalis90:alg-curves}.
In \cite{WolPhd}, $2\times 2$ systems of bounded degree were studied, 
obtained as projections of the arrangement of 3D quadrics.
This algorithm is a precursor of ours, see also \cite{et-casc-2005},
except that matching and multiplicity computation was simpler.
In \cite{MouPav:TR-sbd:05}, a subdivision algorithm is proposed,
exploiting the properties of the Bernstein basis, with unknown bit complexity,
and arithmetic complexity based on the characteristics of the graphs of the polynomials.
For other approaches based on multivariate Sturm sequences the reader
may refer to e.g. \cite{Miln92,PeRoSz93}. 

For
determining the topology of a real algebraic plane curve, %
the best bound is $\sOB(N^{14})$ \cite{BPR06,VegKah:curve2d:96}.
In \cite{WolSei:topology:05} three projections are used;
this is implemented in \func{insulate}, with which we make several comparisons.
Work in \cite{EKW:curves} offers an efficient
implementation of resultant-based methods,
whereas
Gr{\"o}bner bases %
are employed in 
\cite{fc-jcf-mp-fr-isrsps-06}.
To the best of our knowledge, 
the only result %
for
topology determination using isolating intervals 
is \cite{ArnMcC:topology:88}, where a $\sOB( N^{30})$ bound is proved.

We establish a bound of $\sOB( N^{12})$ using the isolating interval representation.
It seems that the complexity in \cite{VegKah:curve2d:96}
could be improved to $\sOB( N^{10})$ using fast multiplication algorithms, 
fast algorithms for computations of signed subresultant sequences
and improved bounds for the bitsize of the integers appearing in computations. 
To put our bounds into perspective, 
the input 
size
is in $\OB( N^3)$,
and the total bitsize of all output isolation points for univariate solving
is 
in 
$\sOB( N^2 )$, and this is tight.
Notice that lower bounds in real algebraic geometry refer almost exclusively to arithmetic complexity \cite{1997-buergisser}.
 
The main contributions of this paper are the following:
Using the aggregate separation bound, we improve the complexity for 
computing the sign of a polynomial
evaluated over all real roots of another (Lemma \ref{lem:sign-at-1-all}).
We establish a complexity bound for bivariate sign evaluation (Theorem \ref{th:biv_sign_at}),
which helps us derive bounds for
root counting in an extension field (Lemma \ref{lem:count-alpha}) 
and for
the problem of simultaneous inequalities (Corollary \ref{cor:biv-inequalities}).
We study the complexity of bivariate polynomial real solving,
using three projection-based algorithms: a straight\-forward grid method
(Theorem \ref{th:grid_solve}), a specialized RUR 
(Rational Univariate Representation)
approach (Theorem \ref{th:mrur-solve}),
and an improvement of the latter using fast GCD (Theorem \ref{th:grur}).
Our best bound is $\sOB( N^{12})$;
within this bound, we also compute the root multiplicities.
Computing the topology of a real plane algebraic curve is in $\sOB(
N^{12})$ (Theorem \ref{th:topology}). 

We implemented in \maple a package for computations with real algebraic
numbers and for implementing our algorithms.
It is easy to use and integrates seminumerical filtering
to speed up computation when the roots are well-separated.
It guarantees exactness and completeness of results;
moreover, the runtimes %
are quite
encouraging.
We illustrate it by experiments against well-established \cc/\cpp libraries
\gbrs and \func{synaps}.
We also examine \maple libraries \func{insulate}
and \func{top}, which compute curve topology.
Our software is %
among the most
robust;
its runtime is within a small constant factor
with respect to the fastest \cc/\cpp library.

The next section presents basic results concerning real solving
and operations on univariate polynomials.
We extend the discussion to several variables,
and focus on bivariate polynomials.
The algorithms for bivariate solving and their analyses appear in
Section \ref{sec:biv-solving}, followed by applications to real-root
counting, simultaneous inequalities and the topology of curves.
Our implementation and experiments appear in Section \ref{sec:implementation}.

A preliminary version of our results appeared in~\cite{det-issac-2007}.

\section{Univariate polynomials} \label{sec:preliminaries}
For $f \in$ $\ZZ[y_1, \dots$, $y_k, x]$, $\dg( f)$ denotes its total degree,
while $\dg_{x}(f)$ %
denotes its degree w.r.t.~$x$.
\bitsize{f} bounds the bitsize of the
coefficients of $f$ (including a bit for the sign).
We assume $\lg{ ( \dg( f))} = \OO( \bitsize{ f})$.
For $\rat{a} \in \QQ$, $\bitsize{ \rat{a}}$ is
the maximum bitsize of numerator and denominator.
Let \Multiply{\tau} denote the bit complexity of multiplying two integers of size
$\tau$, and \Multiply{d, \tau} the complexity of multiplying
two univariate  polynomials of degrees $\le d$
and coefficient bitsize $\le \tau$.
Using \textsc{FFT}, $\Multiply{\tau} = \sOB( \tau )$
and
$\Multiply{d, \tau} = \sOB( d \tau)$.

Let $f, g \in \ZZ[x]$,
$\dg( f) = p \geq q = \dg( g)$ and $\bitsize{f}, \bitsize{g} \le \tau$.
We use \rem{f, g} and \quo{f, g} for the Euclidean remainder and quotient, respectively.
The {\em signed polynomial remainder sequence} of $f, g$ is
$R_0 = f$, $R_1 = g$, $R_2 = -\rem{f, g}$, $\dots$, $R_k = -\rem{R_{k-2}, R_{k-1}}$, 
where $\rem{R_{k-1}, R_{k}} =0$.
The {\em quotient sequence} contains 
$Q_i = \quo{R_i, R_{i+1}}$, $i=0\ldots k-1$, and the {\em quotient boot} is $(Q_0, \dots, Q_{k-1}, R_k)$.

We consider signed subresultant sequences \cite{BPR06}, 
which contain polynomials similar to the polynomials 
in the signed polynomial remainder sequence; see  
\cite{GathenLucking:Subresultants:03} for a unified approach to subresultants.
They achieve better bounds on the coefficient bitsize 
and have good specialization properties.
In our implementation we use Sturm-Habicht (or Sylvester-Habicht) sequences, 
see e.g.~\cite{VegLomRecRoy:StHa:89,BPR06,LickteigRoy:FastCauchy:01}.  
By $\SR(f, g)$ we denote the signed subresultant sequence,
by $\sr(f,g)$ the sequence of the principal subresultant coefficients,
by $\SRQ( f, g)$ the corresponding quotient boot.
By $\SR_j( f, g)$, or simply $\SR_j$ if the corresponding polynomials can be easily
deduced from the context we denote an element of the sequence; 
similarly for $\sr_j$ and $\SRQ_j$. 
Finally, by $\SR( f, g;\, \rat{a})$ we denote the evaluated sequence over $\rat{a}\in\QQ$.
If the polynomials are multivariate, then these %
sequences are considered w.r.t.~$x$, except if explicitly stated otherwise.

\begin{proposition} \label{pr:SR-computation}
  \cite{LickteigRoy:FastCauchy:01,Reischert:subresultant:97}
  Assuming $p \geq q$, $\SR(f, g)$ is computed in $\sOB(p^2 q \tau)$ and
  $\bitsize{\SR_j(f, g)} = \OO( p \tau)$.
  For any $f,g$, their quotient boot, any polynomial in $\SR( f, g)$,
  their resultant, and their $\gcd$ are computed in $\sOB(p q \tau)$.
\end{proposition}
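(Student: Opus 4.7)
The plan is to treat the three claims separately, relying on the determinantal structure of signed subresultants for the bitsize bound, a careful bookkeeping of Euclidean-like iterations for the full-sequence cost, and a half-gcd style divide-and-conquer for the faster ``single object'' bounds.

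For the bitsize bound $\bitsize{\SR_j(f,g)} = \mathcal{O}(p\tau)$, I would appeal to the fact that every coefficient of $\SR_j(f,g)$ is, up to sign, the determinant of a submatrix of the Sylvester matrix of $f$ and $g$ of order at most $p+q$, whose entries are coefficients of $f$ or $g$ of absolute value at most $2^\tau$. Hadamard's inequality then yields a magnitude bound of $(p+q)^{(p+q)/2} 2^{(p+q)\tau}$, so the bitsize is $\mathcal{O}((p+q)\log(p+q)+(p+q)\tau)$, which collapses to $\mathcal{O}(p\tau)$ under the standing assumption $\log\dg(f)=\mathcal{O}(\bitsize{f})$ together with $q\le p$. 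This is the cleanest step.

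For the full-sequence cost $\sOB(p^2 q\tau)$, I would unfold the Sturm–Habicht (or subresultant) recurrence as a pseudo-Euclidean process. Each of the at most $q$ iterations produces a polynomial of degree at most $p$ whose coefficients, by the previous paragraph, have bitsize $\mathcal{O}(p\tau)$; consequently each iteration amounts to a polynomial division of a degree-$\mathcal{O}(p)$ polynomial by another with coefficient bitsize $\mathcal{O}(p\tau)$, which costs $\Multiply{p, p\tau}=\sOB(p\cdot p\tau)=\sOB(p^2\tau)$ bit operations via FFT-based multiplication. Summing over $\mathcal{O}(q)$ steps gives the advertised $\sOB(p^2 q \tau)$. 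The only subtle point is confirming that the \emph{outputs} $\SR_j$ themselves, not just intermediate data, can be extracted within this budget; this follows from the explicit determinantal identities relating the subresultants to the remainders in the pseudo-division chain.

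The harder part, and the main obstacle, is the $\sOB(pq\tau)$ bound for computing the quotient boot, a single designated $\SR_j$, the resultant, or the gcd. Here I would follow the Lickteig–Roy strategy: run a half-gcd style divide-and-conquer on $(f,g)$ in which, at each recursion level, the relevant transition matrices (accumulating the partial quotients) are composed by fast polynomial multiplication. Since the total size of the quotient boot is $\sOB(pq\tau)$ (a consequence of the bitsize bound above aggregated over the chain of degrees), the merging step at level $i$ costs $\sOB$ of the sum of sizes at that level, and the master-style recursion yields the overall $\sOB(pq\tau)$ bound. One then recovers the desired single subresultant (equivalently its sign, the resultant, or, after specializing at a regular point, the gcd) by a constant number of products of these transition matrices applied to $(f,g)^{\top}$. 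Establishing that the ``signed'' normalization survives the divide-and-conquer—so that the object produced really is $\SR_j$ and not merely proportional to it—is the delicate bookkeeping where I would spend the most effort, and where one genuinely relies on the analyses in the cited references.
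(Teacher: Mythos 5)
The paper does not prove this proposition: it is quoted verbatim from the cited works of Lickteig--Roy and Reischert, so there is no internal proof to compare against. Your sketch follows exactly the standard route of those references --- Hadamard bounds on minors of the Sylvester matrix for $\bitsize{\SR_j(f,g)}=\OO(p\tau)$, the structure theorem tying the subresultants to the pseudo-division chain for the $\sOB(p^2q\tau)$ full-sequence cost, and a half-gcd style divide-and-conquer with fast multiplication for the $\sOB(pq\tau)$ bounds on the quotient boot, a single $\SR_j$, the resultant and the gcd --- and, with the delicate normalization issues correctly flagged as residing in those references, it is an accurate reconstruction of their argument.
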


The following proposition is a slightly modified version of the one that appeared in 
\cite{LickteigRoy:FastCauchy:01,Reischert:subresultant:97}.

\begin{proposition} \label{pr:SR-fast-evaluation}
  Let $p\ge q$. We can compute $\SR(f, g;\rat{a})$,
  where $\rat{a} \in \QQ \cup \{ \pm \infty \}$ and $\bitsize{\mathsf{a}} =  \sigma$, 
  in $\sOB(p q \tau + q^2 \sigma + p^2 \sigma )$.
  If $f( \rat{ a})$ is known, then the bound becomes $\sOB(pq\tau + q^2\sigma).$
\end{proposition}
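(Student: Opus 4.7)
The plan exploits the quotient boot $\SRQ(f,g)$ from Proposition~\ref{pr:SR-computation}, avoiding the computation of the full signed subresultant sequence of polynomials before evaluation. I would first invoke that proposition to obtain $(Q_0,\ldots,Q_{k-1},R_k)$ in $\sOB(pq\tau)$ (the cases $\rat{a}=\pm\infty$ reduce to reading off leading coefficients of the subresultants and are trivial, so I assume $\rat{a}\in\QQ$ hereafter). Writing $\rat{a}=u/v$ with $\bitsize{u},\bitsize{v}=\OO(\sigma)$, Horner's scheme evaluates $f$ at $\rat{a}$ in $\sOB(p\tau+p^2\sigma)$ and $g$ at $\rat{a}$ in $\sOB(q\tau+q^2\sigma)$, since the $i$-th Horner iterate carries bitsize $\OO(\tau+i\sigma)$.

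I would then evaluate each quotient $Q_i$ of degree $d_i$ at $\rat{a}$ by Horner in $\sOB(d_i\bitsize{Q_i}+d_i^2\sigma)$. Since $\sum_i d_i\bitsize{Q_i}$ is bounded by the total bitsize of the quotient boot, which is $\sOB(pq\tau)$ by Proposition~\ref{pr:SR-computation}, and $\sum_i d_i^2\le(\sum_i d_i)\max_i d_i\le p^2$, the aggregate evaluation step costs $\sOB(pq\tau+p^2\sigma)$. Next I would apply the three-term recurrence $R_{i+1}(\rat{a})=Q_{i-1}(\rat{a})R_i(\rat{a})-R_{i-1}(\rat{a})$ starting from $R_0(\rat{a})=f(\rat{a})$ and $R_1(\rat{a})=g(\rat{a})$. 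For $i\ge 1$ we have $\deg R_i\le q$, so every evaluated value has bitsize $\OO(p\tau+q\sigma)$, and the $\OO(q)$ iterations together cost $\sOB(pq\tau+q^2\sigma)$. Finally, I would recover each $\SR_j(f,g;\rat{a})$ from the $R_i(\rat{a})$ via the standard scaling formulae, whose factors are products of leading coefficients already available from the boot. Summing the contributions yields $\sOB(pq\tau+p^2\sigma+q^2\sigma)$.

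For the sharper bound when $f(\rat{a})$ is known, the Horner step on $f$ drops out, removing one $p^2\sigma$ term. The other $p^2\sigma$ arose from evaluating the large quotient $Q_0$, whose degree can reach $p-q$. I would sidestep it by extracting $R_2$ as an explicit polynomial of degree $\le q-1$ together with the boot and evaluating $R_2$ at $\rat{a}$ directly in $\sOB(pq\tau+q^2\sigma)$, then running the recurrence from $(R_1(\rat{a}),R_2(\rat{a}))$ using only $Q_1,\ldots,Q_{k-1}$. These satisfy $\sum_{i\ge 1}d_i\le q$ and hence $\sum_{i\ge 1}d_i^2\le q^2$, so both the remaining quotient evaluations and the recurrence fit within $\sOB(pq\tau+q^2\sigma)$.

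The main obstacle is the bitsize bookkeeping through the recurrence and the signed-subresultant normalization: one must verify that, once $i\ge 1$, the values $R_i(\rat{a})$ never accumulate stray $p\sigma$ terms. This relies on the degree bound $\deg R_i\le q$ rather than $p$, and is precisely what converts the naive $pq\sigma$ aggregate into $q^2\sigma$ in the sharpened variant.
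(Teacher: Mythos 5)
Your overall strategy coincides with the paper's: evaluate $f$ at $\rat{a}$ separately by Horner (this is the sole source of the $p^2\sigma$ term, and drops out when $f(\rat{a})$ is known), and evaluate the remaining part of the sequence --- whose members all have degree $\OO(q)$ and coefficient bitsize $\OO(p\tau)$ --- through the quotient boot. The paper simply black-boxes that second part: it assumes $\SR_{q-1}$ has been computed (same cost as $\SRQ(f,g)$ by Proposition~\ref{pr:SR-computation}) and cites Lickteig--Roy for evaluating $\SR(g,\SR_{q-1};\rat{a})$ in $\sOB(\Multiply{D,L})$ with $D=\OO(q)$ and $L=\OO(p\tau+q\sigma)$, whereas you unfold this into explicit Horner evaluations of the quotients followed by the scaled three-term recurrence on values. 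That unfolding is fine and gives the same bound, since for $i\ge 1$ one has $\sum_i d_i\le q$ and $\bitsize{Q_i}=\OO(p\tau)$.

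The one step I would not accept as written is your treatment of $Q_0$ in the unsharpened case. The inequality ``$\sum_i d_i\bitsize{Q_i}$ is bounded by the total bitsize of the quotient boot'' does not hold in general ($d_i\bitsize{Q_i}$ compares the degree against the \emph{maximum} coefficient bitsize, not the sum), and for $Q_0$ specifically the degree is $p-q$ while the coefficients can reach bitsize $\Theta((p-q)\tau)$, so its Horner evaluation can cost $\sOB((p-q)^2\tau)$ --- a term of order $p^2\tau$ that is not absorbed by $\sOB(pq\tau+q^2\sigma+p^2\sigma)$ when $q\ll p$. The repair is exactly the device you already deploy for the sharpened bound (and which the paper uses from the outset): never evaluate $Q_0$ at all, but extract $\SR_{q-1}$ as an explicit polynomial of degree $\le q-1$ and bitsize $\OO(p\tau)$ within the $\sOB(pq\tau)$ budget of Proposition~\ref{pr:SR-computation}, evaluate it directly in $\sOB(pq\tau+q^2\sigma)$, and start the recurrence from $(g(\rat{a}),\SR_{q-1}(\rat{a}))$. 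Apply that in both cases and your argument is complete.
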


\begin{proof} %
  Let $\SR_{q+1} = f$ and $\SR_{q} = g$. For the moment we forget $\SR_{q+1}$.
  We may assume that $\SR_{q-1}$ is computed, since the cost of computing 
  one element of $\SR(f, g)$ is the same as that of computing $\SRQ(f, g)$
  (Pr.~\ref{pr:SR-computation})
  and we consider the cost of evaluating the sequence $\SR(g, \SR_{q-1})$
  on \rat{a}.

  We follow \citet{LickteigRoy:FastCauchy:01}.
  For two polynomials $A, B$ of degree bounded by $D$ and bitsize bounded by $L$,
  we can compute $\SR( A, B ;\rat{a})$, where $\bitsize{ \rat{ a}} \leq L$, 
  in $\sOB( \Multiply{ D, L})$.
  In our case $D = \OO( q)$ and $L = \OO(  p \tau +  q \sigma)$,
  thus the total costs is
  $\sOB( p q \tau + q^2 \sigma)$.

  It remains to compute the evaluation $\SR_{q+1}( \rat{ a}) = f( \rat{ a})$. 
  This can be done using Horners' scheme in $\sOB( p \max\{ \tau, p \sigma\})$.
  Thus, the whole procedure has complexity 
  \begin{displaymath}
    \sOB( p q \tau + q^2 \sigma + p \max\{ \tau, p \sigma\}) ,
  \end{displaymath} 
  where the term $p\tau$ is dominated by $pq\tau$. 
\end{proof}

When $q > p$, $\SR( f, g)$ is $f, g, -f, -(g \bmod (-f)) \dots$,
thus $\SR( f, g;\, \rat{a})$ starts with a sign variation irrespective of
$\sign( g(\rat{a}))$.
If only the sign variations are needed, there is no need to evaluate $g$, so
Proposition \ref{pr:SR-fast-evaluation} yields $\sOB( p q \tau + p^2 \sigma )$.
Let $L$ denote a list of real numbers. 
$\var(L)$ denotes the number of (possibly modified, 
see e.g.~\cite{BPR06,VegLomRecRoy:StHa:89,VegKah:curve2d:96}) sign variations.

\begin{corollary} \label{cor:SR-fast}
  For any $f,g$, $\var( \SR( f, g; \rat{a}))$ is computed
  in $\sOB( p q \tau + \min\{ p, q\}^2  \sigma)$,
  provided $\sign( f( \rat{a}))$ is known.
\end{corollary}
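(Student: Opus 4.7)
The plan is to split into two cases according to which of $p=\dg(f)$ and $q=\dg(g)$ is larger, and in each case reduce to Proposition \ref{pr:SR-fast-evaluation}. The key observation throughout is that to compute $\var$ of the evaluated sequence we only need \emph{signs}, not exact values, so any step of Proposition \ref{pr:SR-fast-evaluation} whose sole purpose is to evaluate $f(\rat{a})$ can be replaced by its (given) sign.

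First, consider the case $p\ge q$, so that $\min\{p,q\}=q$. The hypothesis gives $\sign(f(\rat{a}))$, which is all that is required to initiate the sign-variation count at the top of the sequence. Hence the second estimate of Proposition \ref{pr:SR-fast-evaluation} applies directly and we obtain the signs of $\SR(f,g;\rat{a})$, and thus $\var(\SR(f,g;\rat{a}))$, in $\sOB(pq\tau+q^2\sigma)=\sOB(pq\tau+\min\{p,q\}^2\sigma)$.

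Next, consider the case $q>p$, so $\min\{p,q\}=p$. Recall from the discussion preceding the corollary that $\SR(f,g)=f,\,g,\,-f,\,-(g\bmod(-f)),\dots$. Since $\sign(-f(\rat{a}))=-\sign(f(\rat{a}))$ and this sign is known by hypothesis, the three entries $f(\rat{a}),\,g(\rat{a}),\,-f(\rat{a})$ contribute exactly one (modified) sign variation regardless of the value of $g(\rat{a})$. We may therefore skip the evaluation of $g$ entirely, record this forced initial variation, and evaluate only the truncated chain starting at $-f,\,-(g\bmod(-f)),\dots$, which is a signed subresultant chain whose leading polynomial has degree $p$. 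Its first element after $-f$ can be produced in $\sOB(pq\tau)$ by Proposition \ref{pr:SR-computation}, and applying Proposition \ref{pr:SR-fast-evaluation} to a chain of leading degree $p$ (with its leading sign now known) costs $\sOB(pq\tau+p^2\sigma)$, matching the claim.

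The main subtlety is the bookkeeping in Case~2: one must check that skipping the evaluation of $g(\rat{a})$ does not affect the overall sign-variation count, and this is ensured precisely by the guaranteed sign flip between the first and third entries of the prefix. Together, the two cases yield the announced bound $\sOB(pq\tau+\min\{p,q\}^2\sigma)$.
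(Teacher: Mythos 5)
Your proof is correct and follows the same route as the paper: for $p\ge q$ you invoke the refined bound of Proposition~\ref{pr:SR-fast-evaluation} (where knowing $\sign(f(\rat{a}))$ suffices since only sign variations are needed), and for $q>p$ you use the forced sign variation in the prefix $f,g,-f$ to skip evaluating $g$, exactly as in the discussion preceding the corollary. No gaps.
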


We choose to represent a real algebraic number $\alpha\in\ALG$
by the {\em isolating interval} representation.
It includes a square-free polynomial which vanishes on $\alpha$ and a (rational) interval
containing $\alpha$ and no other root. By $f_{red}$ we denote the square-free part of $f$.

\begin{proposition} \label{pr:solve-1} \cite{Yap:SturmBound:05,ESY:descartes,emt-lncs-2006}
  Let $f\in\ZZ[x]$ have degree $p$ and bitsize $\tau_f$.
  We compute the isolating interval representation of its real roots
  and their multiplicities in $\sOB( p^6 + p^4 \tau_f^2)$.
  The endpoints of the isolating intervals have bitsize $\OO( p^2 + p \, \tau_f )$
  and $\bitsize{ f_{red}} = \OO( p + \tau_f )$.  
\end{proposition}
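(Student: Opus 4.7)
The plan has three logical stages: reduce to the square-free case, isolate the roots of $f_{red}$, and recover the multiplicities.

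First I would compute the square-free part $f_{red} = f / \gcd(f, f')$. Since $f'$ has degree at most $p-1$ and bitsize $\OO(\tau_f + \lg p)$, Proposition~\ref{pr:SR-computation} applied to $(f, f')$ yields $\gcd(f, f')$, and hence $f_{red}$ via exact division, in $\sOB(p^2 \tau_f)$, which is already inside the target bound. The claimed estimate $\bitsize{f_{red}} = \OO(p + \tau_f)$ follows from the Mignotte bound on coefficients of integer divisors of $f$: any factor of $f$ of degree $\le p$ has coefficients of bitsize $\OO(p + \tau_f)$, and $f_{red}$ divides $f$ (up to a rational content we can clear).

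The second and main stage is isolating the real roots of $g := f_{red}$. Setting $d = \deg g \le p$ and $\tau = \bitsize{g} = \OO(p + \tau_f)$, I would invoke the recent Descartes / continued-fractions isolator of \cite{Yap:SturmBound:05,ESY:descartes,emt-lncs-2006}, whose bit complexity on a square-free integer polynomial is $\sOB(d^4 \tau^2)$. Substituting the bounds on $d$ and $\tau$ gives
\[
\sOB\!\left( p^4 (p+\tau_f)^2 \right) \;=\; \sOB(p^6 + p^4 \tau_f^2),
\]
which matches the stated complexity. The bitsize of the endpoints of the isolating intervals is governed by the Davenport--Mahler aggregate separation bound applied to $g$: the minimal root separation of a square-free integer polynomial of degree $d$ and bitsize $\tau$ is at least $2^{-\OO(d \lg d + d\tau)}$, so in our setting the endpoints produced by the subdivision/continued-fraction process have bitsize $\OO(d\tau) = \OO(p(p+\tau_f)) = \OO(p^2 + p\tau_f)$, as claimed. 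Refining to any desired precision keeps endpoints dyadic and of the same order.

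For the multiplicities, I would compute the square-free decomposition $f = \prod_{i\ge 1} f_i^{\,i}$ by Yun's algorithm, which amounts to $\OO(\lg p)$ gcd's and divisions of polynomials of total degree $\le p$ and bitsize $\OO(p+\tau_f)$; by Proposition~\ref{pr:SR-computation} this fits in $\sOB(p^2 \tau_f)$. Each $f_i$ is square-free with $\deg f_i \le p$ and $\bitsize{f_i} = \OO(p + \tau_f)$; isolating its real roots by the same subroutine costs $\sOB(d_i^4 (p+\tau_f)^2)$, and $\sum_i d_i \le p$ so the aggregate cost remains $\sOB(p^6 + p^4 \tau_f^2)$. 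Matching each root of $f_i$ with the corresponding root of $f_{red}$ (e.g.\ by intersecting isolating intervals and, if needed, refining) then labels every root with its multiplicity.

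The main technical obstacle, and where I would be most careful, is the cumulative effect of replacing $f$ by $f_{red}$ (or by the $f_i$'s): their bitsizes are inflated from $\tau_f$ to $\OO(p+\tau_f)$, so plugging this into the $\sOB(d^4 \tau^2)$ isolation bound is exactly what produces the extra $p^6$ term, and one must verify that the preparatory gcd/square-free steps and the endpoint-refinement steps are all dominated by it. Once that accounting is done, the three claims of the proposition follow simultaneously.
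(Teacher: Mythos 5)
This proposition is stated in the paper as a known result, imported verbatim from the cited references \cite{Yap:SturmBound:05,ESY:descartes,emt-lncs-2006}; the paper offers no proof of it, only the remark that it ``relies on fast computation of polynomial sequences and the Davenport--Mahler--Mignotte bound.'' Your sketch is therefore not competing with an in-paper argument but reconstructing the standard one, and it does so correctly: the reduction to the square-free part with $\bitsize{f_{red}} = \OO(p+\tau_f)$ by Mignotte, the substitution $d \le p$, $\tau = \OO(p+\tau_f)$ into the $\sOB(d^4\tau^2)$ square-free isolation bound to obtain $\sOB(p^6 + p^4\tau_f^2)$, and the endpoint bitsize $\OO(p^2+p\tau_f)$ from the aggregate separation bound are exactly the accounting behind the cited statement. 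Two small points to tighten: Yun's algorithm performs $\OO(k)$ gcd/division stages where $k$ is the largest multiplicity (which can be as large as $p$), not $\OO(\lg p)$; the $\sOB(p^2\tau_f)$ total still holds because the degrees in successive stages telescope, so you should justify the cost that way rather than by counting stages. Also, for the final matching of roots of the $f_i$ to roots of $f_{red}$, you should note explicitly that the required interval refinements are charged against the aggregate separation bound (as in Lemma~\ref{lem:very-important}), so their total cost is dominated --- you gesture at this but it is the one step whose cost is not immediate from the statements you invoke.
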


Notice that after real root isolation, the sign of the square-free part $f_{red}$ over the
interval's endpoints, say $[ \rat{ a}, \rat{ b}]$ is known; 
moreover, $f_{red}( \rat{ a}) f_{red}( \rat{b}) < 0$. 
The following proposition takes advantage of this fact and is a refined version of similar
proposition in e.g. \cite{BPR06,emt-lncs-2006}.

\begin{corollary} \label{cor:sign-at-1} 
  Given a real algebraic number $\alpha \cong (f, [\rat{a}, \rat{b}])$, 
  where $\bitsize{\rat{a}} = \bitsize{ \rat{b} } = \OO( p \tau_f)$,
  and $g \in \ZZ[x]$, such that $\dg( g) = q$ and $bitsize{ g} = \tau_g$,
  we compute $\sign( g (\alpha))$
  in bit complexity $\sOB( pq \max\{\tau_f, \tau_g\} + p \min\{p, q\}^2 \tau_f )$.
\end{corollary}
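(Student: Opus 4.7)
The plan is to reduce the computation of $\sign(g(\alpha))$ to two variation-count queries of $\SR(f, g)$ at the rational endpoints of the isolating interval, and then invoke Corollary \ref{cor:SR-fast} directly on the pair $(f, g)$, rather than on an auxiliary pair like $(f, f'g)$ whose second polynomial would have higher degree $p+q-1$ and blow up the bound.

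First I would invoke the classical Cauchy-index interpretation of the signed subresultant (equivalently Sturm--Habicht) sequence: for $\rat{a} < \rat{b}$ in $\QQ$,
\[
 \var(\SR(f, g; \rat{a})) - \var(\SR(f, g; \rat{b})) \;=\; \sum_{\beta \in (\rat{a}, \rat{b}),\ f(\beta)=0} \sign(f'(\beta))\,\sign(g(\beta)) .
\]
Since $f = f_{red}$ is square-free and $[\rat{a}, \rat{b}]$ isolates $\alpha$ as the unique real root of $f$ in the interval, the right-hand side collapses to $\sign(f'(\alpha))\,\sign(g(\alpha))$, and vanishes iff $g(\alpha) = 0$. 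The sign of $f'(\alpha)$ is read off the isolation data at no cost: $\alpha$ is a simple root with $f(\rat{a})\,f(\rat{b}) < 0$, so $f$ crosses zero monotonically in sign at $\alpha$, whence $\sign(f'(\alpha)) = \sign(f(\rat{b})) = -\sign(f(\rat{a}))$; both endpoint signs are known, and incidentally also supply the precondition of Corollary \ref{cor:SR-fast}. Thus $\sign(g(\alpha))$ is recovered from the two variation counts with one multiplication.

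The cost reduces to two applications of Corollary \ref{cor:SR-fast} to $(f, g)$. With the common bitsize bound $\tau = \max\{\tau_f, \tau_g\}$ and evaluation-point bitsize $\sigma = \bitsize{\rat{a}} = \bitsize{\rat{b}} = O(p\,\tau_f)$, each call costs $\sOB(pq\,\tau + \min\{p,q\}^2 \sigma)$, which substitutes to $\sOB(pq \max\{\tau_f, \tau_g\} + p \min\{p,q\}^2 \tau_f)$; summing the two queries preserves the same asymptotic bound, matching the claim.

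The main obstacle I expect is the careful treatment of degenerate configurations. If $\gcd(f, g) \neq 1$, the sequence $\SR(f, g)$ ends (up to signs) with $\gcd(f, g)$, and the evaluations may contain zero entries; similarly if $g(\rat{a}) = 0$ or $g(\rat{b}) = 0$. Using the modified sign-variation count of \cite{BPR06, VegLomRecRoy:StHa:89}, the identity above continues to hold, and in particular when $\alpha$ is a common root of $f$ and $g$ the two variation counts coincide, so the procedure correctly reports $g(\alpha) = 0$ with no further case analysis. These are standard facts but deserve an explicit citation in the full write-up.
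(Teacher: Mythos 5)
Your proposal is correct and follows essentially the same route as the paper: both reduce $\sign(g(\alpha))$ to the difference of sign-variation counts of $\SR(f,g)$ at the two rational endpoints (your Cauchy-index identity is the same fact the paper invokes via the ``$f$ dominates $g$'' formulation), read $\sign(f'(\alpha))$ off the isolation data, apply Corollary \ref{cor:SR-fast} twice with $\sigma=\OO(p\tau_f)$, and dispose of the $\gcd(f,g)\neq 1$ case by the behaviour of the tail of the sequence. Your explicit derivation $\sign(f'(\alpha))=\sign(f(\rat{b}))=-\sign(f(\rat{a}))$ is in fact cleaner than the paper's (which contains a typo at that point).
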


\begin{proof} %
  Assume that $\alpha$ is not a common root of $f$ and $g$ in $[\rat{a}, \rat{b}]$,
  then it is known that 
  \begin{displaymath}
    \sign( g( \alpha)) =  [ \var( \SR(f, g; \rat{a})) - \var( \SR(f, g; \rat{b})) ] \sign( f'( \alpha)).
  \end{displaymath}
  Actually the previous relation holds in a more general context, 
  when $f$ {\em dominates} $g$, see \cite{Yap2000} for details.
  Notice that $\sign( f'( \alpha)) = \sign( f( \rat{b})) - \sign( f(\rat{b}))$, which is known
  from the real root isolation process.
  
  The complexity of the operation is dominated by the
  computation of $\var( \SR( f, g; \rat{ a}))$ and $\var( \SR( f, g; \rat{ b}))$,
  i.e. we compute $\SRQ$ and evaluate it on $\rat{ a}$ and $\rat{ b}$.
  
  As explained above, there is no need to evaluate the polynomial of the largest degree, 
  i.e.\ the first (and the second if $p < q$) of $\SR( f, g)$ over $\rat{ a}$ and $\rat{b}$.  
  The complexity is that of Corollary \ref{cor:SR-fast}, i.e.\
  $ \sOB( p q \max\{ \tau_f, \tau_g\} + \min\{p, q \}^2 p\, \tau_f ). $
  Thus the operation costs two times the complexity of the evaluation
  of the sequence over the endpoints of the isolating interval.

  If $\alpha$ is a common root of $f$ and $g$, or if $f$ and $g$ are
  not relative prime, then their gcd, which is the last non-zero
  polynomial in $\SR(f, g)$ is not a constant. Hence, we evaluate $\SR$
  on $\rat{ a}$ and $\rat{ b}$, we check if the last polynomial is not
  a constant and if it changes sign on $\rat{ a}$ and $\rat{ b}$. 
  If this is the case, then $\sign( g( \alpha)) = 0$.
  Otherwise we proceed as above.
\end{proof}

Proposition \ref{pr:solve-1} expresses the state-of-the-art in univariate root isolation.
It relies on fast computation of polynomial sequences
and the Davenport-Mahler-Mignotte bound, see \cite{Dav:TR:85} for the first version of
this bound.
The following lemma, a direct consequence of Davenport-Mahler-Mignotte bound, is crucial.
\begin{lemma}[Aggregate separation] \label{lem:very-important} 
  Given $f\in\ZZ[x]$,
  the sum of the bitsize of {\em all} isolating points of the real roots of $f$
  is $\OO( p^2 + p\, \tau_f)$. %
\end{lemma}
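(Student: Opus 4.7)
The plan is to derive the aggregate bound from the Davenport--Mahler--Mignotte inequality, which gives a good estimate on the sum of inverse root separations of a squarefree integer polynomial, rather than just the naive per-root bound.

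First, I would reduce to the squarefree setting: replace $f$ by $f_{red}$, which is a squarefree integer polynomial of degree at most $p$ and bitsize $\OO(p+\tau_f)$. The real roots of $f$ and of $f_{red}$ coincide, and isolating intervals for one yield isolating intervals for the other with the same endpoints, so it suffices to bound the aggregate endpoint bitsize for $f_{red}$. For each real root $\alpha_i$, let $\sep_i$ denote the distance from $\alpha_i$ to its nearest (real or complex) neighbor among the roots of $f_{red}$. An isolating interval for $\alpha_i$ can always be chosen as a dyadic interval of width at most $\sep_i/2$, with endpoints $c_i^{\pm}$ of the form $k_i/2^{n_i}$, so that $n_i = \OO(\lg(1/\sep_i))$. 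Together with the Cauchy bound $|\alpha_i| \le 2^{\tau_f+1}$, which bounds the integer part of each endpoint by $\OO(\tau_f)$, this yields
\[
  \bitsize{c_i^{\pm}} = \OO\!\left(\tau_f + \lg(1/\sep_i)\right).
\]

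The key step is then to apply the Davenport--Mahler--Mignotte aggregate separation inequality to $f_{red}$. Writing $\disc(f_{red})$ for the (nonzero integer) discriminant and using
$\prod_{i<j}|\alpha_i-\alpha_j|^2 \ge |\disc(f_{red})|/\lcoeff{f_{red}}^{2p-2}$
together with the standard Mahler-measure argument (see \cite{Dav:TR:85} or \cite[Ch.~10]{BPR06}), one obtains
\[
  \sum_{i} \lg(1/\sep_i) \;=\; \OO(p\lg p + p\,\bitsize{f_{red}}) \;=\; \OO(p^2 + p\,\tau_f),
\]
where the second equality uses $\bitsize{f_{red}} = \OO(p+\tau_f)$ and $\lg p \le p$. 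Summing the per-endpoint bound over the at most $p$ real roots of $f$ finally gives
\[
  \sum_{i}\bitsize{c_i^{\pm}} \;=\; \OO(p\,\tau_f) + \sum_i \OO(\lg(1/\sep_i)) \;=\; \OO(p^2 + p\,\tau_f),
\]
which is the claimed bound.

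The main obstacle is quoting the Davenport--Mahler--Mignotte inequality in the form suited to aggregate separations (involving $\min_{j\neq i}|\alpha_i-\alpha_j|$) rather than the pairwise product $\prod_{i<j}|\alpha_i-\alpha_j|$ that appears in the discriminant. The standard passage uses a Hadamard-type grouping that pairs each root with its nearest neighbor exactly once and controls the remaining pairwise factors by the Mahler measure $\Mahler{f_{red}} \le 2^{\OO(p+\tau_f)}$; this is precisely the content of the cited bound. Once this is in hand, the rest of the argument is bookkeeping, and the two additive terms $p^2$ and $p\,\tau_f$ in the statement of the lemma correspond directly to the $p\lg p \cdot p$-type combinatorial overhead (absorbed into $p^2$) and to the bitsize of the coefficients of $f_{red}$, respectively.
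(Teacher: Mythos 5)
Your proof is correct and follows essentially the same route as the paper: bound each isolating endpoint's bitsize by the logarithm of the local root separation (plus the Cauchy-bound contribution for the integer part) and then control the sum of these logarithms via the Davenport--Mahler--Mignotte bound applied to the squarefree part. Your write-up is in fact somewhat more careful than the paper's, which silently assumes the reduction to $f_{red}$ and omits the $\OO(p\,\tau_f)$ integer-part term.
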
 

\begin{proof}
  Let there be $r\le p$ real roots.
  The isolating point, computed by a real root isolation subdivision algorithm
  \cite{Yap:SturmBound:05,ESY:descartes,emt-lncs-2006}, between two consecutive real
  roots, say $\alpha_j$ and  $\alpha_{j+1}$, 
  is of magnitude at most $\frac{1}{2}| \alpha_j - \alpha_{j+1}| := \frac{1}{2} \Delta_j$.
  Thus their product is $\frac{1}{2^r} \prod_{j=1}^{r-1} \Delta_j$.
  Using the Davenport-Mahler-Mignotte bound, the product is bounded from below, 
  that is  $\prod_j \Delta_j \geq 2^{-\OO( p^2 + p \tau_f)}$.
  Taking logarithms, we conclude the proof.
\end{proof}

We present a new complexity bound on evaluating the sign of a polynomial $g(x)$ over
a set of algebraic numbers, which have the same defining polynomial,
namely over all real roots of $f(x)$.  
It suffices to evaluate $\SR( f, g)$ over all the isolating endpoints of $f$.
The obvious technique, e.g.~\cite{emt-lncs-2006}, see also
\cite{Sakkalis:algnum:89,BPR06}, is to apply 
Corollary \ref{cor:sign-at-1} $r$ times, where $r$ is the number of real roots of $f$.
But we can do better by applying Lemma \ref{lem:very-important}:%

\begin{lemma} \label{lem:sign-at-1-all}
  Let $\tau=\max\{p, \tau_f,\tau_g\}$.
  Assume that we have isolated the $r$ real roots of $f$ and we know
  the signs of $f$ over the isolating endpoints.
  Then, we can compute the sign of $g$ over all $r$ roots of $f$ in
  $\sOB(p^2 q\tau )$.
\end{lemma}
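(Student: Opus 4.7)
The plan is to batch the computation by doing the expensive sequence construction only once and then reusing it across all roots, while bounding the total evaluation cost through Lemma \ref{lem:very-important}.

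First, I would compute the quotient boot $\SRQ(f,g)$ a single time, at cost $\sOB(pq\tau)$ by Proposition \ref{pr:SR-computation}. This is a one-time expense, independent of how many isolating endpoints we must process. Next, for each real root $\alpha_i$ of $f$, represented by an isolating interval $[\rat{a}_i,\rat{b}_i]$, I would use the stored $\SRQ(f,g)$ to evaluate $\SR(f,g;\rat{a}_i)$ and $\SR(f,g;\rat{b}_i)$, and then extract $\sign(g(\alpha_i))$ via the variation formula of Corollary \ref{cor:sign-at-1}, i.e. $\sign(g(\alpha_i)) = [\var(\SR(f,g;\rat{a}_i))-\var(\SR(f,g;\rat{b}_i))]\sign(f'(\alpha_i))$, together with the usual fallback (inspect the last nonconstant element of the evaluated sequence) in case $\alpha_i$ is a common root of $f$ and $g$. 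Crucially, the signs $\sign(f(\rat{a}_i))$ and $\sign(f(\rat{b}_i))$ are already available from the root isolation step, so each variation count falls under the hypothesis of Corollary \ref{cor:SR-fast}.

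The amortization is the point of the argument. With $\SRQ$ precomputed, the per-endpoint cost of computing $\var(\SR(f,g;\rat{c}))$ is, by Corollary \ref{cor:SR-fast}, bounded by $\sOB(\min\{p,q\}^2 \sigma_i)$, where $\sigma_i$ is the bitsize of the endpoint $\rat{c}$. Summing over the (at most) $2r \le 2p$ endpoints gives total
\[
\sOB\!\left(pq\tau \;+\; \min\{p,q\}^2 \cdot \sum_i \sigma_i\right).
\]
Now I invoke Lemma \ref{lem:very-important}: the sum of bitsizes of all isolating endpoints of the real roots of $f$ is $\OO(p^2+p\,\tau_f)=\OO(p\tau)$. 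Plugging this in yields $\sOB(pq\tau + \min\{p,q\}^2\, p\tau)$.

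Finally, a short case analysis on $\min\{p,q\}$ closes the bound. If $p\le q$ then $\min\{p,q\}^2 p = p^3 \le p^2 q$; if $q\le p$ then $\min\{p,q\}^2 p = pq^2 \le p^2 q$. In both cases the second term is $\sOB(p^2 q\tau)$, which also dominates the setup term $pq\tau$, giving the claimed $\sOB(p^2 q\tau)$. The main obstacle I anticipate is a clean justification that the per-endpoint evaluation cost, once $\SRQ$ is computed, is truly $\sOB(\min\{p,q\}^2 \sigma_i)$ and not something worse that would prevent amortization; this is exactly what Corollary \ref{cor:SR-fast} is engineered to provide under the precondition that $\sign(f(\rat{c}))$ is known, a precondition furnished for free by the prior root isolation.
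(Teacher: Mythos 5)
Your route is the same as the paper's -- evaluate the signed subresultant sequence at every isolating endpoint via Corollary \ref{cor:SR-fast} and control the total endpoint bitsize with Lemma \ref{lem:very-important} -- but the amortization step you single out as the crux is not actually supported by the corollary you cite. Corollary \ref{cor:SR-fast} gives a \emph{per-call} cost of $\sOB(pq\tau + \min\{p,q\}^2\sigma)$, and the $pq\tau$ term there is not merely the one-time cost of building $\SRQ(f,g)$: it is intrinsic to each evaluation, because the polynomials in the sequence carry coefficients of bitsize $\OO(p\tau)$, so evaluating the whole sequence (degree $\OO(q)$, bitsize $\OO(p\tau+q\sigma)$ data) costs $\sOB(\Multiply{q,\,p\tau+q\sigma}) = \sOB(pq\tau + q^2\sigma)$ even after $\SRQ$ is in hand (this is exactly the accounting in the proof of Proposition \ref{pr:SR-fast-evaluation}). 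So your claimed per-endpoint cost of $\sOB(\min\{p,q\}^2\sigma_i)$ is too optimistic; had it been right, your argument would prove the strictly stronger bound $\sOB(pq^2\tau)$ when $q<p$, which the cited tools do not give.

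The good news is that the lemma does not need the amortization. Charging the full $\sOB(pq\tau + \min\{p,q\}^2 s_j)$ at each of the $\OO(p)$ endpoints and summing gives $\sOB(p^2q\tau)$ from the first summands -- this is in fact the dominant term in the paper's own proof -- plus $\sOB(\min\{p,q\}^2(p^2+p\tau_f))$ from the second summands via Lemma \ref{lem:very-important}, which is dominated by $p^2q\tau$ exactly by the case analysis you give at the end. So the statement and your final bound stand; only the intermediate claim that the $pq\tau$ term can be paid once needs to be dropped.
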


\begin{proof}
  Let $s_j$ be the bitsize of the $j$-th endpoint, where $0 \leq j \leq r $.  
  The evaluation of $\SR(f,g)$ over this endpoint, by
  Corollary \ref{cor:SR-fast}, costs  $ \sOB( p q \tau + \min\{p,q\}^2 s_j)$. 
  To bound the overall cost, we sum over all isolating points.
  The first summand is $\sOB( p^2 q \tau)$.
  By Proposition \ref{lem:very-important}, the second summand
  becomes  $\sOB( \min\{p,q\}^2 (p^2 + p \tau_f))$ and is dominated.
\end{proof}

\section{Multivariate polynomials} \label{sec:multivariate}
In this section, we extend the results of the previous section to multivariate polynomials,
using binary segmentation \cite{Reischert:subresultant:97}.
Let $f, g \in (\ZZ[y_1, \dots, y_k])[x]$ with
$\dg_x( f) = p \geq q = \dg_x( g)$,
$\dg_{y_i}( f) \leq d_i$ and $\dg_{y_i}( g) \leq d_i$.
Let $d = \prod_{i=1}^{k}{d_i}$ and $\bitsize{f}, \bitsize{g} \leq \tau$.
The $y_i$-degree of every polynomial 
in $\SR( f, g)$ is bounded by $\dg_{y_i}( \res( f, g)) \leq  (p+q) d_i$.
Thus, the homomorphism $\psi: \ZZ[y_1, \dots, y_k] \rightarrow \ZZ[y]$, where
\begin{displaymath}
  y_1 \mapsto\ y, \, y_2 \mapsto\ y^{(p + q)d_1},\,
  \dots \,, y_k \mapsto\ y^{(p + q)^{k-1} d_1 \cdots d_{k-1}} ,
\end{displaymath}
allows us to decode $\res( \psi( f), \psi( g)) = \psi( \res( f, g))$
and obtain $\res( f, g)$.
The same holds for every polynomial in $\SR( f, g)$.
Notice that 
$\psi( f), \psi( g) \in (\ZZ[y])[x]$ have $y-$degree
less or equal to
$(p+q)^{k-1} d$ since,
in the worst case, $f$ or $g$ contains 
a
monomial of the form
$y_1^{d_1} \, y_2^{d_2} \dots y_k^{d_k}$.
Thus, $\dg_{y}( \res( \psi( f), \psi( g))) < (p+q)^k d$.

\begin{proposition} \label{prop:multi-SR-fast-computation} 
  \cite{Reischert:subresultant:97}
  We can compute $\SRQ( f, g)$, any polynomial in $\SR( f, g)$, 
  and $\res( f, g)$ w.r.t. $x$ 
  in  $\sOB( q (p+q)^{k+1} d \tau)$.
\end{proposition}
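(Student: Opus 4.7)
The plan is to reduce the multivariate subresultant computation to a single univariate one over $\ZZ$ via two successive Kronecker-style encodings, and then invoke Proposition~\ref{pr:SR-computation}. First, apply $\psi$ to obtain $\psi(f), \psi(g)\in\ZZ[y][x]$. By construction, the stride of $\psi$ in each $y_i$ is $(p+q)$ times an a priori bound on the $y_i$-degrees of any entry of $\SR(f,g)$, so no monomial collisions occur in $\psi(\SR_j(f,g))$ and $\psi^{-1}$ is well-defined on the output; it therefore suffices to compute $\SR(\psi(f),\psi(g))$ (or a single entry, or $\res$) with respect to $x$ in $\ZZ[y][x]$ and then decode. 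The images have $x$-degrees $p, q$ and $y$-degree at most $D' := (p+q)^{k-1}d$, still with integer coefficients of bitsize $\tau$.

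Next, turn the $y$-polynomial coefficients into large integers by substituting $y := 2^M$. A Sylvester-matrix / Hadamard bound, the natural multivariate analogue of the univariate bitsize bound in Proposition~\ref{pr:SR-computation}, shows that the integer coefficients appearing inside any entry of $\SR(\psi(f),\psi(g))$ (viewed in $\ZZ[y][x]$) have bitsize $\sO(p\tau)$. Choosing $M = \Theta(p\tau)$ therefore makes the base-$2^M$ expansion injective on every coefficient that can arise, so the ordinary integer subresultants of the two substituted polynomials in $\ZZ[x]$ decode unambiguously to the desired output. After substitution, we have two polynomials in $\ZZ[x]$ of degrees $p, q$ with integer coefficient bitsize $\tau_0 = \OO(D' M) = \sO\bigl((p+q)^{k-1}d\,p\tau\bigr)$.

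Finally, invoke Proposition~\ref{pr:SR-computation} in this univariate setting: the quotient boot, any single $\SR_j$, and the resultant are all computable in $\sOB(pq\tau_0) = \sOB\bigl(p^2 q (p+q)^{k-1}d\tau\bigr)$, which is bounded by $\sOB\bigl(q(p+q)^{k+1}d\tau\bigr)$ since $p\leq p+q$. Reading off base-$2^M$ digits and inverting $\psi$ is linear in the output size and absorbed. The main obstacle is the bitsize claim in the second step: one has to justify that only $M=\sO(p\tau)$ bits — that is, a polylogarithmic inflation of the univariate bound $p\tau$ — already suffice to encode every $y$-polynomial coefficient that can appear in an entry of the multivariate subresultant sequence, even though those coefficients themselves have $y$-degree up to $(p+q)D'$. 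A crude bound would force $M$ to depend on $D'$ and thereby multiply $\tau_0$ by an extra factor $(p+q)d$, destroying the target complexity; the sharp multivariate subresultant bitsize bound (via the Sylvester-matrix determinantal description) is what makes the argument go through.
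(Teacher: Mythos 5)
Your argument is correct and is essentially the paper's own technique: the paper states this proposition by citation to Reischert, but the proof it spells out for the immediately following Lemma~\ref{lem:multi-SR-computation} is exactly your two-stage binary segmentation (the Kronecker map $\psi$ on the $y_i$ followed by $y\mapsto 2^{\Theta((p+q)\tau)}$, then Proposition~\ref{pr:SR-computation} on the resulting integer polynomials of bitsize $\sO((p+q)^{k}d\tau)$). Your identification of the $\sO(p\tau)$ coefficient bound for the subresultant entries as the crux matches the paper's assumption that the entries have magnitude $2^{c(p+q)\tau}$, so the proposal stands as written.
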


\begin{lemma} \label{lem:multi-SR-computation}
  We can compute $\SR( f, g)$ in $\sOB( q (p+q)^{k+2} d \tau)$.
\end{lemma}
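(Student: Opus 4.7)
The plan is to reduce everything to Proposition \ref{prop:multi-SR-fast-computation} by computing each element of $\SR(f, g)$ separately and then paying only for the length of the sequence.

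First I would observe that, with $\dg_x f = p \ge q = \dg_x g$, the signed subresultant sequence $\SR(f, g)$ consists of at most $p + 1$ polynomials $\SR_j$, indexed by $j = 0, 1, \dots, p$, since the nontrivial subresultants sit at nonnegative integer indices bounded by $p$. Proposition \ref{prop:multi-SR-fast-computation} then tells me that any single element $\SR_j$ can be produced in $\sOB[q(p+q)^{k+1} d \tau]$ bit operations in $\ZZ[y_1, \dots, y_k]$.

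Next I would iterate Proposition \ref{prop:multi-SR-fast-computation} once per index $j$ and sum the costs. This yields a total of $\sOB[(p+1) \cdot q(p+q)^{k+1} d \tau] = \sOB[p q (p+q)^{k+1} d \tau]$, which the trivial inequality $p \le p + q$ absorbs into $\sOB[q (p+q)^{k+2} d \tau]$, the claimed bound.

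The hard part, to the extent there is one, is only to convince oneself that nothing sharper is expected. The extra factor of $p + q$ over Proposition \ref{prop:multi-SR-fast-computation} comes precisely from the length of the sequence, and one could in principle try to shave it by amortizing the shared binary-segmentation and subresultant machinery across all indices $j$ simultaneously (for instance, sharing a single quotient boot across subresultants computed via the encoded univariate problem). Since the statement of the lemma already accommodates the direct per-element iteration, no such amortization is necessary here.
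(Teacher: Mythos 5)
Your argument is correct, but it is not the route the paper takes. You treat Proposition~\ref{prop:multi-SR-fast-computation} as a black box for a single element of the sequence and pay for the $O(p+q)$ elements one at a time, absorbing the extra factor of $p+q$ into the exponent. The paper instead computes the whole sequence in one batched call: it first bounds the coefficient magnitude of every subresultant by $2^{c(p+q)\tau}$, then composes the Kronecker-style substitution $\psi$ with a further evaluation $\chi: y \mapsto 2^{\lceil c(p+q)\tau\rceil}$, so that $f$ and $g$ become integers of bitsize $O((p+q)^k d\tau)$, and finally invokes the full-sequence univariate bound $\sOB(p^2 q \tau)$ of Proposition~\ref{pr:SR-computation}, noting that decoding the output is dominated. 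Both arguments land on $\sOB(q(p+q)^{k+2} d\tau)$ --- in the paper's version the extra $(p+q)$ factor relative to Proposition~\ref{prop:multi-SR-fast-computation} comes from $p^2q$ versus $pq$ in the two univariate bounds, in yours from the length of the sequence. What your version buys is modularity: you need no bound on the coefficient sizes of the subresultants and no second segmentation map. What the paper's version buys is a single computation that shares all the intermediate work (and is closer to what one would actually implement); it also makes explicit where the bitsize growth comes from, which matters elsewhere in the paper. One small inaccuracy that does not affect the conclusion: with the paper's indexing ($\SR_{q+1}=f$, $\SR_q=g$) the sequence has at most $q+2$ elements rather than $p+1$, but either count is $O(p+q)$, which is all you use.
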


\begin{proof} %
  Every polynomial in $\SR( f, g)$ has coefficients of 
  magnitude bounded $2^{c \,(p+q)\tau}$, for a suitable constant $c$,
  assuming $\tau > \lg( d)$.
  Consider the map $\chi : \ZZ[y] \mapsto \ZZ$,
  where
  $ y \mapsto 2^{\lceil c \, (p+q)\tau \rceil} $,
  and let $\phi = \psi \,\circ\, \chi : \ZZ[y_1, y_2 \dots, y_k] \rightarrow \ZZ$.
  Then $\bitsize{ \phi( f)}, \bitsize{ \phi( g)} \leq c \, (p+q)^{k} \, d \, \tau$.
  Now apply Proposition \ref{pr:SR-computation}.

  In order to complete the computation we should recover the result from the computed
  sequence, that is to apply the inverse image of $\phi$. The cost of this computation
  (almost linear w.r.t. the output) is dominated; which is always the case.
\end{proof} 

\begin{theorem} \label{th:multi-SR-fast-evaluation}
  We can %
  evaluate $\SR( f, g )$ at $x = \rat{ a}$
  where $\rat{a} \in \QQ \cup \{ \infty \}$ and $\bitsize{ \rat{ a}} = \sigma$,
  in $\sOB( q (p+q)^{k+1} d \max\{ \tau, \sigma\})$.
\end{theorem}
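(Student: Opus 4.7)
The plan is to reduce the multivariate evaluation to a univariate one via binary segmentation, apply the univariate evaluation bound from Proposition~\ref{pr:SR-fast-evaluation}, and decode. Concretely, I would reuse the composite map $\phi = \psi \circ \chi : \ZZ[y_1,\ldots,y_k] \to \ZZ$ introduced in the proof of Lemma~\ref{lem:multi-SR-computation}: first Kronecker-substitute the $y_i$ into a single $y$ using $\psi$, then substitute $y \mapsto 2^{\lceil c(p+q)\tau\rceil}$ via $\chi$ with the constant $c$ chosen so that the substitution remains injective on every coefficient polynomial that can occur in $\SR(f,g)$. Applying $\phi$ coefficient-wise to $f$ and $g$ (viewed as polynomials in $x$), I obtain $F = \phi(f),\, G = \phi(g) \in \ZZ[x]$ with $\dg_x F = p$, $\dg_x G = q$, and $\bitsize{F}, \bitsize{G} = \OO((p+q)^k d\,\tau)$.

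Next I would invoke the specialization property of signed subresultants: since $\phi$ is injective on the coefficients of interest, it does not kill the leading coefficients of $f$ and $g$ in $x$, and therefore $\SR(F,G) = \phi(\SR(f,g))$. Because $\phi$ only touches the $y$-coefficients, it commutes with evaluating the $x$-variable at a rational point, so
\[
\SR(F,G;\,\rat{a}) \;=\; \phi\bigl(\SR(f,g;\,\rat{a})\bigr).
\]
Thus computing $\SR(f,g;\,\rat{a})$ reduces to computing $\SR(F,G;\,\rat{a})$ and then inverting $\phi$ entry by entry; the latter is essentially base-$2^{\lceil c(p+q)\tau\rceil}$ unpacking and is linear in the output size, hence dominated.

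I then apply Proposition~\ref{pr:SR-fast-evaluation} to $F, G \in \ZZ[x]$ with coefficient bitsize $T = \OO((p+q)^k d\,\tau)$ and evaluation point of bitsize $\sigma$, obtaining the bound
\[
\sOB\bigl(pq\,T + q^2\sigma + p^2\sigma\bigr) \;=\; \sOB\bigl(pq(p+q)^k d\,\tau + (p+q)^2\sigma\bigr).
\]
Since $pq \leq q(p+q)$ and $(p+q)^2 \leq q(p+q)^{k+1}d$ (as $k\ge 1$, $q,d\ge 1$), both summands are absorbed into $\sOB(q(p+q)^{k+1}d\,\max\{\tau,\sigma\})$, which is the claimed bound.

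The main obstacle is ensuring the binary segmentation is faithful on every polynomial appearing in $\SR(f,g)$: I need $c$ large enough that the $y$-coefficients of all signed subresultants—whose magnitudes are controlled as in the proof of Lemma~\ref{lem:multi-SR-computation} by $2^{\OO((p+q)\tau)}$—are recoverable from their images under $\chi$, and simultaneously that the leading coefficients of $f$ and $g$ in $x$ do not become zero under $\phi$ (so that the subresultant specialization identity applies). Both conditions are handled by the same choice of $c$ used in Lemma~\ref{lem:multi-SR-computation}; once this is in place, everything else is a bookkeeping exercise in the complexity estimate above.
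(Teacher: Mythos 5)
Your overall strategy coincides with the paper's (Kronecker map $\psi$ followed by binary segmentation $\chi$, then Proposition~\ref{pr:SR-fast-evaluation}), but your choice of slot width in $\chi$ is too small, and that is exactly where the argument breaks. You take $y\mapsto 2^{\lceil c(p+q)\tau\rceil}$, the same map as in Lemma~\ref{lem:multi-SR-computation}, and you justify it by requiring faithfulness on the coefficients of the polynomials in $\SR(f,g)$. But the objects you ultimately need to decode are not the polynomials of $\SR(f,g)$; they are the polynomials of the \emph{evaluated} sequence $\SR(f,g;\rat{a})\subset\ZZ[y_1,\dots,y_k]$. Each $\SR_j(f,g)$ has $x$-degree $\OO(p)$, so substituting $x=\rat{a}$ with $\bitsize{\rat{a}}=\sigma$ inflates the $y$-coefficients to bitsize $\OO((p+q)\tau+p\sigma)$. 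As soon as $p\sigma$ exceeds $c(p+q)\tau$ (e.g.\ whenever $\sigma\gg\tau$), adjacent coefficients overflow their slots, $\phi$ is no longer injective on the evaluated polynomials, and the integer $\SR_j(\phi(f),\phi(g);\rat{a})=\phi\bigl(\SR_j(f,g;\rat{a})\bigr)$ does not determine $\SR_j(f,g;\rat{a})$: the reduction loses precisely the information the theorem asks you to output.

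The repair is what the paper does: take $\chi:y\mapsto 2^{\lceil c(p+q)\max\{\tau,\sigma\}\rceil}$, so that the slot width dominates both the $(p+q)\tau$ coming from coefficient growth in the subresultants and the $p\sigma$ coming from the substitution $x=\rat{a}$. This raises the coefficient bitsize of $\phi(f),\phi(g)$ to $T=\OO((p+q)^k d\max\{\tau,\sigma\})$, and Proposition~\ref{pr:SR-fast-evaluation} then gives $\sOB(pqT+q^2\sigma+p^2\sigma)=\sOB(q(p+q)^{k+1}d\max\{\tau,\sigma\})$, which is still the claimed bound; your final absorption of the $(p+q)^2\sigma$ terms goes through unchanged. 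So the gap is local and easily fixed, but as written your decoding step fails for large $\sigma$.
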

\begin{proof}
  First we compute $\SRQ( f, g)$
  in  $\sOB( q (p+q)^{k+1} d \, \tau)$
  (Proposition \ref{prop:multi-SR-fast-computation}),
  and then we evaluate the sequence over $\rat{a}$, using binary segmentation.
  For the latter we need to bound the bitsize of the resulting polynomials.  

  The polynomials in $\SR( f, g)$ have total degree
  in $y_1, \dots, y_k$ bounded by $(p+q)\sum_{i=1}^{k}{d_i}$
  and coefficient bitsize bounded by $(p+q)\tau$.
  With respect to $x$, the polynomials in $\SR( f, g)$
  have degrees in $\OO( p)$, so
  substitution $x = \rat{a}$ yields values of size $\sO( p \sigma)$.
  After the evaluation we obtain polynomials in $\ZZ[y_1, \dots, y_k]$
  with bitsize bounded by 
  $\max\{ (p+q)\tau, p \sigma\} \leq (p + q)\max\{ \tau, \sigma\}$.
  
  Consider the map $\chi : \ZZ[y] \rightarrow \ZZ$,
  where
  $y \mapsto 2^{\lceil c \, (p+q) \max\{ \tau, \sigma\} \rceil}$, for a suitable constant $c$.
  Apply the map $\phi = \psi \circ \chi$ to $f,g$.  
  Now, $\bitsize{ \phi(f)}, \bitsize{ \phi( g)} \leq  c\, d\, (p+q)^{k} \max\{ \tau, \sigma\}$.  
  By Proposition \ref{pr:SR-fast-evaluation}, the evaluation
  costs $\sOB( q (p+q)^{k+1} d \max\{ \tau, \sigma\})$.
\end{proof}

We obtain the following, for bivariate $f, g \in (\ZZ[y])[x]$,
such that $\dg_x( f) = p$, $\dg_x( g) = q$, $\dg_y( f), \dg_y( g) \leq d$. 

\begin{corollary} \label{cor:biv-SR-computation}
  We compute $\SR( f, g)$ in
  $\sOB( p q (p + q)^2 d \tau) $.
  For any polynomial $\SR_j( f, g)$ in $\SR(f, g)$, 
  $\dg_{x}( \SR_j( f, g)) = \OO( \max\{p, q\})$,
  $\dg_{y}( \SR_j( f, g)) = \OO( \max\{p,q\} d)$,
  and also $\bitsize{ \SR_j( f, g)} = \OO( \max\{p, q\} \tau)$.
\end{corollary}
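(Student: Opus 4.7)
The plan is to specialize the general multivariate statements to $k = 1$ with $d_1 = d$. The complexity estimate follows immediately from Lemma \ref{lem:multi-SR-computation}: plugging in $k=1$ yields $\sOB(q(p+q)^{3} d \tau)$, and under the standard convention $p \geq q$ (if not, swap $f$ and $g$, which only permutes the sequence up to sign), we have $(p+q) \leq 2p$ and hence $q(p+q)^3 = \OO(pq(p+q)^2)$, giving the claimed $\sOB(pq(p+q)^2 d \tau)$.

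For the size bounds on individual $\SR_j(f,g)$, I would use the determinantal (Sylvester--Habicht) representation: $\SR_j$ is, up to sign, a minor of size $\OO(p+q)$ of the Sylvester matrix of $f$ and $g$, whose entries are the coefficients of $f$ and $g$ viewed as polynomials in $x$. These entries therefore lie in $\ZZ[y]$ with $y$-degree at most $d$ and integer coefficients of bitsize at most $\tau$. The bound $\dg_x(\SR_j) \leq j \leq \max\{p,q\}$ is immediate from the block structure, hence $\dg_x(\SR_j) = \OO(\max\{p,q\})$. For the $y$-degree, the Leibniz expansion of a determinant of $n$ polynomial entries of $y$-degree $\leq d$ produces a polynomial of $y$-degree $\leq n d$, giving $\dg_y(\SR_j) \leq (p+q)d = \OO(\max\{p,q\} d)$. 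For the coefficient bitsize, the same expansion together with a Hadamard-type estimate yields $\bitsize{\SR_j} \leq n \tau + \OO(n \lg n)$; since $\lg(p+q) = \OO(\tau)$ under the standing hypothesis $\lg(\dg f) = \OO(\bitsize{f})$, this collapses to $\OO((p+q)\tau) = \OO(\max\{p,q\} \tau)$. These are precisely the bounds already invoked in the proof of Lemma \ref{lem:multi-SR-computation} when calibrating the binary-segmentation base, so they are consistent with the machinery we are inheriting.

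Since the corollary is pure specialization of previously established multivariate results, I do not anticipate any genuine obstacle. The only mildly delicate point is reconciling the symmetric form $q(p+q)^3$ arising from the lemma with the slightly asymmetric $pq(p+q)^2$ in the statement, which requires the convention $p \geq q$; this is harmless because $\SR(f,g)$ and $\SR(g,f)$ differ only by an easily tracked sign pattern, so we may swap whenever convenient.
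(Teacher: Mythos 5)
Your proposal is correct and matches the paper's (implicit) justification: the paper states this corollary without proof as the direct specialization of Lemma~\ref{lem:multi-SR-computation} to $k=1$, $d_1=d$, with the size bounds being exactly those already invoked in that lemma's proof (coefficient magnitude $2^{\OO((p+q)\tau)}$ and $y$-degree at most $(p+q)d$ from the Sylvester-minor representation). Your reconciliation of $q(p+q)^3$ with $pq(p+q)^2$ under the standing convention $p\ge q$ is the right observation and is harmless, as you note.
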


\begin{corollary} \label{cor-biv-SR-fast-computation}
  We compute $\SRQ( f, g)$, any polynomial in $\SR(f, g)$,
  and $\res( f, g)$ in 
  $\sOB( p q \max\{p,q\} d \tau)$.
\end{corollary}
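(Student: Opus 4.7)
The plan is to recognize that this corollary is simply the bivariate specialization ($k=1$) of Proposition \ref{prop:multi-SR-fast-computation}, analogous to how Corollary \ref{cor:biv-SR-computation} specializes Lemma \ref{lem:multi-SR-computation}. Since $f, g \in (\ZZ[y])[x]$ involve exactly one parameter variable beyond $x$, we apply the multivariate bound with $k = 1$ and a single degree $d$ replacing the product $\prod_{i=1}^{k} d_i$.

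First I would substitute $k = 1$ into Proposition \ref{prop:multi-SR-fast-computation} to obtain the bit complexity $\sOB(q(p+q)^{k+1} d \tau) = \sOB(q(p+q)^{2} d \tau)$ for computing $\SRQ(f,g)$, any single element of $\SR(f,g)$, and in particular the resultant $\res(f,g) = \SR_0(f,g)$.

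Next I would simplify $(p+q)^2$ using the standard assumption $p \geq q$ (which is inherited from the signed subresultant framework set up in Section~\ref{sec:preliminaries}, where the sequence is defined for $\dg(f) \geq \dg(g)$). Under this assumption $(p+q)^2 = \Theta(p^2) = \Theta(p \max\{p,q\})$, so
\[
q(p+q)^2 d \tau \;=\; \sOB\bigl( p q \max\{p, q\} d \tau \bigr),
\]
which is the advertised bound. (If one wishes to drop the ordering hypothesis, a symmetric argument applies after swapping $f$ and $g$, since the signed subresultant machinery behaves symmetrically up to signs and constants, and $\max\{p,q\}$ is symmetric.)

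There is no real obstacle here: the work has already been done in establishing the multivariate Proposition \ref{prop:multi-SR-fast-computation} via binary segmentation through the homomorphism $\psi$. The only mild subtlety worth flagging is the convention on $p$ versus $q$: one should note that the $\sOB$ expression $pq\max\{p,q\}d\tau$ and $q(p+q)^2 d\tau$ agree up to polylogarithmic factors precisely because $p+q = \Theta(\max\{p,q\})$, so the claimed form is merely a convenient rewriting tailored for the bivariate estimates used later in Section~\ref{sec:biv-solving}.
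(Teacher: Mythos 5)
Your proposal is correct and matches the paper's (implicit) derivation: the corollary is exactly Proposition \ref{prop:multi-SR-fast-computation} specialized to $k=1$, with $q(p+q)^2 d\tau$ rewritten as $pq\max\{p,q\}d\tau$ under the ordering convention (and by symmetry otherwise). The paper offers no further argument, so there is nothing missing from yours.
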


\begin{corollary} \label{cor:biv-SR-fast-evaluation}
  We can compute $\SR( f, g \,;\, \rat{a})$, 
  where $\rat{a}\in\QQ \cup \{ \infty \}$ and $\bitsize{\rat{a}} = \sigma$,
  in
  $\sOB(p q \max\{p,q\} d \max\{ \tau, \sigma\})$.
  For the polynomials $\SR_j( f, g \,;\, \rat{a}) \in \ZZ[y]$,
  except for $f, g$,
  it holds $\dg_{y}( \SR_j( f, g \,;\, \rat{a})) = \OO( (p+q) d)$
  and $\bitsize{ \SR_j( f, g \,;\, \rat{a}) } = \OO( \max\{p,q\} \tau + \min\{p,q\} \sigma)$.
\end{corollary}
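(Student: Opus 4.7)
The plan is to specialize Theorem~\ref{th:multi-SR-fast-evaluation} to the bivariate case $k=1$ (treating $f,g \in (\ZZ[y])[x]$ as polynomials in $x$ with a single ``coefficient variable'' $y$) and then combine the complexity bound with the degree/bitsize data from Corollary~\ref{cor:biv-SR-computation} to obtain the refined size estimates on the evaluated sequence.

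First, I would invoke Theorem~\ref{th:multi-SR-fast-evaluation} with $k=1$ and $d_1 = d$. This immediately yields cost $\sOB(q (p+q)^{2} d \max\{\tau, \sigma\})$; using the standing convention $p \geq q$ inherited from Proposition~\ref{pr:SR-computation}, one has $(p+q)^2 = \OO( p \max\{p,q\})$, so the bound collapses to $\sOB( p q \max\{p,q\} d \max\{\tau, \sigma\})$ as claimed. No new work is required here beyond specialization.

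Next, I would read off the $y$-degree bound. Corollary~\ref{cor:biv-SR-computation} gives $\dg_y(\SR_j(f,g)) = \OO(\max\{p,q\} d) = \OO((p+q) d)$, and specialization at $x = \rat{a}$ cannot increase the $y$-degree; hence $\dg_y(\SR_j(f,g;\,\rat{a})) = \OO((p+q) d)$.

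For the bitsize, I would start from $\bitsize{\SR_j(f,g)} = \OO(\max\{p,q\} \tau)$ (again from Corollary~\ref{cor:biv-SR-computation}). Each coefficient (a polynomial in $y$) of $\SR_j(f,g;\,\rat{a})$ is obtained by substituting $\rat{a}$ into a polynomial in $x$ of degree at most $\dg_x(\SR_j)$, so the bitsize grows by an additive term of at most $\dg_x(\SR_j)\cdot\sigma$. The crucial point is that, by the defining property of subresultants, $\dg_x(\SR_j) \leq j$; since we exclude $\SR_{q} = g$ and $\SR_{p+1} = f$, the remaining indices satisfy $j \leq q-1 = \OO(\min\{p,q\})$. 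Combining, $\bitsize{\SR_j(f,g;\,\rat{a})} = \OO(\max\{p,q\}\tau + \min\{p,q\}\sigma)$, which is exactly the stated bound.

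The only point that requires any care, and hence the ``main obstacle,'' is the bitsize refinement: it is tempting to use the uniform $\dg_x = \OO(\max\{p,q\})$ bound from Corollary~\ref{cor:biv-SR-computation}, which would degrade the $\sigma$-coefficient to $\max\{p,q\}$. Recognising that this coarse bound is attained only by $f$ and $g$ themselves, and that every other element of $\SR(f,g)$ has $x$-degree at most $\min\{p,q\}$, is what justifies both the $\min\{p,q\}\sigma$ term and the explicit ``except for $f,g$'' caveat in the statement.
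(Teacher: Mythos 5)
Your proposal is correct and follows essentially the same route as the paper, which presents this corollary as the $k=1$ specialization of Theorem~\ref{th:multi-SR-fast-evaluation} together with the degree/bitsize data underlying Corollary~\ref{cor:biv-SR-computation}. Your explicit justification of the $\min\{p,q\}\,\sigma$ term via $\dg_x(\SR_j)\le j\le \min\{p,q\}-1$ for $j$ below the indices of $f$ and $g$ is exactly the refinement the ``except for $f,g$'' caveat is meant to capture, and it is the right way to sharpen the coarser $\sO[p\sigma]$ estimate appearing in the proof of the multivariate theorem.
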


We now reduce the computation of the sign of $F \in \ZZ[x,y]$
over $(\alpha,\beta)\in\ALG^2$ to that over several points in $\QQ^2$.
Let $\dg_{x}( F) = \dg_{y}( F) = n_1$,  $\bitsize{F} =\sigma$ and 
$\alpha \cong (A, [\rat{a}_1, \rat{a}_2])$, 
$\beta  \cong (B, [\rat{b}_1, \rat{b}_2])$,
where  $A, B \in \ZZ[X]$, $\dg( A) =\dg( B) = n_2$, 
$\bitsize{A} = \bitsize{B} = \sigma$.
We assume $n_1 \leq n_2$, which is relevant below.
The pseudo-code is in Algorithm \ref{alg:sign-at-2}, see \cite{Sakkalis:algnum:89},
and generalizes the univariate case, e.g.~\cite{emt-lncs-2006,Yap2000}.
For $A$, resp.\ $B$, 
we assume that we know their values on $\rat{ a}_1, \rat{a}_2$,
resp.\ $\rat{ b}_1, \rat{b}_2$.

\begin{myalgorithm}[tbp]
  \dontprintsemicolon
  \linesnumbered
  \SetFuncSty{textsc}
  \SetKw{RET}{{\sc return}} 
  \SetKw{OUT}{{\sc output \ }} 
  \SetVline \KwIn{$ F \in \ZZ[x, y], 
    \alpha \cong ( A, [\rat{a}_1, \rat{a}_2]), %
    \beta  \cong ( B, [\rat{b}_1, \rat{b}_2])$} %
  \KwOut{ $\sign( F( \alpha, \beta))$}
  \caption{$\func{sign\_at}(F, \alpha, \beta)$}

  compute $\SRQ_x( A, F)$\;

  $L_1 \leftarrow \SR_x( A, F \,;\, \rat{a}_1)$, $V_1 \leftarrow \emptyset$ \;
  \nllabel{a:biv-signat-eval-a1}
  \lForEach{ $f \in L_1$ }{
    $V_1 \leftarrow \FuncSty{add}( V_1, \FuncSty{ sign\_at}( f, \beta))$ \;
    \nllabel{a:biv-signat-signat-beta}
  }\;
  
  $L_2 \leftarrow \SR_x( A, F \,;\, \rat{a}_2)$, $V_2 \leftarrow \emptyset$ \;
  \lForEach{ $f \in L_2$ }{
    $V_2 \leftarrow \FuncSty{add}( V_2, \FuncSty{ sign\_at}( f, \beta))$ \;
  }\;

  \RET $\left(\FuncSty{var}(V_1) -  \FuncSty{var}(V_2) \right) \cdot \sign( A'( \alpha))$ \;
  \label{alg:sign-at-2}
\end{myalgorithm}

\begin{theorem}\label{th:biv_sign_at}
  We compute the sign of polynomial $F(x,y)$ over $\alpha, \beta$
  in  $\sOB(n_1^2 \, n_2^3 \, \sigma)$.
\end{theorem}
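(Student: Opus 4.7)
The plan is to analyze Algorithm~\ref{alg:sign-at-2}. Correctness rests on the bivariate analogue of the Sturm-like identity used in Corollary~\ref{cor:sign-at-1}: since $\alpha$ is the unique root of $A$ in $[\rat{a}_1, \rat{a}_2]$, the difference of sign variations of the sequences obtained by evaluating $\SR_x(A, F)$ first at $x = \rat{a}_1$ and $x = \rat{a}_2$, and then at $y = \beta$, multiplied by $\sign(A'(\alpha))$ (known from the isolation of $\alpha$), returns $\sign(F(\alpha, \beta))$; see \cite{Sakkalis:algnum:89,BPR06}. The degenerate case in which $\alpha$ is a common root of $A$ and some element of the evaluated subresultant sequence is handled exactly as in the proof of Corollary~\ref{cor:sign-at-1}, by inspecting whether the last nonzero term changes sign on the isolating interval, and it does not affect the asymptotic cost.

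For the complexity, I split the work into three phases. Phase~1 computes $\SRQ_x(A, F)$ and evaluates $\SR_x(A, F)$ at $\rat{a}_1, \rat{a}_2$. With $p = n_2$, $q = n_1$, $d = n_1$, $\tau = \sigma$, and endpoint bitsize $\sigma_\alpha = \sO(n_2 \sigma)$ by Proposition~\ref{pr:solve-1}, Corollary~\ref{cor-biv-SR-fast-computation} yields the quotient boot in $\sOB(n_1^2 n_2^2 \sigma)$, while Corollary~\ref{cor:biv-SR-fast-evaluation} gives each evaluation in $\sOB(n_1^2 n_2^2 \max\{\sigma, \sigma_\alpha\}) = \sOB(n_1^2 n_2^3 \sigma)$. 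The output is two lists $L_1, L_2$, each containing $\OO(n_1)$ polynomials in $\ZZ[y]$ with $y$-degree $\OO(n_1 n_2)$ and bitsize $\OO(n_1 n_2 \sigma)$.

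The main obstacle is Phase~2: evaluating the sign at $\beta$ of each polynomial in $L_1 \cup L_2$. A direct call to Corollary~\ref{cor:sign-at-1} on a polynomial $\bar g \in L_i$ against $B$ has cost dominated by $pq\,\tau_g = n_2 \cdot n_1 n_2 \cdot n_1 n_2 \sigma = n_1^2 n_2^3 \sigma$ per call, which over all $\OO(n_1)$ polynomials would give $\sOB(n_1^3 n_2^3 \sigma)$---one factor of $n_1$ above the target. The fix is to exploit $B(\beta) = 0$ and replace each $\bar g$ by $\tilde g := \bar g \bmod B$, which satisfies $\tilde g(\beta) = \bar g(\beta)$, has $\dg_y < n_2$, and bitsize still $\OO(n_1 n_2 \sigma)$. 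Fast univariate polynomial division computes $\tilde g$ in $\sOB(n_1^2 n_2^2 \sigma)$ per polynomial, aggregating to $\sOB(n_1^3 n_2^2 \sigma) \le \sOB(n_1^2 n_2^3 \sigma)$ under the hypothesis $n_1 \le n_2$. Applying Corollary~\ref{cor:sign-at-1} to each reduced pair $(B, \tilde g)$, with $p = q = n_2$, $\tau_f = \sigma$, $\tau_g = \OO(n_1 n_2 \sigma)$, now costs $\sOB(n_1 n_2^3 \sigma)$ per polynomial and hence $\sOB(n_1^2 n_2^3 \sigma)$ in aggregate.

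Phase~3 merely counts sign variations of the two resulting sign sequences and multiplies by $\sign(A'(\alpha))$, using $\OO(n_1)$ further operations. Summing the three phases yields the claimed bound $\sOB(n_1^2 n_2^3 \sigma)$.
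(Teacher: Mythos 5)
Your proposal is correct and follows essentially the same route as the paper: the same two endpoint evaluations of $\SR_x(A,F)$ via Corollary~\ref{cor:biv-SR-fast-evaluation}, and the same key trick for the $\beta$-phase, since your reduction $\tilde g = \bar g \bmod B$ is exactly the paper's device of starting the sequence $\SR(B,f)=(B,f,-B,-\prem{f,-B},\dots)$ at the pseudo-remainder so that all subsequent degrees drop to $\OO(n_2)$ and the per-polynomial cost becomes $\sOB(n_1 n_2^3\sigma)$. The only cosmetic difference is that you count variations of the list of signs $\sign(\bar g(\beta))$ directly rather than of the evaluated subresultant sequences, which is what Algorithm~\ref{alg:sign-at-2} does anyway.
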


\begin{proof}
  First, we compute $\SRQ_x( A, F)$, in $\sOB( n_1^2 n_2^2 \sigma)$
  (Corollary \ref{cor-biv-SR-fast-computation}), so as to evaluate 
  $\SR( A, F)$ on the endpoints of $\alpha$.
  
  We compute $\SR( A, F; \rat{a}_1)$.
  The first polynomial in the sequence is $A$ and notice that we already know 
  its value on $\rat{ a}_1$.
  This computation costs $\sOB( n_1^2 \, n_2^3 \, \sigma)$ by
  Corollary \ref{cor:biv-SR-fast-evaluation}
  with $q = n_1$, $p = n_2$, $d = n_1$, $\tau = \sigma$, and $\sigma = n_2 \sigma$,
  where the latter corresponds to the bitsize of the endpoints.
  After the evaluation we obtain a list $L_1$, 
  which contains $\OO( n_1)$ polynomials, say $f \in \ZZ[y]$, 
  such that $\dg( f) = \OO( n_1 n_2)$.
  To bound the bitsize, 
  notice that the polynomials in $\SR( f, g)$ 
  are of degrees $\OO( n_1)$ w.r.t. $x$ 
  and of bitsize $\OO( n_2 \sigma)$.
  After we evaluate on $\rat{a}_1$, $\bitsize{f} = \OO( n_1 n_2 \sigma)$.

  For each $f\in L_1$ we compute its sign over $\beta$
  and count the sign variations.  
  We could apply directly Corollary \ref{cor:sign-at-1},
  but we can do better. 
  If $\dg( f) \geq n_2 $ then
  $\SR( B, f) = $
  $(B, f, -B$, $g = -\prem{ f, -B}, \dots)$. We start the evaluations at $g$:
  it is computed in $\sOB( n_1^2 n_2^3 \sigma)$ (Proposition \ref{pr:SR-computation}),
  $\dg( g) = \OO( n_2)$ and $\bitsize{g} = \OO( n_1 n_2 \sigma)$.
  Thus,  we evaluate $\SR( -B, g; \rat{a}_1)$ in $\sOB( n_1 n_2^3 \sigma)$,
  by Corollary \ref{cor:sign-at-1}, 
  with $p = q = n_2$, $\tau_f = \sigma$, $\tau = n_1 n_2 \sigma$.
  If $\dg( f)<n_2$ the complexity is dominated.
  Since we perform $\OO( n_1)$ such evaluations,
  all of them cost $\sOB( n_1^2 n_2^3 \sigma)$.

  We repeat for the other endpoint of $\alpha$,
  subtract the sign variations,
  and multiply by $\sign( A'(\alpha)),$
  which is known from the process
  that isolated $\alpha$.
  If the last sign in the two sequences is alternating,
  then $\sign( F( \alpha, \beta)) = 0$.  
\end{proof}

\section{Bivariate real solving} \label{sec:biv-solving}

Let $F, G \in \ZZ[x, y]$,
$\dg( f) = \dg( g) = n$ and $\bitsize{F} = \bitsize{G} = \sigma$.
We assume relatively prime polynomials for simplicity 
but this hypothesis is not restrictive because it can be verified 
and, if it does not hold, it can be imposed within the same asymptotic complexity.
We study the problem of real solving
the system $F = G = 0$.
The main idea is to project the roots on their $x$-
and $y$-coordinates.
The difference between the algorithms is the
way they match %
coordinates.

\subsection{The \func{grid} algorithm} \label{sec:grid-solve}

Algorithm \func{grid} %
is straightforward, see also \cite{et-casc-2005,WolPhd}.
The pseudo-code is in Algorithm \ref{alg:grid-solve}.
We compute the $x$- and
$y$-coordinates of the real solutions %
by solving
resultants $\res_x( F, G)$, %
$\res_y( F, G)$.
We match them using the algorithm \func{sign\_at} (Theorem \ref{th:biv_sign_at}) %
by testing all rectangles in this grid.

\begin{myalgorithm}[Htbp]
  \dontprintsemicolon
  \linesnumbered
  \SetFuncSty{textsc}
  \SetKw{RET}{{\sc return}} 
  \SetKw{OUT}{{\sc output \ }} 
  \SetVline \KwIn{$F, G \in \ZZ[x, y]$}
  \KwOut{ The real solutions of $F = G = 0$}

  \BlankLine
  
  $R_x \leftarrow \res_y( F, G)$ \;
  \nllabel{a:grid-Rx}
  $L_x, M_x \leftarrow \FuncSty{solve}( R_x)$ \;
  \nllabel{a:grid-Xalg}
  \BlankLine

  $R_y \leftarrow \res_x( F, G)$ \;
  \nllabel{a:grid-Ry}
  $L_y, M_y \leftarrow \FuncSty{solve}( R_y)$ \;
  \nllabel{a:grid-Yalg}

  \BlankLine
  
  $Q \leftarrow \emptyset$ \;
  \ForEach{ $\alpha \in L_x$ }{
    \ForEach{ $\beta \in L_y$ }{
      \lIf{ $\FuncSty{sign\_at}(F, \alpha, \beta) = 0 \,\wedge\,
        \FuncSty{sign\_at}(G, \alpha, \beta) = 0$}{
        \nllabel{a:grid-signat}
        $Q \leftarrow \FuncSty{add}( Q, \{ \alpha, \beta \})$\;
      }\;
    }\;
  }\;
  
  \RET $Q$\;
  \caption{$\func{grid}$($F, G$)}
  \label{alg:grid-solve}
\end{myalgorithm}

To the best of
our knowledge, this is the first time that the algorithm's complexity is studied.
Its simplicity makes it attractive;
however, \func{sign\_at} (Algorithm \ref{alg:sign-at-2}) is very costly.
The algorithm requires no genericity assumption on the input;
we study a generic shear that brings the system to generic position in order
to compute the multiplicities within the same complexity bound.
The algorithm allows the use of heuristics,
such as bounding the number of roots,
e.g.~Mixed Volume, or counting the roots with %
given
abscissa %
by Lemma \ref{lem:count-alpha}.

\begin{theorem} \label{th:grid_solve}
  Isolating all real roots of system $F = G = 0$ using \func{grid} 
  has complexity $\sOB(n^{14}+n^{13}\sigma)$, provided $\sigma=O(n^3)$;
  or in $\sOB(N^{14})$, where $N = \max\{n, \sigma\}$.
\end{theorem}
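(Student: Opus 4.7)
The algorithm splits cleanly into three phases: (i) compute the two resultants $R_x=\res_y(F,G)$ and $R_y=\res_x(F,G)$; (ii) isolate the real roots of each resultant univariately; (iii) for each of the $O(n^4)$ candidate pairs $(\alpha,\beta)\in L_x\times L_y$, certify membership in the solution set by calling \func{sign\_at} on both $F$ and $G$. I would bound each phase separately and then sum, using $\sigma=O(n^3)$ to collapse the resulting expression at the end.

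\textbf{Phase (i): resultants.} Both resultants are among the polynomials produced by a signed subresultant sequence in $x$ (resp.\ $y$), so by Corollary \ref{cor-biv-SR-fast-computation} (with $p=q=n$, $d=n$, $\tau=\sigma$) each is computed in $\sOB(n^4\sigma)$. By Corollary \ref{cor:biv-SR-computation} the resultants lie in $\ZZ[x]$ (resp.\ $\ZZ[y]$) with degree $O(n^2)$ and bitsize $O(n\sigma)$; this is the data that feeds the next phase.

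\textbf{Phase (ii): univariate solving.} Applying Proposition \ref{pr:solve-1} to each resultant with $p=O(n^2)$, $\tau_f=O(n\sigma)$ yields isolation in $\sOB((n^2)^6+(n^2)^4(n\sigma)^2)=\sOB(n^{12}+n^{10}\sigma^2)$. This also gives the bounds I need later on the endpoints of the isolating intervals: bitsize $O((n^2)^2+n^2\cdot n\sigma)=O(n^4+n^3\sigma)$, which under $\sigma=O(n^3)$ is $O(n^6)$ and in any case compatible with the hypotheses of \func{sign\_at}. Note that $L_x$ and $L_y$ each contain at most $O(n^2)$ elements.

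\textbf{Phase (iii): matching.} For each of the $O(n^4)$ pairs I invoke Algorithm \ref{alg:sign-at-2} on $F$ and on $G$. In Theorem \ref{th:biv_sign_at} I take $n_1=n$ (the $x$- and $y$-degrees of $F,G$), $n_2=O(n^2)$ (the degree of the defining polynomials of $\alpha,\beta$), and $\sigma\leftarrow O(n\sigma)$ (the maximum bitsize among $F$, $A$, $B$; note the hypothesis $n_1\le n_2$ is satisfied). Each call costs $\sOB(n_1^2n_2^3\sigma)=\sOB(n^2\cdot n^6\cdot n\sigma)=\sOB(n^9\sigma)$, and the $O(n^4)$-fold aggregation gives $\sOB(n^{13}\sigma)$ for the whole matching phase. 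This is the dominant cost and the main bottleneck of the plan: the grid visits quadratically many spurious pairs, and no known trick removes that factor without abandoning \func{grid}.

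\textbf{Summation.} Adding the three phases gives
\[
\sOB\bigl(n^4\sigma\bigr)+\sOB\bigl(n^{12}+n^{10}\sigma^2\bigr)+\sOB\bigl(n^{13}\sigma\bigr).
\]
Assuming $\sigma=O(n^3)$, the term $n^{10}\sigma^2=O(n^{10}\cdot n^3\cdot \sigma)=O(n^{13}\sigma)$ is absorbed, $n^4\sigma$ is dominated by $n^{13}\sigma$, and $n^{12}\le n^{14}$, producing the claimed $\sOB(n^{14}+n^{13}\sigma)$; substituting $N=\max\{n,\sigma\}$ yields $\sOB(N^{14})$. Finally, multiplicity computation is handled by performing a generic shear so that the system is in generic position along $x$ (degrees and bitsizes change only by constant factors and a logarithmic penalty already hidden in $\sOB$), and reading multiplicities off the degrees of the surviving subresultant factors; this adds no new dominant term, so the overall bound is preserved.
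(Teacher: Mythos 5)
Your decomposition into resultant computation, univariate isolation, and grid matching via \func{sign\_at} is exactly the paper's proof, and the final bound you reach is the right one. There is, however, one quantitative slip in Phase (iii) worth fixing: when you instantiate Theorem \ref{th:biv_sign_at} you take the bitsize parameter to be $O(n\sigma)$, ``the maximum bitsize among $F$, $A$, $B$.'' But $A$ and $B$ are not the resultants themselves (bitsize $O(n\sigma)$); they are the \emph{square-free parts} of $R_x$ and $R_y$, and by Proposition \ref{pr:solve-1} (Mignotte) these have bitsize $O(n^2+n\sigma)$. With the correct value, each call to \func{sign\_at} costs $\sOB\bigl(n_1^2 n_2^3(n^2+n\sigma)\bigr)=\sOB(n^{10}+n^9\sigma)$, and the $O(n^4)$-fold aggregation gives $\sOB(n^{14}+n^{13}\sigma)$ for the matching phase. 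This matters for the bookkeeping: the $n^{14}$ term in the statement genuinely arises from the matching phase, rather than appearing only because you padded the $n^{12}$ from univariate solving up to $n^{14}$ at the end. (Your own endpoint bound $O(n^4+n^3\sigma)$ is the one consistent with the corrected bitsize, so the fix is local.)

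Two smaller remarks. First, the hypothesis $\sigma=O(n^3)$ is what lets $n^{10}\sigma^2$ from univariate solving be absorbed into $n^{13}\sigma$, as you note; with the corrected matching cost this absorption is still needed and still works. Second, your closing sentence on multiplicities is not required for this statement (the paper isolates it as Theorem \ref{th:grid-multiplicities}) and as written it glosses over the nontrivial step of matching the roots of the sheared resultant back to the already-isolated roots of the original system; the paper spends a separate argument on that. For the theorem as stated, you can simply drop that sentence.
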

\begin{proof}
  We
  compute %
  resultant 
  $R_x$ of $F, G$ w.r.t.~$y$
  (line~\ref{a:grid-Rx} in Algorithm \ref{alg:grid-solve}).
  The complexity is 
  $\sOB(n^4 \sigma)$, using Corollary \ref{cor-biv-SR-fast-computation},
  with $p = q = d = n$ and $\tau = \sigma$.
  Notice that $\dg( R_x) = \OO(n^2)$, $\bitsize{ R_x} = \OO(n\, \sigma)$.
  We isolate its real roots
  in $\sOB(n^{12} + n^{10} \sigma^2)$ (Proposition \ref{pr:solve-1})
  and store them in $L_x$.
  This complexity shall be dominated. %
  We do the same for the $y$ axis
  (lines~\ref{a:grid-Ry} and \ref{a:grid-Yalg} in Algorithm \ref{alg:grid-solve})
  and store the roots in $L_y$.

  The representation of the %
  algebraic numbers %
  contains the square-free part of $R_x$ or $R_y$,
  which has
  the bitsize %
  $\OO( n^2 + n\, \sigma)$ \cite{BPR06,emt-lncs-2006}.
  The isolating intervals have endpoints of bitsize $\OO( n^4 + n^3 \, \sigma)$.
  Let $r_x$, %
  $r_y$ be the 
  number of real roots of the corresponding %
  resultant,
  both in
  $\OO( n^2)$.
  For every pair of
  algebraic numbers from $L_x$
  and $L_y$,
  we test whether $F, G$ vanish
  using \func{sign\_at} (Theorem \ref{th:biv_sign_at} and Algorithm \ref{alg:sign-at-2}).
  Each %
  test
  costs $\sOB(n^{10} + n^{9}\sigma)$
  and we perform $r_x \, r_y = O(n^4)$ of them.
\end{proof}

We now examine the multiplicity of a root $(\alpha,\beta)$ of the system.
Refer to \cite[Section II.6]{BrKn} %
for its definition as the exponent of factor $(\beta x-\alpha y)$
in the resultant of the (homogenized) polynomials, under certain assumptions.
Previous work includes \cite{VegKah:curve2d:96,sakkalis90:alg-curves,WolSei:topology:05}.
Our algorithm reduces to bivariate sign determination and
does not require bivariate factorization.  
The sum of multiplicities of all roots $(\alpha,\beta_j)$
equals the multiplicity of $x=\alpha$ in the respective resultant.
We apply a shear transform
so as to ensure
that different roots project to different points on the $x$-axis.  

\subsubsection{Deterministic shear}\label{sec:determ-shear}
We determine an adequate (horizontal) shear %
such that
\begin{equation}\label{EshearedRes}
  R_t(x) = \res_y \left( F(x+ty,y), G(x+ty,y) \right) , 
\end{equation}
has simple roots corresponding to the projections
of the common roots of the system $F(x,y)=G(x,y)=0$, when $t\mapsto t_0\in\ZZ$,
and the degree of the polynomials remains the same.
Notice that this shear does not affect inherently multiple roots, which exist
independently of the reference frame.
$R_{red}\in (\ZZ[t])[x]$ is the squarefree part of the resultant,
as an element of UFD $(\ZZ[t])[x]$, and its discriminant, with respect to $x$, is
$\Delta\in\ZZ[t]$.
Then $t_0$ must be such that $\Delta(t_0)\ne 0$.

\begin{lemma} \label{lem:shear-value}
  Computing $t_0\in\ZZ$, such that the corresponding shear is
  sufficiently generic, has complexity $\sOB(n^{10} + n^{9}\sigma)$.
\end{lemma}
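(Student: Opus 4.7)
The plan is to compute $\Delta(t)\in\ZZ[t]$ explicitly and then find the smallest non-negative integer $t_0$ at which $\Delta$ does not vanish. We first form $\tilde F = F(x+ty,y)$ and $\tilde G = G(x+ty,y)$ in $\ZZ[t,x,y]$; because the substitution preserves total degree, both polynomials have $y$-degree $n$, $(t,x)$-coefficient-degree at most $n$, and bitsize $O(\sigma+n)$. Computing $R_t = \res_y(\tilde F,\tilde G)\in\ZZ[t,x]$ by the multivariate subresultant routine of Proposition~\ref{prop:multi-SR-fast-computation} with $k=2$ takes $\sOB(n^6\sigma+n^7)$ and yields $R_t$ whose $x$-degree and $t$-degree are both $O(n^2)$ and whose bitsize is $O(n\sigma+n^2)$; these are the Bezout-type bounds.

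Next, we extract the squarefree part $R_{red}\in (\ZZ[t])[x]$ by computing the bivariate GCD $\gcd_x(R_t,\partial_x R_t)$ using Corollary~\ref{cor-biv-SR-fast-computation} and dividing. With $p=q=d=O(n^2)$ and $\tau = O(n\sigma+n^2)$, this step costs $\sOB(n^{10}+n^9\sigma)$; by Mignotte's coefficient bound, $R_{red}$ inherits the same asymptotic degree/bitsize profile as $R_t$. We then compute $\Delta = \disc_x(R_{red})$, equivalently $\res_x(R_{red},\partial_x R_{red})$ up to the leading coefficient in $x$, by a second application of Corollary~\ref{cor-biv-SR-fast-computation} with the same parameters, again costing $\sOB(n^{10}+n^9\sigma)$. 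The resulting $\Delta\in\ZZ[t]$ has degree $O(n^4)$ and bitsize $O(n^3\sigma+n^4)$.

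Since $\Delta$ has at most $O(n^4)$ integer roots, a good $t_0$ is guaranteed to exist in $\{0,1,\ldots,cn^4\}$ for a suitable constant $c$. We evaluate $\Delta$ at all these integers using fast multipoint evaluation; because the evaluation points have bitsize $O(\log n)$, standard subproduct-tree techniques yield a cost of $\sOB(n^7\sigma+n^8)$, which is dominated by the preceding steps. Summing, the total is $\sOB(n^{10}+n^9\sigma)$, as claimed.

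The hard part is the bookkeeping of degrees and bitsizes through a composition of one trivariate resultant, one bivariate GCD, and one bivariate resultant. The two middle steps are the bottleneck and must each land exactly at $\sOB(n^9\sigma+n^{10})$; this is tight given that both of the relevant dimensions fed into Corollary~\ref{cor-biv-SR-fast-computation} are $O(n^2)$. A subtle point is the invocation of Mignotte's bound to prevent the GCD step from inflating $R_{red}$'s coefficients beyond $O(n\sigma+n^2)$, which would otherwise break the final discriminant estimate.
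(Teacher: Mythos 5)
Your proposal follows essentially the same route as the paper: apply the shear with a symbolic parameter $t$, compute $R_t=\res_y$ in $\ZZ[t,x]$, take its squarefree part and then the discriminant with respect to $x$ via the bivariate subresultant bounds (each step landing at $\sOB(n^{10}+n^9\sigma)$), and finally locate a non-root of $\Delta$ among $O(n^4)$ small integers by fast multipoint evaluation. The minor discrepancies in the costs of the dominated steps (the initial trivariate resultant and the multipoint evaluation) do not affect the final bound.
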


\begin{proof} %
  Suppose $t_0$ is such that the degree does not change.
  It suffices to find, among $n^4$ integer numbers, one that does not
  make $\Delta$ vanish; note that all candidate values are of bitsize $\OO(\log n)$.

  We perform the substitution $(x, y) \mapsto (x + t y, y)$ to $F$ and
  $G$ and compute the resultant w.r.t.\ $y$ 
  in $\sOB( n^5  \sigma)$, which lies in $\ZZ[t, x]$,
  of degree $\OO( n^2)$ and bitsize $\sO( d \sigma)$ (Proposition \ref{prop:multi-SR-fast-computation}).
  We consider this polynomial as univariate in $x$ and compute
  its square-free part, and then the discriminant of its
  square-free part. Both operations cost $\sOB( n^{10} + n^9 \sigma)$
  and the discriminant is a polynomial in $\ZZ[t]$ of degree 
  $\OO( n^4)$ and bitsize $\sO(d^4 + d^3 \sigma)$ (Corollary \ref{cor-biv-SR-fast-computation}).

  We can evaluate the discriminant over all the first $n^4$ positive integers,
  in $\sOB( n^8 + n^3 \sigma)$, using the multipoint evaluation algorithm, 
  see  %
  e.g.~\cite{vzGGer}.
  Among these integers, there is at least one that is not a root of
  the discriminant.
\end{proof}

The idea here is to use explicit candidate values of $t_0$ right from the
start.
In practice, the above complexity becomes $\sOB(n^{5} \sigma)$, because a
constant number of tries or a random value will typically suffice.
For an alternative approach, see \cite{VegNec:topology:02}, and \cite{BPR06}.
It is straightforward to compute the multiplicities of the
sheared system.
Then, we need to match the latter with the roots of the original system, which is nontrivial in practice.

\begin{theorem} \label{th:grid-multiplicities}
  Consider the setting of Theorem \ref{th:grid_solve}.
  Having isolated all real roots of $F = G = 0$, it is possible to
  determine their multiplicities in
  $\sOB(n^{12} + n^{11} \sigma + n^{10} \sigma^2)$.
\end{theorem}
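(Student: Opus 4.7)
The plan is to exploit a generic horizontal shear so that intersection multiplicities at the real common roots of $F=G=0$ coincide with ordinary root multiplicities in a single univariate polynomial on $\ZZ[x]$. With the integer $t_0$ supplied by Lemma~\ref{lem:shear-value}, setting $\tilde F(x,y)=F(x+t_0y,y)$ and $\tilde G(x,y)=G(x+t_0y,y)$, the sheared system has pairwise distinct $x$-projections at all its complex common roots, so that
\[
  R_{t_0}(x)\;=\;\res_y(\tilde F,\tilde G)\;=\;c\,\prod_j\bigl(x-(\alpha_j-t_0\beta_j)\bigr)^{I_j},
\]
where $(\alpha_j,\beta_j)$ ranges over the complex common roots of $F=G=0$ and $I_j$ is the intersection multiplicity at $(\alpha_j,\beta_j)$. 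Because intersection multiplicity is invariant under affine change of coordinates, the $I_j$ at real roots reduce to univariate root multiplicities of $R_{t_0}$, followed by a matching step that links each real pair $(\alpha_j,\beta_j)$ returned by \func{grid} with its sheared $x$-projection.

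First, Lemma~\ref{lem:shear-value} yields $t_0\in\ZZ$ with $\bitsize{t_0}=\OO(\log n)$ in $\sOB(n^{10}+n^9\sigma)$. Expanding $F(x+t_0y,y)$ shows the sheared polynomials still have total degree $n$ and bitsize $\tilde\sigma=\OO(\sigma+n\log n)$. Computing $R_{t_0}$ via Corollary~\ref{cor-biv-SR-fast-computation} costs $\sOB(n^4\tilde\sigma)$, which is dominated, and yields a univariate polynomial of degree $\OO(n^2)$ and bitsize $\OO(n\tilde\sigma)=\OO(n\sigma+n^2\log n)$. The dominant step is then to apply Proposition~\ref{pr:solve-1} to $R_{t_0}$ so as to isolate its real roots together with their multiplicities. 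With $p=\OO(n^2)$ and $\tau_f=\OO(n\sigma+n^2)$, the cost is
\[
  \sOB\bigl(p^6+p^4\tau_f^2\bigr)\;=\;\sOB\bigl(n^{12}+n^{10}(n\sigma+n^2)^2\bigr)\;=\;\sOB(n^{12}+n^{11}\sigma+n^{10}\sigma^2),
\]
which matches the claimed bound.

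It remains to associate each real pair $(\alpha_j,\beta_j)$ from \func{grid} with its unique sheared $x$-projection among the isolated real roots of $R_{t_0}$, so as to read off the attached multiplicity. The isolating intervals of $\alpha_j$ and $\beta_j$ have endpoints of bitsize $\OO(n^4+n^3\sigma)$ by Proposition~\ref{pr:solve-1}, and straightforward interval arithmetic yields an approximation for $\alpha_j-t_0\beta_j$ at the same scale; since by Mahler--Mignotte the root separation of $R_{t_0}$ is likewise bounded below by $2^{-\OO(n^4+n^3\sigma)}$, a polylogarithmic amount of further refinement per pair suffices to pinpoint the matching root. Summed over the $\OO(n^2)$ real common roots, and using Lemma~\ref{lem:very-important} to control aggregate refinement, this matching cost is dominated by the isolation step above.

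The main obstacle will be making the matching phase rigorous within the stated budget. The isolating intervals from \func{grid} and those produced for $R_{t_0}$ need not be directly compatible in width, and one must verify that the extra bisections and interval comparisons required to align them fit under $\sOB(n^{12}+n^{11}\sigma+n^{10}\sigma^2)$ rather than inflating it; the aggregate separation bound of Lemma~\ref{lem:very-important} is precisely what prevents an extra factor in the total refinement cost across all pairs.
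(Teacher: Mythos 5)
Your proposal follows essentially the same route as the paper: obtain $t_0$ from Lemma~\ref{lem:shear-value}, isolate the roots of $R_{t_0}$ with multiplicities via Proposition~\ref{pr:solve-1} (the dominant $\sOB(n^{12}+n^{10}\sigma^2)$ step), and match each real pair $(\alpha_i,\beta_i)$ to a root of $R_{t_0}$ through the interval $[a_i,a_i']-t_0[b_i,b_i']$, with the aggregate separation bound of Lemma~\ref{lem:very-important} controlling the total refinement cost. The paper's own treatment of the matching phase is the same idea made slightly more explicit (sorting and merging the shifted intervals, then bounding the total number of subdivision steps by $\OO(\sum_i s_i)$ with each evaluation of $R$ costing $\sOB(n^4 s_i)$), so your argument is correct and not a different approach.
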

\begin{proof}
  By the previous lemma, $t\in\ZZ$ is determined, with $\bitsize t = \OO(\log n)$,
  in $\sOB(n^{10} + n^{9}\sigma)$.
  Using this value, we isolate all the real roots of $R_t(x)$, defined in~(\ref{EshearedRes}),
  and determine their multiplicities in $\sOB( n^{12} + n^{10} \sigma^2 )$ (Proposition \ref{pr:solve-1}).  
  Let $\rho_j \simeq ( R_t(x), [r_j ,r_j'] )$ be the real roots, for $j=0,\dots,r-1$.

  By assumption, we have already isolated the roots of the system, denoted by
  $(\alpha_i,\beta_i)\in [a_i,a_i']\times [b_i,b_i']$, where $a_i,a_i',b_i,b_i'\in\QQ$
  for $i=0,\dots,r-1$.
  It remains to match each pair $(\alpha_i,\beta_i)$ to a unique $\rho_j$ by
  determining function $\phi:\{0,\dots,r-1\}\rightarrow\{0,\dots,r-1\}$,
  such that $\phi(i)=j$ iff $(\rho_j,\beta_i)\in\ALG^2$ is a root of the sheared system
  and $\alpha_i=\rho_j+t\beta_i$.

  Let $[c_i,c_i']= [a_i,a_i'] - t[b_i,b_i'] \in\QQ^2$.  
  These intervals may be overlapping.
  Since the endpoints have bitsize $\OO(n^4 + n^3\sigma)$, the intervals $[c_i,c_i']$
  are sorted in $\sOB(n^6 + n^5\sigma)$.
  The same complexity bounds the operation of merging this interval list with
  the list of intervals $[r_j,r_j']$.
  If there exist more than one $[c_i,c_i']$ overlapping with some $[r_j,r_j']$,
  some subdivision steps are required so that the intervals reach the bitsize of
  $s_j$, where $2^{s_j}$ bounds the separation distance associated to the $j$-th root.
  By Proposition \ref{lem:very-important}, $\sum_i s_i= \OO(n^4 + n^3 \sigma)$.

  Our analysis resembles that of~\cite{emt-lncs-2006} for proving Proposition \ref{pr:solve-1}.
  The total number of steps is $\OO(\sum_i s_i)= \OO(n^4 + n^3 \sigma)$,
  each requiring an evaluation of $R(x)$ over an endpoint of size $\le s_i$.
  This evaluation costs $\sOB(n^4 s_i)$, leading to an overall cost of
  $\sOB(n^8 + n^7\sigma)$ per level of the tree of subdivisions.
  Hence, the overall complexity is bounded by $\sOB(n^{12} + n^{11} \sigma + n^{10} \sigma^2)$.
\end{proof}

\subsection{The \func{m\_rur} algorithm}
\func{m\_rur} assumes that the polynomials are in Generic Position:
different roots project to different $x$-coordinates and leading coefficients
w.r.t.~$y$ have no common real roots.
\begin{proposition}   \label{pr:SR-rur}
  \cite{VegKah:curve2d:96,BPR06}
  Let $F, G$ be co-prime polynomials, in generic position.
  If $\SR_j(x, y) = \sr_{j}(x) y^j + \sr_{j, j-1}(x) y^{j-1}$ $+ \dots + \sr_{j, 0}(x)$,
  and $(\alpha, \beta)$ is a real solution of the system $F = G = 0$,
  then there exists $k$, such that
  $\sr_0(\alpha) = \dots = \sr_{k-1}(\alpha) = 0$,
  $\sr_k(\alpha) \neq 0$ and 
  $\beta = -\frac{1}{k}\frac{\sr_{k, k-1}(\alpha)}{\sr_k(\alpha)}$.
\end{proposition}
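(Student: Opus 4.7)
The plan is to reduce the bivariate statement to univariate subresultant theory by specializing the input polynomials at $x = \alpha$. Set $P(y) := F(\alpha, y)$ and $Q(y) := G(\alpha, y)$, viewed as polynomials in $\ALG[y]$. Since $(\alpha,\beta)$ is a common root of $F$ and $G$, the factor $(y - \beta)$ divides $\gcd(P, Q)$. The first nontrivial step is to exploit generic position: by hypothesis, distinct complex common roots of $F$ and $G$ project to distinct $x$-coordinates, so $\beta$ is the unique complex root of $\gcd(P, Q)$. Consequently $\gcd(P, Q) = c\,(y-\beta)^k$ for some nonzero constant $c$ and some integer $k \geq 1$, where $k = \deg \gcd(P, Q)$.

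Next I would invoke the specialization property of signed subresultants. Generic position forces the leading coefficients $\mathrm{lc}_y(F)$ and $\mathrm{lc}_y(G)$ not to vanish simultaneously at $\alpha$, and in fact the standard specialization theorem applies when the $y$-degrees of $F$ and $G$ are preserved by evaluation at $\alpha$. Under this condition, subresultants commute with specialization, so $\SR_j(F, G)$ evaluated at $x = \alpha$ is (up to a nonzero scalar coming from the leading coefficients) the $j$-th subresultant of $P$ and $Q$. The classical gap structure theorem, see e.g.\ \cite{BPR06,GathenLucking:Subresultants:03}, then yields $\sr_j(\alpha) = 0$ for $j < k$ together with $\sr_k(\alpha) \neq 0$, and further identifies $\SR_k(F,G)(\alpha, y)$ as a nonzero scalar multiple of $\gcd(P,Q)$; combining with the previous paragraph, $\SR_k(F,G)(\alpha, y) = \sr_k(\alpha)\,(y-\beta)^k$.

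The final step is a coefficient comparison. Expanding the right-hand side via the binomial theorem and matching the coefficient of $y^{k-1}$ against $\sr_k(\alpha)\,y^k + \sr_{k,k-1}(\alpha)\,y^{k-1} + \cdots$ gives $\sr_{k,k-1}(\alpha) = -k\beta\,\sr_k(\alpha)$, which rearranges to the announced formula for $\beta$. The main obstacle is the bookkeeping required to justify that specialization of the signed subresultants at $x = \alpha$ behaves as if the subresultants were computed from $P$ and $Q$ directly; this can fail when leading coefficients collapse, and the role of the generic position hypothesis is precisely to rule out such degenerate specializations at any $\alpha$ that is an $x$-projection of a common root.
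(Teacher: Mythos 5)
The paper states this proposition without proof, simply citing \cite{VegKah:curve2d:96,BPR06}; your argument is exactly the standard one from those sources (specialize at $x=\alpha$, use generic position to conclude $\gcd(F(\alpha,y),G(\alpha,y))=c(y-\beta)^k$, invoke the specialization and gap-structure theorems for subresultants, and compare the coefficients of $y^{k}$ and $y^{k-1}$), and it is correct, including your observation that the formula is a ratio and hence insensitive to the nonzero scalar introduced by specialization when only one leading coefficient survives. The one point worth making explicit is that the projection condition in ``generic position'' must hold for \emph{complex} common roots, not merely real ones: otherwise $\gcd(F(\alpha,y),G(\alpha,y))$ may have non-real roots, and the quantity $-\frac{1}{k}\sr_{k,k-1}(\alpha)/\sr_k(\alpha)$ is then the multiplicity-weighted average of all roots of the gcd rather than $\beta$.
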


This expresses the ordinate of a solution
in a Rational Univariate Representation (RUR) of the abscissa.  
The RUR applies to multivariate algebraic systems
\cite{Ren89,Can88pspace,Rou:rur:99,BPR06};
by generalizing
the primitive element method %
by Kronecker.
Here we adapt it to small-dimensional systems.

Our algorithm is similar to 
\cite{VegNec:topology:02,VegKah:curve2d:96}.
However, their algorithm
computes only a RUR using Proposition \ref{pr:SR-rur}, so
the representation of the ordinates remains implicit.
Often, this representation is not sufficient
(we can always compute the minimal polynomial of the roots, 
but this is highly inefficient).
We modified the algorithm \cite{et-casc-2005},
so that the output includes isolating rectangles,
hence the name modified-RUR (\func{m\_rur}).  
The most important difference with
\cite{VegKah:curve2d:96} is that they represent 
algebraic numbers by Thom's encoding %
while we use isolating intervals, which were thought of having high theoretical complexity. 

The pseudo-code of \func{m\_rur} is in Algorithm \ref{alg:mrur-solve}.  
We project on the $x$ and the $y$-axis; for each real solution on the $x$-axis 
we compute its ordinate using Proposition \ref{pr:SR-rur}.  
First we compute the sequence $\SR( F, G)$ w.r.t. $y$ 
in $\sOB(n^5 \, \sigma)$ (Corollary \ref{cor:biv-SR-computation}).  

\begin{myalgorithm}[tbp]
  \dontprintsemicolon
  \linesnumbered
  \SetFuncSty{textsc}
  \SetKw{RET}{{\sc return}} 
  \SetKw{OUT}{{\sc output \ }} 
  \SetVline \KwIn{$F, G \in \ZZ[X, Y]$ in generic position}
  \KwOut{ The real solutions of the system $F = G = 0$}

  \caption{\func{m\_rur}($F, G$)}

  $\SR \leftarrow \SR_y( F, G)$ \;                    \nllabel{a:mrur-SR}

  \tcc{Projections and real solving with multiplicities}
  $R_x \leftarrow \res_y( F, G)$ \;                   \nllabel{a:mrur-Rx}
  $P_x, M_x \leftarrow \FuncSty{solve}( R_x)$ \;  \nllabel{a:mrur-Xalg}

  $R_y \leftarrow \res_x( F, G)$ \;                       \nllabel{a:mrur-Ry}
  $P_y, M_y \leftarrow \FuncSty{solve}( R_y)$ \;      \nllabel{a:mrur-Yalg}
  $I \leftarrow \FuncSty{intermediate\_points}( P_y)$ \;  \nllabel{a:mrur-IntPt}

  \tcc{Factorization of $R_x$ according to $\sr$ }
   $K \leftarrow \FuncSty{compute\_k}(\SR, P_x)$ \; \nllabel{a:mrur-k}

  $Q \leftarrow \emptyset$ \;
  \tcc{Matching the solutions}
  \ForEach{ $\alpha \in P_x$ }{       \nllabel{a:mrur-loop}

    $\beta \leftarrow \FuncSty{find}( \alpha, K, P_y, I)$ \;
    \nllabel{a:mrur-find}

    $Q \leftarrow \FuncSty{add}( Q, \{ \alpha, \beta \})$\;
    \nllabel{a:mrur-add-sol}
  }\;
  
  \RET $Q$\;
  \label{alg:mrur-solve}
\end{myalgorithm}

\paragraph*{Projection.}
This is similar to \func{grid}.
The complexity is dominated by real solving the resultants, 
i.e. $\sOB( n^{12} + n^{10} \, \sigma^2)$.
Let $\alpha_i$, resp.~$\beta_j$, be the real root coordinates.
We compute rationals $q_j$ 
between the $\beta_j$'s in $\sOB( n^5 \sigma)$,
viz. $\func{intermediate\_points}( P_y)$;
the $q_j$ have aggregate bitsize $\OO( n^3\, \sigma)$ (Lemma \ref{lem:very-important}):
\begin{equation}
  q_0 < \beta_1 < q_1 < \beta_2 < \dots < 
  \beta_{\ell-1} < q_{\ell-1} < \beta_\ell <  q_{\ell} , \label{eq:y-sols}
\end{equation}
where $\ell \leq 2\, n^2$.
Every $\beta_j$ corresponds to a unique $\alpha_i$.
The multiplicity of $\alpha_i$ as a root of $R_x$ is 
the multiplicity of a real solution of the system, that has it as abscissa.

\paragraph*{Sub-algorithm} \func{compute\_k}.
In order to apply Proposition \ref{pr:SR-rur}, for every $\alpha_i$ we
must compute $k\in\NN^*$ such that the assumptions of the theorem are fulfilled;
this is possible by genericity.
We follow \cite{mpsttw-aicg-06,VegKah:curve2d:96} and
define recursively polynomials $\Gamma_{j}(x)$:
Let $\Phi_{0}(x) = \frac{\sr_{0}(x)}{\pgcd(\sr_{0}(x), \sr_{0}'(x))}$,
$\Phi_{j}(x) = \pgcd(\Phi_{j-1}(x), \sr_{j}(x))$, and  $\Gamma_{j} = \frac{\Phi_{j-1}(x)}{\Phi_{j}(x)}$,
for $j > 0$.
Now $\sr_i(x) \in \ZZ[x]$ is the principal subresultant coefficient of
$\SR_i \in (\ZZ[x])[y]$, and $\Phi_{0}(x)$ is the square-free part of $R_x = \sr_{0}(x)$.
By construction, $\Phi_0(x) = \prod_j{ \Gamma_j}(x)$
and $\pgcd( \Gamma_j, \Gamma_i) = 1$, if $j \ne i$.
Hence every $\alpha_i$ is a root of a unique $\Gamma_j$ and the latter
switches sign at the interval's endpoints. %
Then, $\sr_0(\alpha) = \sr_{1}(\alpha) = 0, \dots, \sr_{j}(\alpha) =0$,
$\sr_{j+1}(\alpha) \neq 0$; thus $k = j+1$.

It holds that $\dg( \Phi_0) = \OO( n^2)$ and
$\bitsize{ \Phi_0} = \OO( n^2 + n\,\sigma)$.
Moreover, $\sum_j{ \dg( \Gamma_j)} = \sum_j{\delta_j} = \OO( n^2)$ 
and, by Mignotte's bound \cite{MignotteStefanecu},
$\bitsize{ \Gamma_j} = \OO( n^2 + n \sigma)$.
To compute the factorization $\Phi_0(x) = \prod_j{ \Gamma_j}(x)$ 
as a product of the $\sr_j(x)$,
we perform $\OO( n)$ gcd computations of polynomials of degree 
$\OO(n^2)$ and bitsize $\sO( n^2 + n \sigma)$.
Each gcd computation costs $\sOB( n^6 + n^5 \, \sigma)$ (Proposition \ref{pr:SR-computation})
and thus the overall cost is $\sOB( n^7 + n^6 \, \sigma)$.

We compute the sign of the $\Gamma_j$ over all the $O(n^2)$ isolating endpoints
of the $\alpha_i$, which have aggregate bitsize 
$\OO(n^4 + n^3 \, \sigma)$ (Lemma \ref{lem:very-important})
in $\sOB( \delta_j n^4  + \delta_j n^3 \sigma + \delta_j^2 (n^4 + n^3 \sigma))$,
using Horner's rule.
Summing over all $\delta_j$, the complexity is 
$\sOB( n^8 + n^7 \sigma)$.
Thus the overall complexity is $\sOB( n^9 + n^8 \, \sigma)$.  

\paragraph*{Matching and algorithm} \func{find}.
The process takes a real root of $R_x$ and computes the ordinate $\beta$
of the corresponding root of the system.
For some real root $\alpha$ of $R_x$ we represent the ordinate
$  A(\alpha) = -\frac{1}{k} \frac{\sr_{k,k-1}(\alpha)}{\sr_{k}(\alpha)} 
  =\frac{A_1(\alpha)}{A_2(\alpha)} .$
The generic position assumption guarantees that there is a unique 
$\beta_j$, in $P_y$, such that 
$\beta_j = A( \alpha)$, where $1 \leq j \leq \ell$.
In order to compute $j$ we use~(\ref{eq:y-sols}):
$  q_j <  A( \alpha) = \frac{A_1(\alpha)}{ A_2(\alpha)} = \beta_j < q_{j + 1}$.
Thus $j$ can be computed by binary search in $\OO( \lg{\ell}) = \OO( \lg{n})$
comparisons of $A( \alpha)$ with the $q_j$.
This is equivalent to computing the sign of 
$B_j( X) = A_1( X) -  q_j \, A_2(X)$ over $\alpha$ by executing
$\OO( \lg{ n})$ times, $\func{sign\_at}( B_j,  \alpha)$.

Now, $\bitsize{q_j} = \OO( n^4 + n^3 \sigma)$ 
and $\dg( A_1) = \dg(\sr_{k,k-1}) = \OO(n^2)$, 
$\dg(A_2) = \dg( \sr_k) = \OO(n^2)$,
$\bitsize{ A_1} = \OO(n \,\sigma)$, 
$\bitsize{ A_2} = \OO(n \, \sigma)$.
Thus $\dg( B_j) = \OO( n^2)$ and $\bitsize{ B_j} = \OO( n^4 + n^3\, \sigma)$.  
We conclude that $\func{sign\_at}( B_j,  \alpha)$ 
and \func{find} have complexity $\sOB( n^8 + n^7 \sigma)$
(Corollary \ref{cor:sign-at-1}).
As for the overall complexity of the loop
(Lines \ref{a:mrur-loop}-\ref{a:mrur-add-sol}) the complexity is
$\sOB( n^{10} + n^9 \sigma)$,
since it is executed $\OO( n^2)$ times.

\begin{theorem} \label{th:mrur-solve}
  We isolate all real roots of $F = G = 0$, 
  if $F$, $G$ are in generic position, 
  by \func{m\_rur} in $\sOB( n^{12} + n^{10} \sigma^2)$;
  or in $\sOB(N^{12})$, where $N = \max\{n, \sigma\}$.
\end{theorem}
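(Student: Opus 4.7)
The plan is to obtain the bound by aggregating the complexity of each phase of \func{m\_rur} that has already been analyzed in the preceding discussion, and then identifying the dominating term. Since all the work for the individual subroutines is already laid out (signed subresultant computation, the two projection-and-solve steps, the polynomial factorization via \func{compute\_k}, and the matching loop via \func{find}), the proof is essentially a bookkeeping exercise.

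Concretely, I would first recall that computing $\SR_y(F, G)$ on line~\ref{a:mrur-SR} costs $\sOB(n^5 \sigma)$ by Corollary~\ref{cor:biv-SR-computation}. Next, the two projection steps (lines~\ref{a:mrur-Rx}--\ref{a:mrur-Yalg}) require computing the resultants $R_x$ and $R_y$, which lie in $\ZZ[x]$, resp.\ $\ZZ[y]$, of degree $\OO(n^2)$ and bitsize $\OO(n\sigma)$; real-solving each with multiplicities via Proposition~\ref{pr:solve-1} yields a cost of $\sOB(n^{12} + n^{10}\sigma^2)$, and this term turns out to be the bottleneck. The intermediate-points step on line~\ref{a:mrur-IntPt} is absorbed (aggregate bitsize $\OO(n^3 \sigma)$ by Lemma~\ref{lem:very-important}, total cost $\sOB(n^5 \sigma)$).

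Then I would invoke the cost of \func{compute\_k}, which by the analysis just above the theorem is $\sOB(n^9 + n^8 \sigma)$, since it consists of $\OO(n)$ gcd computations on polynomials of degree $\OO(n^2)$ and bitsize $\sO(n^2 + n\sigma)$, followed by sign evaluations of the $\Gamma_j$ at $\OO(n^2)$ rational endpoints whose aggregate bitsize is $\OO(n^4 + n^3 \sigma)$. Finally, the matching loop on lines~\ref{a:mrur-loop}--\ref{a:mrur-add-sol} invokes \func{find} at most $\OO(n^2)$ times, each at cost $\sOB(n^8 + n^7 \sigma)$ via Corollary~\ref{cor:sign-at-1} applied to the polynomial $B_j(X) = A_1(X) - q_j A_2(X)$ of degree $\OO(n^2)$ and bitsize $\OO(n^4 + n^3 \sigma)$, giving a total of $\sOB(n^{10} + n^9 \sigma)$ for the loop.

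Summing and keeping only the dominating contributions shows that every phase is dominated by the univariate real-solving step, which contributes $\sOB(n^{12} + n^{10}\sigma^2)$. Thus the overall complexity of \func{m\_rur} is $\sOB(n^{12} + n^{10}\sigma^2)$, and setting $N = \max\{n, \sigma\}$ collapses this into $\sOB(N^{12})$. The main (minor) obstacle is simply being careful that the aggregate-separation bound is correctly applied at every place where we must either evaluate polynomials at many rational points or sort/merge interval lists; once that is in hand, every other term genuinely sits below the $n^{12} + n^{10}\sigma^2$ threshold, so no subtle recombination of bounds is required.
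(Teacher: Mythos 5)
Your proposal is correct and follows essentially the same route as the paper: the theorem is obtained there exactly by aggregating the costs of the phases analyzed in the preceding text (subresultant sequence in $\sOB(n^5\sigma)$, projection and univariate solving of the degree-$\OO(n^2)$, bitsize-$\OO(n\sigma)$ resultants in $\sOB(n^{12}+n^{10}\sigma^2)$ via Proposition~\ref{pr:solve-1}, \func{compute\_k} in $\sOB(n^9+n^8\sigma)$, and the matching loop in $\sOB(n^{10}+n^9\sigma)$), with the univariate real-solving step dominating. Your bookkeeping and the identification of the bottleneck match the paper's argument.
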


The generic position assumption is without loss of generality since
we can always put the system in such position by applying a
shear transform; see Section \ref{sec:determ-shear} and also~\cite{BPR06,VegKah:curve2d:96}.
The bitsize of polynomials of the (sheared) system 
becomes $\sO( n + \sigma )$ \cite{VegKah:curve2d:96}
and does not change the bound of Theorem \ref{th:mrur-solve}.
However, now is raised the problem of expressing
the real roots in the original coordinate system
(see the proof of Theorem \ref{th:grid-multiplicities}).

\subsection{The \func{g\_rur} algorithm} 
\label{sec:grur}
In this section we present an algorithm that uses some ideas from \func{m\_rur} 
but also relies on GCD computations of polynomials with coefficients in an extension field
to achieve efficiency (hence the name \func{g\_rur}).
The pseudo-code of \func{g\_rur} is in Algorithm \ref{alg:grur-solve}.
For GCD computations with polynomials with coefficients in an extension field we use the
algorithm, and the \maple implementation, of \citet{HoeMon:gcd:02}. 

\begin{myalgorithm}[tbp]
  \dontprintsemicolon
  \linesnumbered
  \SetFuncSty{textsc}
  \SetKw{RET}{{\sc return}} 
  \SetKw{OUT}{{\sc output \ }} 
  \SetVline \KwIn{$F, G \in \ZZ[X, Y]$}
  \KwOut{ The real solutions of the system $F = G = 0$}

  \caption{\func{g\_rur}($F, G$)}

  \tcc{Projections and real solving with multiplicities}
  $R_x \leftarrow \res_y( F, G)$ \;                   \nllabel{a:grur-Rx}
  $P_x, M_x \leftarrow \FuncSty{solve}( R_x)$ \;  \nllabel{a:grur-Xalg}

  $R_y \leftarrow \res_x( F, G)$ \;                       \nllabel{a:grur-Ry}
  $P_y, M_y \leftarrow \FuncSty{solve}( R_y)$ \;      \nllabel{a:grur-Yalg}
  \tcc{$I$ contains the rationals $q_1 < q_2 < \cdots < q_{|I|}$}
  $I \leftarrow \FuncSty{intermediate\_points}( P_y)$ \;  \nllabel{a:grur-IntPt}

  $Q \leftarrow \emptyset$ \;
  \ForEach{ $\alpha \in P_x$ }{       \nllabel{a:grur-loop}
    $\overline{F} \leftarrow \FuncSty{SquareFreePart}(F(\alpha, y))$ \;
    $\overline{G} \leftarrow \FuncSty{SquareFreePart}(G(\alpha, y))$ \;
    $H \leftarrow \FuncSty{gcd}(\overline{F}, \overline{G}) \in (\ZZ[\alpha])[y]$ \;
    \For{$j\leftarrow 1$ \KwTo $|I| - 1$}{
        \If{$H(\alpha, q_j)\cdot H(\alpha, q_{j+1}) < 0$}{
            \tcc{$P_{y}[j]$ indicates the $j$-th element of $P_y$}
            $Q \leftarrow \FuncSty{add}(Q, \{\alpha, P_y[j]\})$ \;
        }
    }
  }\;
  
  \RET $Q$\;
  \label{alg:grur-solve}
\end{myalgorithm}

The first steps are similar to the previous algorithms:
We project on the axes, we perform real solving 
and compute the intermediate points on the $y$-axis.
The complexity is $\sOB( n^{12} + n^{10} \sigma^2)$.

For each $x$-coordinate, say $\alpha$,
we  compute the square-free part of $F( \alpha, y)$ and $G( \alpha, y)$,
say $\bar{F}$ and $\bar{G}$.
The complexity is that of computing the gcd with the derivative.
In  \cite{HoeMon:gcd:02} the cost is 
$\sOB(m M N D + m N^2 D^2 + m^2 k D)$, where
$M$ is the bitsize of the largest coefficient,
$N$ is the degree of the largest polynomial,
$D$ is the degree of the extension, 
$k$ is the degree of the gcd,  
and $m$ is the number of primes needed.
This bound does not assume fast multiplication algorithms,
thus, under this assumption, it becomes
$\sOB(m M N D + m N D + m k D)$.

In our case 
$M = \OO( \sigma)$, $N = \OO( n)$, $D = \OO( n^2)$,  
$k = \OO( n)$, and $m = \OO( n \sigma)$.
The cost is $\sOB( n^4 \sigma^2)$ and since we repeat it $\OO( n^2)$ times,
the overall cost is $\sOB( n^6 \sigma^2)$.
Notice the bitsize of the result is $\sOB( n + \sigma)$ \cite{BPR06}.

Now for each $\alpha$, we compute $H = \gcd( \bar{F}, \bar{G})$.
We have
$M = \OO( n + \sigma)$, $N = \OO( n)$, $D = \OO( n^2)$,  
$k = \OO( n)$, and $m = \OO( n^2 + n \sigma)$,
so the cost of each operation is $\sOB( n^6 + n^4 \sigma^2)$
and overall $\sOB( n^8 + n^6 \sigma^2)$.
The size of $m$ comes from Mignotte's bound \cite{MignotteStefanecu}.
$H$ is a square-free polynomial in $(\ZZ[ \alpha])[y]$,
of degree $\OO( n)$ and bitsize $\OO( n^2 + n \sigma)$, whose real roots correspond
to the real solutions of the system with abscissa $\alpha$.
The crux of the method is that
$H$ changes sign only over the intervals that contain its real roots.
To check these signs, it suffices to substitute $y$ in $H$ by the intermediate points,
thus obtaining a polynomial in $\ZZ[\alpha]$,
of degree $\OO( n)$ and bitsize  $\OO( n^2 + n \sigma + n s_j)$, 
where $s_j$ is the bitsize of the $j$-th intermediate point.

Now, we consider this polynomial in $\ZZ[x]$ and evaluate it over $\alpha$.
Using Corollary \ref{cor:sign-at-1}
with $p = n^2$, $\tau_f = n^2 + n \sigma$, $q = n$, and $\tau_g = n^2 + n \sigma + n s_j$,
this costs $\sOB( n^6 + n^5 \sigma + n^4 s_j)$.
Summing over $\OO( n^2)$ points and using
Lemma \ref{lem:very-important}, we obtain $\sOB( n^8 + n^7 \sigma)$.
Thus, the overall complexity is $\sOB( n^{10} + n^9 \sigma)$.

\begin{theorem} \label{th:grur}
  We can isolate the real roots of the system $F = G = 0$, 
  using \func{g\_rur} in $\sOB( n^{12} + n^{10} \sigma^2)$;
  or $\sOB(N^{12})$, where $N = \max\{n, \sigma\}$.
\end{theorem}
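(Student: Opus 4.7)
The plan is to decompose the cost of Algorithm \ref{alg:grur-solve} into four phases and show that each falls within $\sOB(n^{12} + n^{10}\sigma^2)$, with the projection-and-solve phase being the bottleneck. First I would handle the projection phase: computing $R_x = \res_y(F,G)$ and $R_y = \res_x(F,G)$ costs $\sOB(n^4 \sigma)$ each by Corollary \ref{cor-biv-SR-fast-computation}, producing univariate polynomials of degree $\OO(n^2)$ and bitsize $\sO(n\sigma)$. Real solving both resultants by Proposition \ref{pr:solve-1} costs $\sOB(n^{12} + n^{10}\sigma^2)$, which already matches the target bound. The intermediate rationals $q_j$ separating the $y$-coordinates can be computed in $\sOB(n^5\sigma)$ and, by Lemma \ref{lem:very-important}, have aggregate bitsize $\sO(n^3\sigma)$; these numbers feed into the sign tests below.

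Next I would analyze the inner loop, which iterates over the $\OO(n^2)$ real roots $\alpha$ of $R_x$. For a fixed $\alpha$, represented by a square-free polynomial of degree $D=\OO(n^2)$ and bitsize $\sO(n+\sigma)$, the specializations $F(\alpha,y), G(\alpha,y)$ live in $(\ZZ[\alpha])[y]$ with $y$-degree $N=\OO(n)$ and initial coefficient bitsize $M=\OO(\sigma)$. Applying the Hoeij--Monagan extension-field GCD bound $\sOB(mMND + mND + mkD)$ with $k=\OO(n)$ and $m=\sO(n\sigma)$ (the bound on primes needed) gives $\sOB(n^4\sigma^2)$ per square-free computation, and thus $\sOB(n^6\sigma^2)$ over all $\alpha$. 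The subsequent bivariate gcd $H=\gcd(\bar F,\bar G)$ uses the same estimate with updated parameters $M=\sO(n+\sigma)$ and $m=\sO(n^2+n\sigma)$ (by Mignotte's factor bound), yielding $\sOB(n^6 + n^4\sigma^2)$ per call and $\sOB(n^8 + n^6\sigma^2)$ in total.

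The delicate step, which I expect to be the main obstacle, is the sign-evaluation phase: for each of the $\OO(n^2)$ values $\alpha$ one must test whether $H(\alpha,q_j)H(\alpha,q_{j+1})<0$ across intermediate rationals, and a naive accounting pays $\OO(n^2)$ points per root for a prohibitive $\OO(n^4)$ tests. I would control this by exploiting the fact that $H$ has degree $\OO(n)$ and bitsize $\sO(n^2+n\sigma)$ in the extension, so substituting $y=q_j$ produces a polynomial in $\ZZ[x]$ of degree $\OO(n)$ and bitsize $\sO(n^2 + n\sigma + n s_j)$, where $s_j$ is the bitsize of $q_j$. Evaluating its sign at $\alpha$ via Corollary \ref{cor:sign-at-1} with parameters $p=\OO(n^2)$, $\tau_f=\sO(n^2+n\sigma)$, $q=\OO(n)$, and $\tau_g=\sO(n^2+n\sigma+ns_j)$ costs $\sOB(n^6 + n^5\sigma + n^4 s_j)$ per evaluation. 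The crucial aggregation uses Lemma \ref{lem:very-important}: $\sum_j s_j = \sO(n^2+n\sigma)$, so summing over all intermediate points yields $\sOB(n^8 + n^7\sigma)$ for a single $\alpha$, and $\sOB(n^{10} + n^9\sigma)$ across all $\OO(n^2)$ values of $\alpha$.

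Finally I would gather the four contributions and observe that $\sOB(n^{12}+n^{10}\sigma^2)$ from Proposition \ref{pr:solve-1} dominates the other three terms, establishing the first claim. The reformulation $\sOB(N^{12})$ follows by substituting $N=\max\{n,\sigma\}$. The bookkeeping subtlety to watch is that the $y$-coordinate $\beta$ returned in $Q$ is the one already isolated in $P_y$, so that the sign-change test over consecutive $q_j$'s does correctly identify a unique element of $P_y$ for each $\alpha$; this is guaranteed because $H(\alpha,\cdot)$ is square-free with roots in bijection with the common roots of the system above $\alpha$.
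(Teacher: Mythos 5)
Your proposal follows essentially the same route as the paper's proof: the same decomposition into projection/solving, square-free parts and gcd over the extension via the Hoeij--Monagan bound with identical parameter choices, the same sign tests of $H(\alpha,q_j)$ via Corollary~\ref{cor:sign-at-1}, and the same aggregation through Lemma~\ref{lem:very-important}, with univariate resultant solving dominating at $\sOB(n^{12}+n^{10}\sigma^2)$. The only blemish is the stated aggregate bitsize $\sum_j s_j=\sO(n^2+n\sigma)$ in your third paragraph (it should be $\OO(n^4+n^3\sigma)$, consistent with your first paragraph), but this does not change the resulting $\sOB(n^8+n^7\sigma)$ per root or the final bound.
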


\section{Applications}

\subsection{Real root counting}
Let $F \in \ZZ[x, y]$, such that $\dg_{x}( F) =\dg_{y}( F) = n_1$ and $\bitsize{F} = \sigma$.
Let $\alpha, \beta\in \ALG$, such that 
$\alpha \cong ( A, [\rat{a}_1, \rat{a}_2])$ and $\beta \cong ( B, [\rat{b}_1, \rat{b}_2])$,
where $\dg( A), \dg( B)=n_2, \bitsize{A},\bitsize{B}\le \tau$
and $\rat{ c} \in \QQ$, such that $\bitsize{ \rat{ c}} = \lambda$.
Moreover, assume that $n_1^2 = \OO ( n_2 )$, as is the case in applications.
We want to count the number of real roots of 
$\bar{F} = F( \alpha, y) \in (\ZZ( \alpha))[y]$
in $(-\infty, +\infty)$, in $(\rat{ c}, +\infty)$ and in $( \beta, +\infty)$.  
We may assume that the leading coefficient of $\bar{ F}$ is nonzero. 
This is w.l.o.g.\ since we can easily check it,
and/or we can use the good specialization properties 
of the subresultants \cite{LickteigRoy:FastCauchy:01,VegLomRecRoy:StHa:89,VegKah:curve2d:96}.

Using Sturm's theorem, e.g. \cite{BPR06,Yap2000}, the number of real
roots of $\bar{ F}$ is  
$\var( \SR( \bar{ F}, \bar{ F}_y ; -\infty)) - \var( \SR( \bar{ F}, \bar{ F}_y ; +\infty))$.
Hence, we have to compute the sequence $\SR( \bar{F}, \bar{F}_y)$ w.r.t. $y$,
and evaluate it on $ \pm \infty$
or, equivalently, to compute the signs of the principal subresultant coefficients,
which lie in $\ZZ( \alpha)$.
This procedure is equivalent, due to the good specialization properties
of subresultants \cite{BPR06,VegLomRecRoy:StHa:89},
to computing the principal subresultant coefficients of $\SR( F, F_y)$,
which are polynomials in $\ZZ[x]$, and to evaluate them over $\alpha$.
In other words, the good specialization properties assure us that we
can compute a nominal sequence  
by considering the bivariate polynomials, and then perform the 
substitution $x = \alpha$.

The sequence $\sr$ of the principal subresultant coefficients
can be computed in $\sOB( n_1^4 \sigma)$, using Corollary \ref{cor-biv-SR-fast-computation}
with $p = q = d = n_1$, and $\tau = \sigma$. 
Now, $\sr$ contains $\OO( n_1)$ polynomials in $\ZZ[x]$,
each of degree $\OO( n_1^2)$ and bitsize $\OO( n_1 \sigma)$.
We compute the sign of each one evaluated over $\alpha$ in 
$$\sOB( n_1^2 n_2 \max\{ \tau, n_1 \sigma\} + n_2 \min\{ n_1^2, n_2\}^2 \tau)$$
using Corollary \ref{cor:sign-at-1} with 
$p = n_2$, $q = n_1^2$,  $\tau_f = \tau$,  and $\tau_g = n_1 \sigma$.
This proves the following:

\begin{lemma} \label{lem:count-alpha}
  We count the number of real roots of $\bar{ F}=F(\alpha,y)$ in 
  $\sOB( n_1^4 n_2 \sigma + n_1^5  n_2 \tau)$.
\end{lemma}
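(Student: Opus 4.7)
The plan is to apply Sturm's theorem to $\bar F = F(\alpha,y)$: the number of its real roots in $(-\infty,+\infty)$ equals $\var(\SR(\bar F,\bar F_y;-\infty)) - \var(\SR(\bar F,\bar F_y;+\infty))$. Evaluation of the subresultant sequence at $\pm\infty$ reduces to reading off the signs of the leading coefficients of its members, so the computation boils down to determining the signs of the principal subresultant coefficients of $\SR_y(\bar F,\bar F_y)$. A priori these lie in $\ZZ(\alpha)$, so the obstacle is to avoid computing directly in the extension field.

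To sidestep extension-field arithmetic, I would invoke the good specialization properties of subresultants (as in \cite{BPR06,VegLomRecRoy:StHa:89,VegKah:curve2d:96}): first compute the principal subresultant coefficient sequence $\sr$ of $F,F_y$ with respect to $y$ as a family of univariate polynomials in $\ZZ[x]$, and only then specialize $x=\alpha$. By Corollary \ref{cor-biv-SR-fast-computation} applied with $p=q=d=n_1$ and $\tau=\sigma$, this sequence is produced in $\sOB(n_1^4\sigma)$ and consists of $\OO(n_1)$ polynomials in $\ZZ[x]$ of degree $\OO(n_1^2)$ and bitsize $\OO(n_1\sigma)$.

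Next, for each element of $\sr$ I would apply Corollary \ref{cor:sign-at-1} with $p=n_2$, $q=n_1^2$, $\tau_f=\tau$ and $\tau_g=n_1\sigma$, yielding a per-polynomial cost of $\sOB(n_1^2 n_2\max\{\tau,n_1\sigma\} + n_2\min\{n_1^2,n_2\}^2\tau)$. The hypothesis $n_1^2=\OO(n_2)$ collapses the $\min$ to $n_1^4$, and distributing the $\max$ gives $\sOB(n_1^3 n_2\sigma + n_1^4 n_2\tau)$. Summing over the $\OO(n_1)$ elements of $\sr$ yields the announced bound $\sOB(n_1^4 n_2\sigma + n_1^5 n_2\tau)$, which also absorbs the cost $\sOB(n_1^4\sigma)$ of producing $\sr$ itself.

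The main obstacle I expect is verifying that the substitution $x=\alpha$ commutes cleanly with the subresultant sequence, i.e.\ that no leading coefficient collapses at $\alpha$ and invalidates the Sturm count; the excerpt already handles this by allowing us to assume that the leading coefficient of $\bar F$ is nonzero, either by a direct check or by invoking good specialization, and this adds nothing to the asymptotics. The analogous counts in $(\rat{c},+\infty)$ and $(\beta,+\infty)$ are then obtained by first evaluating $\SR_y(F,F_y)$ at $y=\rat{c}$ or $y=\beta$, so as to land again in $\ZZ[x]$, and then applying Corollary \ref{cor:sign-at-1}, respectively Theorem \ref{th:biv_sign_at}; under $n_1^2=\OO(n_2)$ these evaluations fit within the same budget, completing the lemma.
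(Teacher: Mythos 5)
Your proposal matches the paper's own argument step for step: Sturm's theorem reduces the count to sign variations of the principal subresultant coefficients at $\pm\infty$, good specialization lets you compute $\sr(F,F_y)$ in $\ZZ[x]$ via Corollary \ref{cor-biv-SR-fast-computation} (cost $\sOB(n_1^4\sigma)$, $\OO(n_1)$ polynomials of degree $\OO(n_1^2)$ and bitsize $\OO(n_1\sigma)$) before specializing $x=\alpha$, and Corollary \ref{cor:sign-at-1} with the same parameter choices gives the per-coefficient cost, which summed over $\OO(n_1)$ coefficients yields $\sOB(n_1^4 n_2\sigma + n_1^5 n_2\tau)$. The argument is correct and essentially identical to the paper's.
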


In order to count the real roots of $\bar{ F}$ in $(\beta, +\infty)$,
we use again Sturm's theorem. 
The complexity of the computation is dominated by 
the cost of computing $\var( \SR( \bar{ F}, \bar{ F}_y ; \beta))$,
which is equivalent to computing $\SR( F, F_y)$ w.r.t. to $y$,
which contains bivariate polynomials, and to compute their signs over $( \alpha, \beta)$.
The cost of computing $\SR( F, F_y)$ is $\sOB(n_1^5 \sigma)$
using Corollary \ref{cor:biv-SR-computation} with $p = q = d = n_1$, and $\tau = \sigma$. 
The sequence contains $\OO(n_1)$ polynomials in $\ZZ[x, y]$
of degrees $\OO( n_1)$ and $\OO( n_1^2)$, w.r.t. $x$ and $y$  respectively,
and bitsize $\OO(n_1 \sigma)$.
We compute the sign of each over $(\alpha, \beta)$
in $\sOB( n_1^4 n_2^3 \max\{ n_1 \sigma, \tau\})$ (Theorem \ref{th:biv_sign_at}).
This proves the following:
 
\begin{lemma} \label{lem:count-alpha-beta}
  We count the number of real roots of $\bar{F}$ in $(\beta, +\infty)$ 
  in $\sOB( n_1^5 n_2^3 \max\{ n_1 \sigma, \tau\})$.
\end{lemma}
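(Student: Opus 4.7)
The plan is to follow the blueprint of Lemma \ref{lem:count-alpha}, but replacing the evaluation at $\pm\infty$ by an evaluation at the algebraic number $\beta$, which forces us to move from univariate to bivariate sign determination.

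First, by Sturm's theorem applied to $\bar{F}=F(\alpha,y)\in(\ZZ(\alpha))[y]$ (whose leading coefficient we may assume to be nonzero, as in the discussion preceding Lemma \ref{lem:count-alpha}), the number of real roots in $(\beta,+\infty)$ equals
\[
\var(\SR(\bar F,\bar F_y;\beta)) - \var(\SR(\bar F,\bar F_y;+\infty)).
\]
The evaluation at $+\infty$ only requires the signs of the leading coefficients of the sequence (polynomials in $\alpha$) and is bounded by the cost already analysed in Lemma \ref{lem:count-alpha}; so the whole cost will be dominated by the evaluation at $\beta$.

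Next, I would invoke the good specialization properties of signed subresultants (as was already used in the proof of Lemma \ref{lem:count-alpha}) to move the computation to the bivariate level: instead of first specializing $x=\alpha$ and then running the subresultant algorithm, compute once and for all the bivariate sequence $\SR_y(F,F_y)\subset\ZZ[x,y]$, and then simply evaluate each of its elements at the point $(\alpha,\beta)$. By Corollary \ref{cor:biv-SR-computation} with $p=q=d=n_1$ and $\tau=\sigma$, the sequence is produced in $\sOB(n_1^5\sigma)$, it contains $\OO(n_1)$ polynomials, and each such polynomial has $x$-degree $\OO(n_1)$, $y$-degree $\OO(n_1^2)$ and bitsize $\OO(n_1\sigma)$.

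Finally, for each of these $\OO(n_1)$ polynomials I would call \func{sign\_at} as analysed in Theorem \ref{th:biv_sign_at} on the pair $(\alpha,\beta)$. Bounding the bidegree of the polynomial by $n_1^2$ and its bitsize by $n_1\sigma$, and with algebraic numbers of degree $n_2$ and bitsize $\tau$, Theorem \ref{th:biv_sign_at} yields a per-evaluation cost of $\sOB\bigl(n_1^4\,n_2^3\,\max\{n_1\sigma,\tau\}\bigr)$. Multiplying by the $\OO(n_1)$ elements of the sequence gives the announced bound $\sOB(n_1^5n_2^3\max\{n_1\sigma,\tau\})$, which also absorbs the $\sOB(n_1^5\sigma)$ cost of producing the sequence itself.

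The main obstacle is the bookkeeping in the last step: Theorem \ref{th:biv_sign_at} is stated for bivariate polynomials whose $x$- and $y$-degrees coincide, while our polynomials are markedly asymmetric ($x$-degree $n_1$ versus $y$-degree $n_1^2$). I would therefore re-enter the proof of Theorem \ref{th:biv_sign_at} with the correct parameters $p=n_2$, $q=n_1$, $d=n_1^2$ and coefficient bitsize $n_1\sigma$, plugged into Corollaries \ref{cor-biv-SR-fast-computation} and \ref{cor:biv-SR-fast-evaluation}, and check that, using the assumption $n_1^2=\OO(n_2)$, every intermediate term is indeed absorbed by $n_1^4\,n_2^3\,\max\{n_1\sigma,\tau\}$; this is routine but is the only place where the argument is not immediate.
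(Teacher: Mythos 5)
Your proposal matches the paper's own argument step for step: Sturm's theorem with the cost dominated by evaluating $\var(\SR(\bar F,\bar F_y;\beta))$, good specialization to reduce to the bivariate sequence $\SR_y(F,F_y)$ computed in $\sOB(n_1^5\sigma)$, and then $\OO(n_1)$ calls to Theorem \ref{th:biv_sign_at} at $(\alpha,\beta)$ costing $\sOB(n_1^4 n_2^3\max\{n_1\sigma,\tau\})$ each. Even your closing caveat about the asymmetric degrees is echoed by the paper, which remarks right after the lemma that a more careful accounting of the degree imbalance would gain a factor but is omitted for simplicity.
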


By a more involved analysis, taking into account
the difference in the degrees of the bivariate polynomials, we can gain a factor. 
We omit it for reasons of simplicity.  
Finally, in order to count the real roots of $\bar{ F}$ in $( \rat{ c}, +\infty)$,
it suffices to evaluate the sequence $\SR( F, F_y)$ w.r.t. $y$ on $\rat{c}$,
thus obtaining polynomials in $\ZZ[x]$, and compute 
their signs
over $\alpha$.

The cost of the evaluation $\SR( F, F_y ; \rat{c})$ is
$\sOB( n_1^4 \max\{ \sigma, \lambda\})$,
using Corollary \ref{cor:biv-SR-fast-evaluation}
with $p = q = d = n_1$, $\tau = \sigma$ and $\sigma = \lambda$.
The evaluated sequence contains $\OO( n_1)$ polynomials in $\ZZ[x]$, 
of degree $\OO(n_1^2)$ and bitsize $\OO( n_1 \max\{ \sigma, \lambda\})$.
The sign of each one evaluated over $\alpha$ can be 
computed 
in 
$$\sOB(n_1^2 n_2 \max\{ \tau, n_1 \sigma, n_1 \lambda\} + n_1^4 n_2 \tau),$$
using Corollary \ref{cor:sign-at-1} with
$p = n_2$, $q = n_1^2$, $\tau_f = \tau$ and $\tau_g = n_1\max\{ \sigma, \lambda\}$.
This leads to the following:

\begin{lemma} \label{lem:count-c}
  We count the number of real roots of $\bar{F}$ in $( \rat{c}, +\infty)$
  in $\sOB(n_1^4 n_2 \max\{ n_1\tau, \sigma, \lambda\} )$.
\end{lemma}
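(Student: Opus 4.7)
The plan is to invoke Sturm's theorem and express the count as $\var(\SR(\bar{F}, \bar{F}_y; \rat{c})) - \var(\SR(\bar{F}, \bar{F}_y; +\infty))$. The naive route would force us to work in the extension $\ZZ[\alpha]$ when forming $\SR(\bar{F}, \bar{F}_y)$, but the good specialization properties of subresultants (cf.~\cite{BPR06, VegLomRecRoy:StHa:89}), already exploited in Lemmas~\ref{lem:count-alpha} and~\ref{lem:count-alpha-beta}, let us work entirely in $\ZZ[x, y]$: compute once the bivariate sequence $\SR(F, F_y)$ w.r.t.~$y$, specialize at $y = \rat{c}$ to obtain polynomials in $\ZZ[x]$, then evaluate these univariate polynomials at $x = \alpha$.

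The first phase is a single call to Corollary~\ref{cor:biv-SR-fast-evaluation} applied to $F, F_y$ with $p = q = d = n_1$, coefficient bitsize $\tau = \sigma$, and evaluation point of bitsize $\lambda$. This yields $\SR(F, F_y; \rat{c})$ in time $\sOB(n_1^4 \max\{\sigma, \lambda\})$, together with the structural guarantee that the sequence consists of $\OO(n_1)$ polynomials in $\ZZ[x]$ of degree $\OO(n_1^2)$ and bitsize $\OO(n_1 \max\{\sigma, \lambda\})$.

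The second phase iterates over these $\OO(n_1)$ polynomials, invoking Corollary~\ref{cor:sign-at-1} to determine each sign at $\alpha$. Setting $p = n_2$, $q = n_1^2$, $\tau_f = \tau$, $\tau_g = n_1 \max\{\sigma, \lambda\}$, and using the standing hypothesis $n_1^2 = \OO(n_2)$ to simplify $\min\{p, q\}^2$ to $n_1^4$, each sign call costs $\sOB(n_1^2 n_2 \max\{\tau, n_1 \sigma, n_1 \lambda\} + n_1^4 n_2 \tau)$. Multiplying by $\OO(n_1)$ and collecting the resulting terms under a single maximum gives the claimed $\sOB(n_1^4 n_2 \max\{n_1 \tau, \sigma, \lambda\})$; the evaluation at $+\infty$ is cheaper and is absorbed.

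The one subtle point, and closest thing to an obstacle, is the bookkeeping needed to justify that specialization and sign evaluation commute as required: one must ensure that the leading coefficient of $\bar{F}$ in $y$ does not vanish at $\alpha$, for otherwise the specialization of $\SR(F, F_y)$ is not faithful and one has to descend to the largest well-defined subresultant. This is exactly the caveat addressed in the paragraph preceding Lemma~\ref{lem:count-alpha}, handled either by direct inspection of the leading coefficient or by the specialization lemmas of~\cite{LickteigRoy:FastCauchy:01, VegLomRecRoy:StHa:89}; it does not alter the asymptotic bound.
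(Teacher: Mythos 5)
Your proposal is correct and follows essentially the same route as the paper: evaluate the bivariate sequence $\SR(F,F_y)$ at $y=\rat{c}$ via Corollary~\ref{cor:biv-SR-fast-evaluation} (cost $\sOB(n_1^4\max\{\sigma,\lambda\})$, yielding $\OO(n_1)$ polynomials in $\ZZ[x]$ of degree $\OO(n_1^2)$ and bitsize $\OO(n_1\max\{\sigma,\lambda\})$), then apply Corollary~\ref{cor:sign-at-1} with exactly the parameters the paper uses. Your explicit handling of the leading-coefficient caveat and of the evaluation at $+\infty$ matches what the paper disposes of in the paragraph preceding Lemma~\ref{lem:count-alpha}.
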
 

\subsection{Simultaneous inequalities in two variables}
Let 
$P, Q$, $A_1, \dots, A_{\ell_1}$, $B_1, \dots, B_{\ell_2}$, 
$C_1, \dots, C_{\ell_3} \in \ZZ[X, Y]$,
such that their total degrees are bounded by $n$ 
and their bitsize by  $\sigma$. 
We wish to compute $(\alpha, \beta) \in \ALG^2$
such that $P(\alpha, \beta) = Q(\alpha, \beta) = 0$
and also $A_i(\alpha, \beta) > 0$, $B_j(\alpha, \beta) < 0$ and 
$C_k(\alpha, \beta) = 0$, where
$1 \leq i \leq \ell_1, 1 \leq j \leq \ell_2, 1 \leq k \leq \ell_3$.
Let $\ell = \ell_1 + \ell_2 + \ell_3$.
\begin{corollary} \label{cor:biv-inequalities}
  There is an algorithm that solves the problem of $\ell$ simultaneous
  inequalities of degree $\le n$ and bitsize $\le \sigma$,
  in $\sOB( \ell n^{12} + \ell n^{11} \sigma + n^{10} \sigma^{2})$.
\end{corollary}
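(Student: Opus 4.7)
The plan is to reduce the problem to bivariate solving plus a batch of bivariate sign evaluations, using tools already established in the paper.

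First I would compute all common real roots of the well-constrained subsystem $P(x,y)=Q(x,y)=0$ using the \func{g\_rur} algorithm (Theorem \ref{th:grur}), or equivalently \func{m\_rur} after a generic shear; this returns at most $\OO(n^2)$ candidate pairs $(\alpha,\beta)\in\ALG^2$, each represented via an isolating rectangle together with the defining univariate polynomials (the resolvents $R_x,R_y$), at a total cost of $\sOB(n^{12}+n^{10}\sigma^2)$. Note that, by the analysis in the proofs of Theorems \ref{th:grid_solve} and \ref{th:mrur-solve}, the defining polynomials of $\alpha$ and $\beta$ have degree $\OO(n^2)$ and bitsize $\OO(n^2+n\sigma)$, so each candidate $(\alpha,\beta)$ is a valid input to \func{sign\_at}.

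Next, for every candidate $(\alpha,\beta)$ and every polynomial in the input list $A_1,\dots,A_{\ell_1},B_1,\dots,B_{\ell_2},C_1,\dots,C_{\ell_3}$, I would invoke \func{sign\_at} (Algorithm \ref{alg:sign-at-2}) to test the required strict or equality condition, discarding the candidate as soon as one of the tests fails. By Theorem \ref{th:biv_sign_at}, applied with $n_1=n$ for the test polynomial and $n_2=\OO(n^2)$ for the defining polynomials of $\alpha,\beta$ whose bitsize is $\OO(n^2+n\sigma)$, one such sign evaluation costs $\sOB(n^{10}+n^9\sigma)$, exactly as in the proof of Theorem \ref{th:grid_solve}. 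Performing $\ell$ such tests for each of the $\OO(n^2)$ candidates yields $\sOB(\ell\, n^{12}+\ell\, n^{11}\sigma)$ for the whole filtering phase.

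Adding this to the initial solving cost, the overall bound is
\[
  \sOB(n^{12}+n^{10}\sigma^2)+\sOB(\ell\, n^{12}+\ell\, n^{11}\sigma)=\sOB(\ell\, n^{12}+\ell\, n^{11}\sigma+n^{10}\sigma^2),
\]
as claimed. There is no real obstacle here beyond bookkeeping: correctness follows because the system $P=Q=0$ constrains the solutions to a finite set and the subsequent sign tests are exact; the only mildly delicate point is making sure the bitsize of the defining polynomials of $\alpha,\beta$ is the right one when plugging into Theorem \ref{th:biv_sign_at}, which is precisely the parameter setting used in Theorem \ref{th:grid_solve}, and ensuring that the $\sOB(n^{10}\sigma^2)$ term from real solving of the two resultants is not dominated or double-counted when $\ell$ is large.
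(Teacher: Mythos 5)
Your argument is correct and coincides with the paper's own proof: solve $P=Q=0$ via \func{g\_rur} in $\sOB(n^{12}+n^{10}\sigma^2)$, then apply \func{sign\_at} (Theorem \ref{th:biv_sign_at} with $n_1=n$, $n_2=\OO(n^2)$ and bitsize $\OO(n^2+n\sigma)$) to each of the $\ell$ constraint polynomials at each of the $\OO(n^2)$ candidate roots, for $\sOB(\ell n^{12}+\ell n^{11}\sigma)$. The parameter bookkeeping matches the paper exactly.
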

\begin{proof}
  Initially we compute the isolating interval representation of the real roots of 
  $P= Q = 0$ in $\sOB(n^{12} + n^{10} \sigma^2)$, using \func{grur\_solve}.
  There are  $\OO(n^2)$ real solutions,
  which are represented in isolating interval representation,
  with polynomials of degrees $\OO( n^2)$ and bitsize $\OO( n^2 + n \sigma)$.
  
  For each real solution, say $(\alpha, \beta)$,
  for each polynomial $A_i$, $B_j$, $C_k$ we compute the signs of
  $\sign(A_i(\alpha, \beta))$, $\sign{(B_i(\alpha, \beta))}$ 
  and $\sign{(C_i(\alpha, \beta))}$.  
  Each sign evaluation costs $\sOB(n^{10} + n^{9} \sigma)$,
  using Theorem \ref{th:biv_sign_at} with $n_1 = n$, $n_2 =n^2 $ and $\sigma = n^2 + n \sigma$.
  In the worst case we need $n^2$ of them,
  hence, the cost for all sign evaluations is
  $\sOB( \ell n^{12} +  \ell \,n^{11} \,\sigma )$.
\end{proof}

\subsection{The complexity of topology}

In this section we consider the problem of computing the topology of a real plane
algebraic curve, and improve upon its asymptotic complexity. The reader may refer to
e.g.~\cite{BPR06,VegKah:curve2d:96,mpsttw-aicg-06} for the details of the
algorithm.

We consider the curve in generic position (Section \ref{sec:determ-shear}), 
defined by $F \in \ZZ[x, y]$, such that $\dg( f) = n$ and $\bitsize{ F} = \sigma$.
We compute the critical points of the curve, i.e.\ solve $F = F_y = 0$
in $\sOB( n^{12} + n^{10} \sigma^2)$,
where $F_y$ is the derivative of $F$ w.r.t.~$y$.
Next, we compute the intermediate points on the $x$ axis,
in $\sOB( n^4 + n^3 \sigma)$ (Lemma \ref{lem:very-important}).
For each intermediate point, say $q_j$, we need to compute the number
of branches of the curve that cross the vertical line $x = q_j$.
This is equivalent to computing the number of real solutions 
of the polynomial $F( q_j, y) \in \ZZ[y]$,
which has degree $d$ and bitsize $\OO( n \bitsize{ q_j})$.
For this we use Sturm's theorem and Theorem \ref{pr:SR-fast-evaluation}
and the cost is $\sOB( n^3 \bitsize{ q_j})$.
For all $q_j$'s the cost is $\sOB( n^7 + n^6 \sigma)$.

For each critical point, say $(\alpha, \beta)$
we need to compute the number of branches of the curve that cross 
the vertical line $x = \alpha$, 
and the number of them that are above $y = \beta$.
The first task corresponds to computing the number of real roots
of $F( \alpha, y)$, by application of 
Lemma \ref{lem:count-alpha}, in $\sOB(n^{9} + n^{8} \sigma)$, 
where $n_1 = n$, $n_2 = n^2$, and $\tau = n^2 + n \sigma$.
Since there are $\OO( n^2)$ critical values, the overall cost 
of the step is $\sOB(n^{11} + n^{10} \sigma)$.

Finally, we compute the number of branches that cross the line $x = \alpha$
and are above $y = \beta$ in $\sOB( n^{13} + n^{12} \sigma)$,
by Lemma \ref{lem:count-alpha-beta}.
Since there are $\OO( n^2)$ critical points, the complexity
is $\sOB( n^{15} + n^{14} \sigma)$.
It remains to connect the critical points according to the information that
we have for the branches. The complexity of this step is dominated.
It now follows that the complexity of the algorithm is
$\sOB( n^{15} + n^{14}\sigma + n^{10} \sigma^2)$, or $\sOB( N^{15})$,
which is worse by a factor than \cite{BPR06}.

We improve the complexity of the last step since
\func{m\_rur} computes the RUR representation of the ordinates.
Thus, instead of performing bivariate sign evaluations in order to compute
the number of branches above $y = \beta$, 
we can substitute the RUR representation of $\beta$
and perform univariate sign evaluations.
This corresponds
to computing the sign of $\OO(n^2)$ polynomials
of degree $\OO( n^2)$ and bitsize $\OO( n^4 + n^3 \sigma)$, over all the $\alpha$'s \cite{VegKah:curve2d:96}.
Using Lemma \ref{lem:sign-at-1-all} for each polynomial the cost
is $\sOB( n^{10} + n^{9} \sigma)$, and since there are $\sOB( n^2)$ of them, the total cost
is $\sOB( n^{12} + n^{11} \sigma)$.

\begin{theorem} \label{th:topology}
  We compute the topology of a real plane algebraic curve,
  defined by a polynomial of degree $n$ and bitsize $\sigma$,
  in $\sOB( n^{12} + n^{11} \sigma + n^{10} \sigma^2)$,
  or $\sOB(N^{12})$, where $N = \max\{n, \sigma\}$.
\end{theorem}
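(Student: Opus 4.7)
The plan is to follow the classical CAD-style algorithm for computing the topology of a real plane algebraic curve, as in \cite{BPR06,VegKah:curve2d:96}, but to realize each subroutine through the complexity bounds established earlier in the paper. First I would put the curve in generic position by applying the shear of Section \ref{sec:determ-shear}; this preserves degrees and increases bitsizes only by a polylogarithmic factor. Then I would compute the critical points of the curve, i.e.\ solve $F = F_y = 0$ by \func{m\_rur} (Theorem \ref{th:mrur-solve}), at cost $\sOB(n^{12} + n^{10}\sigma^2)$. A key point of this choice is that \func{m\_rur} additionally returns the RUR expression $\beta = A_1(\alpha)/A_2(\alpha)$ for every critical ordinate, which is essential for the hard step below.

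Next I would compute $\OO(n^2)$ rational intermediate abscissae between consecutive critical values. By Lemma \ref{lem:very-important} their aggregate bitsize is $\OO(n^4 + n^3\sigma)$, and for each $q_j$ I count the branches crossing $x = q_j$ by counting the real roots of $F(q_j,y) \in \ZZ[y]$ through Sturm's theorem and Proposition \ref{pr:SR-fast-evaluation}; the total cost is $\sOB(n^7 + n^6\sigma)$ and is dominated. At each critical point $(\alpha,\beta)$ the number of branches crossing $x = \alpha$ is the number of real roots of $F(\alpha,y)$, provided by Lemma \ref{lem:count-alpha} with $n_1 = n$ and $n_2 = n^2$ at cost $\sOB(n^9 + n^8\sigma)$ per point, hence $\sOB(n^{11} + n^{10}\sigma)$ over all $\OO(n^2)$ critical points.

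The main obstacle, and the step that determines the final complexity, is counting for each critical point $(\alpha,\beta)$ how many of the branches crossing $x = \alpha$ lie above $y = \beta$. A direct application of Lemma \ref{lem:count-alpha-beta}, which involves bivariate sign evaluations via Theorem \ref{th:biv_sign_at}, already costs $\sOB(n^{13} + n^{12}\sigma)$ per critical point, for a total of $\sOB(n^{15} + n^{14}\sigma)$, which is too weak. The improvement is to exploit the RUR output of \func{m\_rur}: since $\beta$ is a rational function of $\alpha$, every bivariate sign evaluation of the form $\sign(h(\alpha,\beta))$ that arises inside Sturm's procedure reduces to a univariate sign evaluation $\sign(\tilde h(\alpha))$, where $\tilde h \in \ZZ[x]$ has degree $\OO(n^2)$ and bitsize $\OO(n^4 + n^3\sigma)$ (see \cite{VegKah:curve2d:96}). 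There are $\OO(n^2)$ such polynomials per critical point, and all critical abscissae $\alpha$ share the same defining polynomial $R_x$. This is exactly the setting of Lemma \ref{lem:sign-at-1-all}: evaluating one $\tilde h$ over all roots of $R_x$ costs $\sOB(n^{10} + n^9\sigma)$ thanks to aggregate separation, and summing over the $\OO(n^2)$ polynomials yields $\sOB(n^{12} + n^{11}\sigma)$ for the whole step.

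Finally, connecting the critical points according to the branch counts is combinatorial and its cost is dominated. Summing the three dominant contributions, namely $\sOB(n^{12} + n^{10}\sigma^2)$ from \func{m\_rur}, $\sOB(n^{11} + n^{10}\sigma)$ from the crossing counts, and $\sOB(n^{12} + n^{11}\sigma)$ from the above-$\beta$ counts, yields the claimed bound $\sOB(n^{12} + n^{11}\sigma + n^{10}\sigma^2) = \sOB(N^{12})$. The crucial ingredient is that the RUR representation produced by \func{m\_rur} is what lets us trade the expensive bivariate sign evaluations of Lemma \ref{lem:count-alpha-beta} for a batched univariate computation via Lemma \ref{lem:sign-at-1-all}.
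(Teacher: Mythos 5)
Your proposal is correct and follows essentially the same route as the paper: shear to generic position, solve $F=F_y=0$ by \func{m\_rur}, count branches at intermediate and critical abscissae via Lemma \ref{lem:count-alpha}, and replace the expensive bivariate counts of Lemma \ref{lem:count-alpha-beta} by substituting the RUR of $\beta$ and batching the resulting $\OO(n^2)$ univariate sign evaluations over all roots of the resultant with Lemma \ref{lem:sign-at-1-all}. The cost accounting matches the paper's step for step.
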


Thus the overall complexity of the algorithm %
improves the previously known bound by a factor of $N^2$.  
We assumed generic position, since we can apply a %
shear to achieve this, see Section \ref{sec:grid-solve}.

\section{Implementation and Experiments}\label{sec:implementation}

This section describes our open source \maple
implementation \footnote{\texttt{www.di.uoa.gr/\textasciitilde erga/soft/SLV\_index.html}} 
and illustrates its capabilities through comparative experiments.
Refer to \cite{det-rr-6116} for its usage and further details.
Our design is object oriented and uses generic programming 
in view of transferring the implementation to \cpp in the future.

We provide algorithms for \emph{signed} polynomial remainder sequences,
real solving of univariate polynomials using Sturm's algorithm,
computations with one and two real algebraic numbers, such as
sign evaluation and comparison and, of course, solving bivariate systems.

\subsection{Our solvers}\label{sec:bivariate-solve}
The performance of all algorithms is averaged over~10 executions on a
\maple~$9.5$ console using a 2GHz AMD64@3K+ processor with $1$GB RAM.
The polynomial systems tested are given in \cite{det-rr-6116}:
systems $R_i, M_i, D_i$ are from \cite{et-casc-2005}, 
the $C_i$ are from \cite{VegNec:topology:02}, and $W_i,i=1,\dots,4$,
follow from $C_i$s after swapping $x, y$.
The latter are of the form $f = \frac{\partial f}{\partial y}= 0$.
For $\gcd$ computations in %
a
(single)
extension field, the package of~\cite{HoeMon:gcd:02} is used.
The optimal algorithms for computing and 
evaluating polynomial remainder sequences have {\em not} yet been implemented.

Our main results are reported in Table \ref{time_table}.
\func{g\_rur} is the solver of choice since it is faster
than \func{grid} and \func{m\_rur} in $17$ out of the $18$ instances.
However, this may not hold when the extension field is of high degree.
\func{g\_rur} yields solutions in $<1$ sec, apart from $C_5$.
For total degree $\leq$ 8, \func{g\_rur} requires $<0.4$ sec.
On average, \func{g\_rur} is $7$-$11$ times faster than \func{grid}, 
and about $38$ times faster than \func{m\_rur}. 
The inefficiency of \func{m\_rur} is due to the fact that it
solves sheared systems which are dense and of increased bitsize;
it also computes multiplicities.
Finally, \func{grid} reaches a stack limit with the default \maple
stack size ($8,192$ KB) when solving $C_5.$ Even when we
multiplied stack size by~10, \func{grid} did not terminate within $20$ min.

Whenever we refer to the speedup we imply the fraction of runtimes.
\func{g\_rur} can be up to $21.58$ times
faster than \func{grid} with an average speedup of around $7.27$ among the input
systems (excluding $C_5$).
With respect to \func{m\_rur}, \func{g\_rur} can be up to $275.74$ times
faster, with an average speedup of $38.01$.

Filtering has been used.
For this, two instances of isolating intervals are stored; 
one for filtering, another for exact computation.
Probably, the most significant filtering technique is interval arithmetic.
When computing the sign of a polynomial evaluated at a 
real algebraic number, the first attempt is via interval arithmetic,
applied along with~\cite{abbott-issac-2006}.
When this fails, and one wants 
to compare algebraic numbers or perform univariate \func{sign\_at},
then the $\gcd$ of two polynomials is computed. 

Filtering helps most with \func{m\_rur}, especially when we compute multiplicities.
With this solver, one more filter is used:
the intervals of candidate $x$-solutions are 
refined by~\cite{abbott-issac-2006} so as to help the interval arithmetic 
filters inside \func{find}.
If the above fails, we switch to exact computation via Sturm sequences,
using the initial endpoints since they have smaller bitsize.
In \func{grid}'s case, filtering provided an average speedup of $1.51$,
where $C_5$ has been excluded.
With \func{g\_rur}, we have on average a speedup of $1.08$.
This is expected since \func{g\_rur} relies heavily on $\gcd$'s and factoring.

\begin{figure}[t]
\begin{center}
\subfloat[Statistics on \func{slv}'s sub-algorithms.]{\label{biv:stat_matrix}
\resizebox{0.492\textwidth}{!}{
\begin{tabular}{|c||l|c|c|c|c|c|}\hline
 & phase of the & 
 \multicolumn{2}{|c|}{interval} & 
\multirow{2}{*}{\small median} & \multirow{2}{*}{mean} & \multicolumn{1}{|c|}{std} \\
 \cline{3-4}
 & algorithm & min     & max     &         &         & \multicolumn{1}{|c|}{dev} \\ \hline \hline
\multirow{4}{*}{\rotatebox{90}{\func{grid}}} 
& projections   & $00.00$ & $00.53$ & $00.04$ & $00.08$ & $00.13$ \\ \cline{2-7}
& univ. solving & $02.05$ & $99.75$ & $07.08$ & $26.77$ & $35.88$ \\ \cline{2-7}
& biv. solving  & $00.19$ & $97.93$ & $96.18$ & $73.03$ & $36.04$ \\ \cline{2-7}
& sorting       & $00.00$ & $01.13$ & $00.06$ & $00.12$ & $00.26$ \\
\hline\hline
\multirow{7}{*}{\rotatebox{90}{\func{mrur}}}
& projection    & $00.00$ & $00.75$ & $00.06$ & $00.14$ & $00.23$ \\ \cline{2-7}
& univ. solving & $00.18$ & $91.37$ & $15.55$ & $17.47$ & $20.79$ \\ \cline{2-7}
& StHa seq.     & $00.08$ & $38.23$ & $01.17$ & $05.80$ & $09.91$ \\ \cline{2-7}
& inter. points & $00.00$ & $03.23$ & $00.09$ & $00.32$ & $00.75$ \\ \cline{2-7}
& filter x-cand & $00.68$ & $72.84$ & $26.68$ & $23.81$ & $21.93$ \\ \cline{2-7}
& compute K     & $00.09$ & $34.37$ & $02.04$ & $07.06$ & $10.21$ \\ \cline{2-7}
& biv. solving  & $01.77$ & $98.32$ & $51.17$ & $45.41$ & $28.71$ \\
\hline\hline
\multirow{6}{*}{\rotatebox{90}{\func{grur}}}
& projections             & $00.02$ & $03.89$ & $00.23$ & $00.48$ & $00.88$ \\ \cline{2-7}
& univ. solving           & $07.99$ & $99.37$ & $39.83$ & $41.68$ & $25.52$ \\ \cline{2-7}
& inter. points           & $00.02$ & $03.81$ & $00.54$ & $01.11$ & $01.28$ \\ \cline{2-7}
& rational biv.           & $00.07$ & $57.07$ & $14.83$ & $15.89$ & $19.81$ \\ \cline{2-7}
& $\mathbb{R}_{alg}$ biv. & $00.00$ & $91.72$ & $65.30$ & $40.53$ & $36.89$ \\ \cline{2-7}
& sorting                 & $00.00$ & $01.50$ & $00.22$ & $00.32$ & $00.43$ \\ \cline{2-7}
\hline
\end{tabular}}}
\hspace{0.02\textwidth}
\subfloat[Performance of our solvers when computing multiplicities.]{\label{time_table_slv_multi}
\resizebox{0.398\textwidth}{!}{
\begin{tabular}{|c||c|c|c||r|r|r|} \hline
\multirow{2}{*}{\textbf{sys}} & 
\multicolumn{2}{|c|}{\multirow{1}{*}{\textbf{deg}}} 
 & $\mathbf{\RR_{alg}}$
 & 
\multicolumn{3}{|c|}{\textbf{Avg Time (msec)}} \\
\cline{2-3} \cline{5-7} %
 & \multicolumn{1}{|c|}{\textbf{$\mathbf{f}$}} &
\multicolumn{1}{|c|}{\textbf{$\mathbf{g}$}} &
\textbf{sols} & 
\multicolumn{1}{|c|}{{\func{grid}}} &
\multicolumn{1}{|c|}{{\func{m\_rur}}} & 
\multicolumn{1}{|c|}{{\func{g\_rur}}} \\ \hline
$R_1$ &  3 &  4 &  2 & $     6$ & $     9$ & $    6$ \\ \hline
$R_2$ &  3 &  1 &  1 & $    66$ & $    21$ & $   36$ \\ \hline
$R_3$ &  3 &  1 &  1 & $     1$ & $     2$ & $    1$ \\ \hline
$M_1$ &  3 &  3 &  4 & $   183$ & $    72$ & $   45$ \\ \hline
$M_2$ &  4 &  2 &  3 & $     4$ & $     5$ & $    4$ \\ \hline
$M_3$ &  6 &  3 &  5 & $ 4,871$ & $   782$ & $  393$ \\ \hline
$M_4$ &  9 & 10 &  2 & $   339$ & $   389$ & $  199$ \\ \hline
$D_1$ &  4 &  5 &  1 & $     6$ & $    12$ & $    6$ \\ \hline
$D_2$ &  2 &  2 &  4 & $   567$ & $   147$ & $  126$ \\ \hline
$C_1$ &  7 &  6 &  6 & $ 1,702$ & $   954$ & $  247$ \\ \hline
$C_2$ &  4 &  3 &  6 & $   400$ & $   234$ & $   99$ \\ \hline
$C_3$ &  8 &  7 & 13 & $   669$ & $ 1,815$ & $  152$  \\ \hline
$C_4$ &  8 &  7 & 17 & $ 7,492$ & $80,650$ & $  474$  \\ \hline
$C_5$ & 16 & 15 & 17 & $ > 20'$ & $60,832$ & $6,367$  \\ \hline
$W_1$ &  7 &  6 &  9 & $ 3,406$ & $ 2,115$ & $  393$ \\ \hline
$W_2$ &  4 &  3 &  5 & $ 1,008$ & $   283$ & $  193$  \\ \hline
$W_3$ &  8 &  7 & 13 & $ 1,769$ & $ 2,333$ & $  230$  \\ \hline
$W_4$ &  8 &  7 & 17 & $ 5,783$ & $77,207$ & $  709$  \\ \hline
\end{tabular}
}
}
\end{center}
\caption{Statistics}\label{statistics}
\end{figure}

Figure \ref{biv:stat_matrix} shows the runtime breakdown corresponding to
the various stages of each algorithm:
\texttt{Projections} shows the time for
computing resultants, \texttt{Univ.Solving} for solving them, and \texttt{Sorting} 
for sorting solutions. In \func{grid}'s and \func{m\_rur}'s case,
\texttt{biv.solving} corresponds to matching. In \func{g\_rur}'s case,
matching is divided between \texttt{rational biv}~and 
\texttt{$\mathbb{R}_{alg}$ biv}; the first refers to when at least one 
of the co-ordinates is rational.
\texttt{Inter.points} refers to computing intermediate points 
between resultant roots along the $y$-axis.
\texttt{StHa seq}~refers to computing the \StHa sequence. 
\texttt{Filter $x$-cand} shows the time for additional filtering. 
\texttt{Compute $K$} reflects the time for sub-algorithm \func{compute-k}.
In a nutshell, \func{grid} spends more than $73\%$ of its time in matching. 
Recall that this percent includes the application of filters and does not take
into account $C_5$.
\func{m\_rur} spends $45$-$50\%$ of its time in matching and $24$-$27\%$ in filtering.
\func{g\_rur} spends $55$-$80\%$ of its time in matching, including
$\gcd$ computations in an extension field.

In order to compute multiplicities, the initial systems were
sheared whenever it was necessary, based on the algorithm
presented in Section \ref{sec:determ-shear}.
Overall results are shown in Figure \ref{time_table_slv_multi}.
\func{grid}'s high complexity starts to become apparent.
Overall, \func{g\_rur} is fastest and
terminates within $\le 1$ sec.
It can be up to $15.81$ times
faster than \func{grid} with an average speedup of around $5.26.$
With respect to \func{m\_rur}, this time \func{g\_rur} can be up to $170.15$ times
faster, with an average speedup of around $18.77$ among all input polynomial systems. 
\func{m\_rur} can be up to $6.23$ times faster than \func{grid}, 
yielding an average speedup of $1.71.$
A detailed table in~\cite{det-rr-6116} 
gives us the runtime decomposition of each algorithm in its major subroutines.
Results are similar to Section \ref{sec:bivariate-solve},
except that \func{g\_rur} spends $68$-$80\%$ of its time in matching, including
$\gcd$'s. In absence of
excessive factoring \func{g\_rur} spends significantly more time in bivariate solving.

\subsection{Other software}\label{sec:biv-all}
\gbrs\ \footnote{\texttt{http://www-spaces.lip6.fr/index.html}} \cite{Rou:rur:99}
performs exact real solving
using Gr{\"o}bner bases and RUR, through its \maple interface;
additional tuning might offer $20$-$30\%$ efficiency increase.
Three \synaps\ \footnote{\texttt{http://www-sop.inria.fr/galaad/logiciels/synaps/}} 
solvers have been tested:
\func{sturm} is a naive implementation of \func{grid}~\cite{et-casc-2005};
\func{subdiv} implements~\cite{MouPav:TR-sbd:05},
using the Bernstein basis and \texttt{double} arithmetic.
It needs an initial box and $[-10,10]\times[-10, 10]$ was used.
\func{newmac}~\cite{MouTre00} is a general purpose solver
based on eigenvectors using \func{lapack}, which computes all complex solutions.

\maple implementations: \func{insulate} implements~\cite{WolSei:topology:05}
for computing the topology of real algebraic curves, and
\func{top} implements \cite{VegNec:topology:02}.
Both packages were kindly provided by their authors.
We tried to modify the packages so as to stop as soon as they compute
the real solutions of the corresponding bivariate system.
It was not easy to modify \func{insulate} and \func{top}
to deal with general systems, so they were not executed on the first data set.
\func{top} has a parameter that sets the initial precision (decimal digits).
There is no easy way for choosing a good value.
Hence, %
recorded its performance on initial values
of $60$ and $500$ digits. 

Experiments are not considered as competition,
but as a crucial step for improving existing software.
It is very difficult to compare different packages, 
since in most cases they are made for different needs.
In addition, accurate timing in \maple is hard, since
it is a general purpose package and a lot of overhead is added to its function calls.
Lastly, the amount of experiments is not very large
in order to draw safe conclusions. 

Overall performance results are shown on Table \ref{time_table}.
In cases where the solvers failed to find the correct number of
real solutions we indicate so with *.
Note that in \newmac 's column an additional step is required to
distinguish the real solutions among the complex ones.
In the sequel we refer only to \func{g\_rur}, since it is our faster implementation.

\begin{table}
\caption{Performance of our solvers and other tested software.
}\label{time_table}
\begin{center}
\resizebox{1.0\textwidth}{!}{
\begin{tabular}{|c|c|c|c|r|r|r|r|r|r|r||r|r|r|} \hline
\multirow{4}{*}{\rotatebox{90}{\textbf{system}}} & 
\multicolumn{2}{|c|}{\multirow{3}{*}{\textbf{deg}}} 
 & \multirow{4}{*}{\rotatebox{90}{\small\textbf{solutions}}}
 & 
\multicolumn{10}{|c|}{\textbf{Average Time (msecs)}} \\
\cline{5-14}
 & \multicolumn{2}{c|}{} 
 & %
  & \multicolumn{7}{|c||}{\textbf{BIVARIATE SOLVING}} &
\multicolumn{3}{|c|}{\textbf{TOPOLOGY}} \\
\cline{5-14}
 & \multicolumn{2}{c|}{} 
 & & %
 \multicolumn{3}{|c|}{\func{slv}} &
\multicolumn{1}{|c|}{\multirow{2}{*}{\small\gbrs}} & 
\multicolumn{3}{|c||}{\small\synaps} &
\multicolumn{1}{|c|}{\multirow{2}{*}{{\small\textsc{insulate}}}} &
\multicolumn{2}{|c|}{{\small\textsc{top}}} \\
\cline{2-3} \cline{5-7} \cline{9-11} \cline{13-14}
 & \multicolumn{1}{|c|}{\textbf{$\mathbf{f}$}} &
\multicolumn{1}{|c|}{\textbf{$\mathbf{g}$}} &
 & \multicolumn{1}{|c|}{\small{\func{grid}}} &
\multicolumn{1}{|c|}{\small{\func{m\_rur}}} & 
\multicolumn{1}{|c|}{\small{\func{g\_rur}}} & &
\multicolumn{1}{|c|}{{\small\func{sturm}}} & 
\multicolumn{1}{|c|}{{\small\func{subdiv}}} & 
\multicolumn{1}{|c||}{{\small \func{newmac}}} & 
 & 
\multicolumn{1}{|c|}{$60$} &
\multicolumn{1}{|c|}{$500$} \\ \hline
$R_1$ &  3 &  4 &  2 & $     5$ & $     9$ & $    5$ & $   26$ &  $     2$  & $       2$  & $    5$ &  $-$       & $-$      & $-$       \\ \hline
$R_2$ &  3 &  1 &  1 & $    66$ & $    21$ & $   36$ & $   24$ &  $     1$  & $       1$  & $    1$ &  $-$       & $-$      & $-$       \\ \hline
$R_3$ &  3 &  1 &  1 & $     1$ & $     2$ & $    1$ & $   22$ &  $     1$  & $       2$  & $    1$ &  $-$       & $-$      & $-$       \\ \hline
$M_1$ &  3 &  3 &  4 & $    87$ & $    72$ & $   10$ & $   25$ &  $     2$  & $       1$  & $    2$ &  $-$       & $-$      & $-$       \\ \hline
$M_2$ &  4 &  2 &  3 & $     4$ & $     5$ & $    4$ & $   24$ &  $     1$  & $     289$* & $    2$ &  $-$       & $-$      & $-$       \\ \hline
$M_3$ &  6 &  3 &  5 & $   803$ & $   782$ & $  110$ & $   30$ &  $   230$  & $   5,058$* & $    7$ &  $-$       & $-$      & $-$       \\ \hline
$M_4$ &  9 & 10 &  2 & $   218$ & $   389$ & $  210$ & $  158$ &  $    90$  & $       3$* & $  447$ &  $-$       & $-$      & $-$       \\ \hline
$D_1$ &  4 &  5 &  1 & $     6$ & $    12$ & $    6$ & $   28$ &  $     2$  & $       5$  & $    8$ &  $-$       & $-$      & $-$       \\ \hline
$D_2$ &  2 &  2 &  4 & $   667$ & $   147$ & $  128$ & $   26$ &  $    21$  & $       1$* & $    2$  &  $-$       & $-$      & $-$       \\ \hline
$C_1$ &  7 &  6 &  6 & $ 1,896$ & $   954$ & $  222$ & $   93$ &  $   479$  & $ 170,265$* & $   39$ &  $    524$ & $   409$ & $  1,367$ \\ \hline
$C_2$ &  4 &  3 &  6 & $   177$ & $   234$ & $   18$ & $   27$ &  $    12$  & $      23$* & $    4$ &  $     28$ & $    36$ & $    115$ \\ \hline
$C_3$ &  8 &  7 & 13 & $   580$ & $ 1,815$ & $   75$ & $   54$ &  $    23$  & $     214$* & $   25$ &  $    327$ & $   693$  & $  2,829$ \\ \hline
$C_4$ &  8 &  7 & 17 & $ 5,903$ & $80,650$ & $  370$ & $  138$ &  $ 3,495$  & $     217$* & $  190$* &  $  1,589$ & $ 1,624$  & $  6,435$ \\ \hline
$C_5$ & 16 & 15 & 17 & $ > 20'$ & $60,832$ & $3,877$ & $4,044$ &  $ > 20'$  & $   6,345$* & $  346$* &  $179,182$ & $91,993$  & $180,917$ \\ \hline
$W_1$ &  7 &  6 &  9 & $ 2,293$ & $ 2,115$ & $  247$ & $   92$ &  $   954$  & $  55,040$* & $   39$ &  $    517$ & $   419$ & $  1,350$ \\ \hline
$W_2$ &  4 &  3 &  5 & $   367$ & $   283$ & $  114$ & $   29$ &  $    20$  & $     224$* & $    3$ &  $     27$ & $    20$  & $     60$ \\ \hline
$W_3$ &  8 &  7 & 13 & $   518$ & $ 2,333$ & $   24$ & $   56$ &  $    32$  & $     285$* & $   25$ &  $    309$ & $   525$  & $  1,588$ \\ \hline
$W_4$ &  8 &  7 & 17 & $ 5,410$ & $77,207$ & $  280$ & $  148$ &  $ 4,086$  & $     280$* & $  207$* &  $  1,579$ & $ 1,458$  & $  4,830$ \\ \hline
\end{tabular}
}
\end{center}
\end{table}

\func{g\_rur} is faster than \gbrs in $8$ out of the $18$ instances, including
$C_5$. The speedup factor ranges from $0.2$ to $22$ with an average of $2.62$.

As for the three solvers from \func{synaps},
\func{g\_rur} is faster than \func{sturm} in $6$ out of the $18$ instances,
but it 
behaves worse usually in systems that
are solved in $<100$ msecs, because \func{sturm} is implemented in \cpp.
As the dimension of the polynomial systems increases, \func{g\_rur} outperforms
\func{sturm} and the latter's lack of fast algorithms for computing resultants
becomes more evident.
Overall, an average speedup of $2.2$ is achieved. %
Compared with \func{subdiv},
\func{g\_rur} is faster in half of the instances
and similarly to the previous case
is slower on systems solved in $< 400$ msecs.
On average, \func{g\_rur} achieves a speedup of $62.92$ which is the result of the
problematic behavior of \func{subdiv} in $C_1$ and $W_1$. If these systems
are omitted, then %
the 
speedup %
is
$8.93$ on average.
\func{newmac} is slower than 
\func{g\_rur} %
in $M_4, D_1$ and $W_3$ and
comparable in $R_1$ and $R_3$.
This time the average speedup %
of our implementation
is $0.53$. %
There are cases where \func{newmac} may not compute some of the real solutions.

Finally, concerning the other \maple software, \func{insulate} is slower than
\func{g\_rur} %
in all systems but $W_2$, %
thus our solver achieves 
an average speedup of $8.85$. 
Compared to \func{top} %
with $60$,  %
resp.~$500$, digits, 
\func{g\_rur} is faster in all
systems but $W_2$, yielding an average speedup of $7.79$,
resp.~$22.64$.
Moreover, as the dimension of the polynomial systems increases, it becomes more efficient.

\bigskip

\noindent\textbf{Acknowledgements.}
The authors thank the anonymous referees for their comments.
All authors acknowledge partial support by IST Programme of the EU as a Shared-cost
RTD (FET Open) Project under Contract No IST-006413-2 (ACS - Algorithms for Complex Shapes).
The third author is also partially supported by contract ANR-06-BLAN-0074 ''Decotes''.
\bibliographystyle{plainnat}
\bibliography{biv}

\begin{thebibliography}{38}
\providecommand{\natexlab}[1]{#1}
\providecommand{\url}[1]{\texttt{#1}}
\expandafter\ifx\csname urlstyle\endcsname\relax
  \providecommand{\doi}[1]{doi: #1}\else
  \providecommand{\doi}{doi: \begingroup \urlstyle{rm}\Url}\fi

\bibitem[Abbott(2006)]{abbott-issac-2006}
J.~Abbott.
\newblock Quadratic interval refinement for real roots.
\newblock In \emph{ISSAC 2006, poster presentation}, 2006.
\newblock http://www.dima.unige.it/~abbott/.

\bibitem[Arnon and McCallum(1988)]{ArnMcC:topology:88}
D.~Arnon and S.~McCallum.
\newblock A polynomial time algorithm for the topological type of a real
  algebraic curve.
\newblock \emph{J. Symbolic Computation}, 5:\penalty0 213--236, 1988.

\bibitem[Basu et~al.(2006)Basu, Pollack, and {M-F.}Roy]{BPR06}
S.~Basu, R.~Pollack, and {M-F.}Roy.
\newblock \emph{Algorithms in Real Algebraic Geometry}, volume~10 of
  \emph{Algorithms and Computation in Mathematics}.
\newblock Springer-Verlag, 2nd edition, 2006.

\bibitem[Brieskorn and Kn\"{o}rrer(1986)]{BrKn}
E.~Brieskorn and H.~Kn\"{o}rrer.
\newblock \emph{Plane Algebraic Curves}.
\newblock Birkh\"{a}user, Basel, 1986.

\bibitem[B\"urgisser et~al.(1997)B\"urgisser, Clausen, and
  Shokrollahi]{1997-buergisser}
P.~B\"urgisser, M.~Clausen, and M.A. Shokrollahi.
\newblock \emph{{Algebraic complexity theory}}, volume 315 of \emph{Grundlehren
  der Mathematischen Wissenschaften}.
\newblock Springer-Verlag, Berlin, 1997.

\bibitem[Canny(1988)]{Can88pspace}
J.~Canny.
\newblock Some algebraic and geometric computations in {PSPACE}.
\newblock In \emph{Proc.\ ACM Symp.\ Theory of Computing}, pages 460--467,
  1988.

\bibitem[Cazals et~al.(2006)Cazals, Faug{\`e}re, Pouget, and
  Rouillier]{fc-jcf-mp-fr-isrsps-06}
F.~Cazals, J.-C. Faug{\`e}re, M.~Pouget, and F.~Rouillier.
\newblock The implicit structure of ridges of a smooth parametric surface.
\newblock \emph{Comput. Aided Geom. Des.}, 23\penalty0 (7):\penalty0 582--598,
  2006.

\bibitem[Davenport(1988)]{Dav:TR:85}
J.~H. Davenport.
\newblock Cylindrical algebraic decomposition.
\newblock Technical Report 88--10, School of Mathematical Sciences, University
  of Bath, England, available at: http://www.bath.ac.uk/masjhd/, 1988.

\bibitem[Diochnos et~al.(2007{\natexlab{a}})Diochnos, Emiris, and
  Tsigaridas]{det-issac-2007}
D.~I. Diochnos, I.~Z. Emiris, and E.~P. Tsigaridas.
\newblock On the complexity of real solving bivariate systems.
\newblock In C.~W. Brown, editor, \emph{{Proc.~ACM Int.~Symp.~on Symbolic \&
  Algebraic Comput.}}, pages 127--134, Waterloo, Canada, 2007{\natexlab{a}}.

\bibitem[Diochnos et~al.(2007{\natexlab{b}})Diochnos, Emiris, and
  Tsigaridas]{det-rr-6116}
D.~I. Diochnos, I.~Z. Emiris, and E.~P. Tsigaridas.
\newblock On the complexity of real solving bivariate systems.
\newblock Research Report 6116, INRIA, 02 2007{\natexlab{b}}.
\newblock https://hal.inria.fr/inria-00129309.

\bibitem[Du et~al.(2005)Du, Sharma, and Yap]{Yap:SturmBound:05}
Z.~Du, V.~Sharma, and C.~K. Yap.
\newblock Amortized bound for root isolation via {S}turm sequences.
\newblock In D.~Wang and L.~Zhi, editors, \emph{Int. Workshop on {S}ymbolic
  {N}umeric {C}omputing}, pages 113--129, School of Science, Beihang
  University, Beijing, China, 2005. Birkhauser.

\bibitem[Eigenwillig et~al.(2006)Eigenwillig, Sharma, and Yap]{ESY:descartes}
A.~Eigenwillig, V.~Sharma, and C.~K. Yap.
\newblock {Almost tight recursion tree bounds for the Descartes method}.
\newblock In \emph{{Proc.~ACM Int.~Symp.~on Symbolic and Algebraic Comput.}},
  pages 71--78, New York, NY, USA, 2006. ACM Press.

\bibitem[Eigenwillig et~al.(2007)Eigenwillig, Kerber, and Wolpert]{EKW:curves}
A.~Eigenwillig, M.~Kerber, and N.~Wolpert.
\newblock Fast and exact geometric analysis of real algebraic plane curves.
\newblock In C.~W. Brown, editor, \emph{{Proc.~ACM Int.~Symp.~on Symbolic and
  Algebraic Comput.}}, pages 151--158, Waterloo, Canada, 2007.

\bibitem[Emiris and Tsigaridas(2005)]{et-casc-2005}
I.~Z. Emiris and E.~P. Tsigaridas.
\newblock Real solving of bivariate polynomial systems.
\newblock In V.~Ganzha and E.~Mayr, editors, \emph{Proc. Computer Algebra in
  Scientific Computing (CASC)}, volume 3718 of \emph{LNCS}, pages 150--161.
  Springer, 2005.

\bibitem[Emiris et~al.(2007)Emiris, Mourrain, and Tsigaridas]{emt-lncs-2006}
I.~Z. Emiris, B.~Mourrain, and E.~P. Tsigaridas.
\newblock {Real Algebraic Numbers: Complexity Analysis and Experimentation}.
\newblock In P.~Hertling, C.~Hoffmann, W.~Luther, and N.~Revol, editors,
  \emph{{Reliable Implementations of Real Number Algorithms: Theory and
  Practice}}, LNCS (to appear). Springer Verlag, 2007.
\newblock also available in www.inria.fr/rrrt/rr-5897.html.

\bibitem[Gonz{\'a}lez-Vega and El~Kahoui(1996)]{VegKah:curve2d:96}
L.~Gonz{\'a}lez-Vega and M.~El~Kahoui.
\newblock An improved upper complexity bound for the topology computation of a
  real algebraic plane curve.
\newblock \emph{J. Complexity}, 12\penalty0 (4):\penalty0 527--544, 1996.

\bibitem[Gonz{\'a}lez-Vega and Necula(2002)]{VegNec:topology:02}
L.~Gonz{\'a}lez-Vega and I.~Necula.
\newblock Efficient topology determination of implicitly defined algebraic
  plane curves.
\newblock \emph{Computer Aided Geometric Design}, 19\penalty0 (9):\penalty0
  719--743, December 2002.

\bibitem[Gonz{\'a}lez-Vega et~al.(1989)Gonz{\'a}lez-Vega, Lombardi, Recio, and
  Roy]{VegLomRecRoy:StHa:89}
L.~Gonz{\'a}lez-Vega, H.~Lombardi, T.~Recio, and M-F. Roy.
\newblock {Sturm-Habicht Sequence.}
\newblock In \emph{Proc.~ACM Int.~Symp.~on Symbolic \& Algebraic Comput.},
  pages 136--146, 1989.

\bibitem[Ko et~al.(2005)Ko, Sakkalis, and Patrikalakis]{KoSakPat:system2:05}
K.H. Ko, T.~Sakkalis, and N.M. Patrikalakis.
\newblock Resolution of multiple roots of nonlinear polynomial systems.
\newblock \emph{International J. of Shape Modeling}, 11\penalty0 (1):\penalty0
  121--147, 2005.

\bibitem[Lickteig and Roy(2001)]{LickteigRoy:FastCauchy:01}
T.~Lickteig and M-F. Roy.
\newblock {Sylvester-Habicht Sequences and Fast {C}auchy Index Computation.}
\newblock \emph{J. Symb. Comput.}, 31\penalty0 (3):\penalty0 315--341, 2001.

\bibitem[Mignotte and {\c{S}}tef\u{a}nescu(1999)]{MignotteStefanecu}
M.~Mignotte and D.~{\c{S}}tef\u{a}nescu.
\newblock \emph{Polynomials: An algorithmic approach}.
\newblock Springer, 1999.

\bibitem[Milne(1992)]{Miln92}
P.S. Milne.
\newblock On the solution of a set of polynomial equations.
\newblock In B.~Donald, D.~Kapur, and J.~Mundy, editors, \emph{Symbolic and
  Numerical Computation for Artificial Intelligence}, pages 89--102. Academic
  Press, 1992.

\bibitem[Mourrain and Pavone(2005)]{MouPav:TR-sbd:05}
B.~Mourrain and J-P. Pavone.
\newblock Subdivision methods for solving polynomial equations.
\newblock Technical Report RR-5658, INRIA Sophia-Antipolis, 2005.
\newblock URL \url{www.inria.fr/rrrt/rr-5658.html}.

\bibitem[Mourrain and Tr\a'{e}buchet(2000)]{MouTre00}
B.\ Mourrain and Ph. Tr\a'{e}buchet.
\newblock Solving projective complete intersection faster.
\newblock In \emph{Proc.\ ACM Intern.\ Symp.\ on Symbolic \& Algebraic
  Comput.}, pages 231--238. ACM Press, New York, 2000.

\bibitem[Mourrain et~al.(2006)Mourrain, Pion, Schmitt, T{\'e}court, Tsigaridas,
  and Wolpert]{mpsttw-aicg-06}
B.~Mourrain, S.~Pion, S.~Schmitt, J.-P. T{\'e}court, E.~Tsigaridas, and
  N.~Wolpert.
\newblock Algebraic issues in computational geometry.
\newblock In J.-D. Boissonnat and M.~Teillaud, editors, \emph{Effective
  Computational Geometry for Curves and Surfaces}, pages 117--155.
  Springer-Verlag, Mathematics and Visualization, 2006.

\bibitem[Pan(2002)]{Pan02jsc}
V.Y.\ Pan.
\newblock Univariate polynomials: Nearly optimal algorithms for numerical
  factorization and rootfinding.
\newblock \emph{J.~Symbolic Computation}, 33\penalty0 (5):\penalty0 701--733,
  2002.

\bibitem[Pedersen et~al.(1993)Pedersen, Roy, and Szpirglas]{PeRoSz93}
P.~Pedersen, M-F.\ Roy, and A.\ Szpirglas.
\newblock Counting real zeros in the multivariate case.
\newblock In F.\ Eyssette and A.\ Galligo, editors, \emph{Computational
  Algebraic Geometry}, volume 109 of \emph{Progress in Mathematics}, pages
  203--224. Birkh\"{a}user, Boston, 1993.
\newblock (Proc.\ MEGA '92, Nice).

\bibitem[Reischert(1997)]{Reischert:subresultant:97}
D.~Reischert.
\newblock Asymptotically fast computation of subresultants.
\newblock In \emph{ISSAC}, pages 233--240, 1997.

\bibitem[Renegar(1989)]{Ren89}
J.~Renegar.
\newblock On the worst-case arithmetic complexity of approximating zeros of
  systems of polynomials.
\newblock \emph{SIAM J.~Computing}, 18:\penalty0 350--370, 1989.

\bibitem[Rouillier(1999)]{Rou:rur:99}
F.~Rouillier.
\newblock Solving zero-dimensional systems through the rational univariate
  representation.
\newblock \emph{Journal of Applicable Algebra in Engineering, Communication and
  Computing}, 9\penalty0 (5):\penalty0 433--461, 1999.

\bibitem[Sakkalis(1989)]{Sakkalis:algnum:89}
T.~Sakkalis.
\newblock Signs of algebraic numbers.
\newblock \emph{Computers and Mathematics}, pages 131--134, 1989.

\bibitem[Sakkalis and Farouki(1990)]{sakkalis90:alg-curves}
T.~Sakkalis and R.~Farouki.
\newblock Singular points of algebraic curves.
\newblock \emph{J. Symb. Comput.}, 9\penalty0 (4):\penalty0 405--421, 1990.

\bibitem[van Hoeij and Monagan(2002)]{HoeMon:gcd:02}
M.~van Hoeij and M.~Monagan.
\newblock A modular {GCD} algorithm over number fields presented with multiple
  extensions.
\newblock In \emph{ISSAC}, pages 109--116, July 2002.

\bibitem[von~zur Gathen and Gerhard(2003)]{vzGGer}
J.~von~zur Gathen and J.~Gerhard.
\newblock \emph{Modern Computer Algebra}.
\newblock Cambridge Univ. Press, Cambridge, U.K., 2nd edition, 2003.

\bibitem[von~zur Gathen and L{\"u}cking(2003)]{GathenLucking:Subresultants:03}
J.~von~zur Gathen and T.~L{\"u}cking.
\newblock Subresultants revisited.
\newblock \emph{Theor. Comput. Sci.}, 1-3\penalty0 (297):\penalty0 199--239,
  2003.

\bibitem[Wolpert(2002)]{WolPhd}
{N.} Wolpert.
\newblock \emph{An Exact and Efficient Approach for Computing a Cell in an
  Arrangement of Quadrics}.
\newblock PhD thesis, MPI fuer Informatik, October 2002.

\bibitem[Wolpert and Seidel(2005)]{WolSei:topology:05}
N.~Wolpert and R.~Seidel.
\newblock On the {E}xact {C}omputation of the {T}opology of {R}eal {A}lgebraic
  {C}urves.
\newblock In \emph{Symposium of {C}omputational {G}eometry}. ACM, 2005.
\newblock to appear.

\bibitem[Yap(2000)]{Yap2000}
C.K.\ Yap.
\newblock \emph{Fundamental Problems of Algorithmic Algebra}.
\newblock Oxford University Press, New York, 2000.

\end{thebibliography}

\end{document}